\documentclass[11pt]{article}

\usepackage[letterpaper,margin=1in]{geometry}

\usepackage{amsmath,amssymb,amsthm}
\usepackage{mathtools}
\usepackage{braket}

\usepackage{authblk}

\usepackage{cite}
\usepackage[colorlinks=true,
            linkcolor=blue,
            citecolor=blue,
            urlcolor=blue]{hyperref}
\usepackage[nameinlink,capitalize]{cleveref}

\newtheorem{theorem}{Theorem}[section]
\newtheorem{lemma}[theorem]{Lemma}
\newtheorem{proposition}[theorem]{Proposition}
\newtheorem{corollary}[theorem]{Corollary}

\newtheorem{observation}[theorem]{Observation}

\theoremstyle{definition}
\newtheorem{definition}[theorem]{Definition}

\theoremstyle{remark}
\newtheorem{remark}[theorem]{Remark}

\newcommand{\BQP}{\mathsf{BQP}}
\newcommand{\NP}{\mathsf{NP}}
\newcommand{\QMA}{\mathsf{QMA}}
\newcommand{\QIMA}{\mathsf{QIMA}}
\newcommand{\DQC}{\mathsf{DQC}}

\newcommand{\poly}{\operatorname{poly}}
\newcommand{\negl}{\operatorname{negl}}
\newcommand{\Tr}{\operatorname{Tr}}

\newcommand{\ketbra}[1]{\ket{#1}\!\bra{#1}}

\title{The Commuting Local Hamiltonian Problem: \\Relativized Evidence Against $\BQP$-Hardness}
\author{
Itay Shalit
\qquad\qquad
Mark Zhandry
\\
\large Stanford University
}
\begin{document}

\date{}
\maketitle

% \begin{abstract}
% The commuting local-Hamiltonian problem is a special case of the local-Hamiltonian problem, in which the terms of the Hamiltonian are required to pairwise-commute.  A long line of work has shown that the problem lies in
% $\NP$ for certain families of commuting local Hamiltonians.
% Nevertheless, there has been no formal evidence against the possibility that
% the general commuting local-Hamiltonian problem is
% $\QMA$-complete.

% To capture the complexity of commuting local Hamiltonians, Bostanci
% and Hwang introduced the class \(\QIMA\), defined by quantum
% verifiers whose local gates commute, such that the commuting
% local Hamiltonian problem is \(\QIMA\)-complete.

% In this work, we introduce a classical oracle model \(\QIMA^{\mathcal O}\), and prove an oracle separation
% \[
% \BQP^{\mathcal O}
% \not\subseteq
% \QIMA^{\mathcal O}.
% \]

% BQP is the class of languages decidable by an efficient quantum distinguisher (hence, $\BQP\subseteq\QMA$). Our separation is based on the \(2\)-Forrelation problem, which asks whether
% two Boolean functions \(f\) and \(g\) have large correlation between \(f\) and
% the Fourier transform of \(g\). While \(2\)-Forrelation is known to be
% solvable in \(\BQP^{\mathcal O}\), we prove that it is not
% in \(\QIMA^{\mathcal O}\).
% \end{abstract}

\begin{abstract}
The commuting local-Hamiltonian (CLH) problem is a restriction of the local-Hamiltonian problem, in which the terms of the Hamiltonian are required to pairwise-commute. A long line of work has shown that the problem lies in
$\NP$ for certain families of commuting local Hamiltonians.
Nevertheless, there has been no formal evidence against the possibility that
the general CLH problem is
$\QMA$-complete.

CLH is complete for the complexity class $\QIMA$, defined through quantum verifiers whose
local gates commute. Therefore, CLH
is $\QMA$-complete if and only if
$\QIMA=\QMA$. In this work, we introduce a classical-oracle analogue
$\QIMA^{\mathcal O}$ and construct a classical oracle
$\mathcal O$ such that
\[
    \BQP^{\mathcal O}
    \not\subseteq
    \QIMA^{\mathcal O}.
\]

Since
$\BQP^{\mathcal O}\subseteq\QMA^{\mathcal O}$ for any classical oracle $\mathcal{O}$, this implies $\QIMA^{\mathcal O}
    \neq
    \QMA^{\mathcal O}$ for our constructed oracle.
Thus, our result provides relativized evidence against the possibility that the general CLH problem is
$\BQP$-hard, and hence also against the possibility that it is QMA-complete.  
\end{abstract}

\clearpage

\tableofcontents

\clearpage

\section{Introduction}
\label{sec:introduction}

The local Hamiltonian problem asks whether the ground-state energy of a
given local Hamiltonian is at most \(\alpha\) or at least \(\beta\). Kitaev showed that under
the promise \(\beta-\alpha=\Theta( \frac{1}{\operatorname{poly}(n)})\), this problem is complete for \(\QMA\), the class of languages
decidable by an efficient quantum verifier given a quantum witness.

A central question in quantum complexity theory is how the complexity of the
problem changes under structural restrictions on the Hamiltonian. One of the
main open cases is the commuting local Hamiltonian problem, in which all local
terms are guaranteed to commute. By imposing commutativity on the local terms in the Hamiltonian, one removes noncommuting observables—one of the central sources of uniquely quantum behavior. It therefore provides a natural setting for isolating which quantum computational phenomena rely on noncommutativity, and which persist even in its absence.

Commuting local Hamiltonians are also a useful testing ground for broader questions in quantum many-body theory and quantum complexity. In particular, they have played a role in attempts to understand the quantum PCP conjecture: commuting projector Hamiltonians arise naturally from quantum error-correcting codes, and the proof of the NLTS conjecture is based on quantum LDPC code Hamiltonians~\cite{Anshu_2023}; conversely, structural results for commuting Hamiltonians have exposed obstacles to adapting classical PCP techniques to the quantum setting~\cite{aharonov2013commutinglocalhamiltonianlocallyexpanding}. Commuting Hamiltonians also provide a tractable setting for studying the entanglement structure of ground states and the possibility of area laws~\cite{Mehta2016BehaviorOO}, as well as thermal properties and efficient preparation of Gibbs states~\cite{kastoryano2016quantumgibbssamplerscommuting}. Thus, understanding commuting local Hamiltonians may shed light not only on the sources of quantum computational hardness, but also on some of the central structural questions concerning low-energy and thermal states of many-body quantum systems.

Research into the complexity of the commuting local-Hamiltonian problem began with the work of Bravyi and Vyalyi, who showed that for 2-local commuting Hamiltonians the problem is in NP~\cite{bravyi2004commutativeversionklocalhamiltonian}. It was followed by a series of works which extended this result to other families of commuting local-Hamiltonians~\cite{aharonov2011complexitycommutinglocalhamiltonians, schuch2011complexitycommutinghamiltonians, aharonov2013commutinglocalhamiltonianlocallyexpanding, Aharonov_two_dimensional_CLH, irani2023commutinglocalhamiltonianproblem, bostanci2025commutinglocalhamiltonians2d}. On the other hand, Gosset et al.\ showed that the ground-space connectivity problem, which is QCMA-complete for general local-Hamiltonians, remains so for commuting local-Hamiltonians~\cite{Gosset_2017}. Finally, despite the fact that the commuting local-Hamiltonian problem is in NP for limited families of Hamiltonians, there has been no formal evidence against the possibility that in its most general form the problem is QMA-complete. 

Bostanci and Hwang defined the complexity class \(\QIMA=\bigcup_{k\in\mathbb{N}}\QIMA_k\), for which the commuting local-Hamiltonian problem is complete.\footnote{Bostanci and Hwang prove that the commuting local-projector
problem, a special case of the commuting local-Hamiltonian problem,
is QIMA-complete~\cite{bostanci2025commutinglocalhamiltonians2d}. Completeness of the general commuting
local-Hamiltonian problem is implicit in their work; we provide
an explicit proof in Appendix~\ref{app:QIMA-complete}. We use an equivalent reflection
normal form of their verifier definition, which allows general
local unitaries; the equivalence is
explained in Section~\ref{subsec:reflection_requirement}.}

\begin{definition}[Informal definition of $\mathrm{QIMA}_k$]\footnote{See Section \ref{sec:preliminaries} for a formal definition.}
A promise problem is in $\mathrm{QIMA}_k$ if it admits a QMA verification procedure of the following restricted form. On input $x$, the verifier efficiently constructs a polynomial number of mutually commuting $k$-local reflections
\[
R_1,\ldots,R_m .
\]

It then receives a polynomial-size quantum witness\footnote{The reflections $R_1,\ldots,R_m$ may depend on the instance but not on the witness.}. To test a reflection $R_i$, the verifier applies the Hadamard test: it prepares a fresh control qubit in the state $\ket{+}$, applies controlled-$R_i$ to the witness, and measures the control qubit in the $X$ basis. The verifier accepts if every such measurement returns the $+$ outcome.

The class is defined with perfect completeness. That is, on a
YES-instance some witness is accepted with probability $1$,
whereas on a NO-instance, every witness is accepted with probability $0$.
\end{definition}
\begin{remark}
Defining the class with completeness $1$ and soundness $0$ is without loss of generality, as any completeness and soundness parameters $0\leq\beta<\alpha\leq1$ define the same class as above. We can write $R_i=2P_i-I$ for an orthogonal projector $P_i$. The acceptance operator of the verifier is \(\prod_i P_i\), an orthogonal projector due to commutativity, so the maximum acceptance probability over all witnesses is either \(0\) or \(1\). Consequently, positive completeness and soundness strictly below \(1\) already imply perfect completeness and zero soundness.
\end{remark}

\begin{remark}
Throughout, the commuting local-Hamiltonian problem refers
to the family of problems obtained by fixing any constant
locality $k$. Statements of $\QIMA$-completeness are understood
in this familywise sense: each fixed-locality problem belongs
to $\QIMA$, and every promise problem $L\in\QIMA$ admits a
polynomial-time reduction to $k$-local CLH for some constant
$k$ depending on $L$, but not on the input length.
We do not assert that a single fixed locality is complete
for all of $\QIMA$.
\end{remark}

\subsection{Our Contributions}

We define an oracle analogue of \(\QIMA\), denoted by $\QIMA^{\cal{O}}$, and prove that 

\[\BQP^{\cal{O}} \not\subseteq \QIMA^{\cal{O}}.\]
Informally, our definition of $\QIMA^{\cal{O}}$ is as follows. 

\begin{definition}[$\QIMA^{\mathcal O}$]\label{def:QIMA-O}
A promise problem $L^{\mathcal O}$ is in
$\QIMA^{\mathcal O}$ if it admits the following kind of verification procedure.

On input $x$, the verifier first runs a uniform deterministic
polynomial-time classical preprocessing algorithm, which may make
adaptive classical queries to $O$. Based on the answers, it produces a
polynomial-size set of ``units'' 
\[
W_1^{\mathcal O},\ldots,W_m^{\mathcal O}
\]
each unit being a unitary circuit acting on a polynomial-sized quantum witness. A unit potentially includes quantum oracle queries. The resulting units are required to satisfy the following conditions. 

\begin{enumerate}
    \item The overall number of oracle queries across the different units is polynomial.
    \item Every unit \(W_j^{\cal{O}}\) that contains at least one quantum query to \({\cal{O}}\) is
    an exact reflection:
    \[
    (W_j^{\cal{O}})^\dagger=W_j^{\cal{O}},
    \qquad
    (W_j^{\cal{O}})^2=I.
    \]
    Oracle-free units may be arbitrary unitaries.

    \item On every promised oracle instance, the units commute
    pairwise:
    \[
    [W_i^{\cal{O}},W_j^{\cal{O}}]=0
    \qquad
    \text{for all }i,j.
    \]
\end{enumerate}

The verifier tests each unit $W_j^{\mathcal O}$ by preparing a fresh control
qubit in the state $\ket{+}$, applying controlled-$W_j^{\mathcal O}$, and
measuring the control qubit in the $X$ basis. It accepts if every such
measurement returns the $+$ outcome. This mechanism is often referred to in the literature as the \textit{Hadamard test}.

The verifier is required to have an inverse-polynomial completeness--soundness gap: on a
YES-instance, some witness is accepted with probability at least $c(n)$,
whereas on a NO-instance, every witness is accepted with probability at most
$s(n)$, with
\[
c(n)-s(n)\geq \frac{1}{\operatorname{poly}(n)}.
\]
\end{definition}

\begin{remark}
Concretely, a unit $W_j^{\mathcal O}$ may consist of an arbitrary
polynomial-length sequence of unitary operations interleaved with quantum
queries to $\mathcal O$. Different operations may act on different subsets
of the witness register.
If the unit contains an oracle query, the overall unit must be a reflection;
oracle-free units may be arbitrary unitaries. The commutation requirement
applies to the completed units $W_1^{\mathcal O},\ldots,W_m^{\mathcal O}$, so the constituent operations of a single unit need not commute. We allow controlled oracle queries: the oracle acts on the query
register if all designated control qubits are in state $|1\rangle$,
and acts as the identity otherwise.
\end{remark}

To prove the oracle separation, we use the Forrelation problem, which was defined by Aaronson~\cite{aaronson2009bqppolynomialhierarchy}. The problem is, given oracle access to two Boolean functions $f,g$, determining whether $f$ is highly-correlated with $\hat{g}$ or not, where $\hat{g}$ is the result of applying discrete Fourier-Transform over $\mathbb{F}_2$ to $g$. Following is a formal definition of the problem.

\begin{definition}[Forrelation] \label{def:forrelation}

    Let \(N=2^n\), and let \(H_N\) denote the normalized Hadamard
matrix,
\[
(H_N)_{x,y}
=
\frac{(-1)^{x\cdot y}}{\sqrt{N}}.
\]
For \(f,g:\{0,1\}^n\to\{-1,+1\}\), define their Forrelation by
\[
\Phi(f,g)
:=
\frac{1}{N}f^{\mathsf T}H_Ng.
\]
For fixed constants \(0<\beta<\alpha\leq 1\), the Forrelation promise
problem is
\[
\begin{aligned}
\textsc{Yes}:&\qquad \Phi(f,g)\geq\alpha,\\
\textsc{No}:&\qquad |\Phi(f,g)|\leq\beta.
\end{aligned}
\]
\end{definition} 

Variants of Forrelation were used in multiple works for distinguishing the power of quantum computational models from that of classical ones \cite{RazBqpPH2018, girish2025forrelationextremallyhard}. Forrelation can be decided by a polynomial-time quantum algorithm making constantly many oracle queries. In contrast, we prove that every oracle verifier satisfying the conditions of Definition~\ref{def:QIMA-O} and deciding Forrelation for all promised pairs \((f,g)\) requires exponentially many queries.

\begin{theorem}[Informal] \label{thm:informal_lower_bound}
Any oracle verifier satisfying Definition~\ref{def:QIMA-O}
and deciding Forrelation for all promised pairs $(f,g)$ must satisfy
\[
    C(n)+T(n)\ge \beta 2^n-O(1)
\]
for all sufficiently large even $n$. Here $C(n)$ counts classical
preprocessing queries, $T(n)$ counts all quantum oracle queries,
and $\beta$ is the Forrelation promise parameter.
\end{theorem}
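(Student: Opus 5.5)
The plan is an adversary argument. Starting from a YES-instance, I would build a NO-instance that the verifier cannot tell apart, and show that avoiding this forces $C(n)+T(n)\ge\beta 2^n-O(1)$.

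\textbf{Step 1: a convenient YES-instance.} For even $n$, I would fix $g$ to be a bent function, for instance the inner-product function $g(x,y)=(-1)^{x\cdot y}$ on $\{0,1\}^{n/2}\times\{0,1\}^{n/2}$. Then every entry of $H_Ng$ equals $\pm 1$, so $H_Ng$ is itself a $\pm1$ vector. I would take $f:=H_Ng$, which gives $\Phi(f,g)=1\ge\alpha$. Flipping $f$ on a set $D$ changes $\Phi$ by exactly $2|D|/N$, because $f$ is aligned with the $\pm1$ vector $H_Ng$. Hence flipping any $|D|\approx(1-\beta)N/2$ coordinates of $f$ lands in the NO region. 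More generally, moving along a path of flips passes through every value of $\Phi$ in steps of $2/N$. The evenness of $n$, and the constant $O(1)$ in the bound, both come from this choice.

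\textbf{Step 2: fix the classical transcript.} The deterministic preprocessing on $(f,g)$ queries at most $C$ points. I would only ever flip unqueried points of $f$. The preprocessing then produces the same units as circuits, and only the oracle inside them changes.

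\textbf{Step 3: bound the quantum influence of flips.} Summing query magnitudes over all $T$ quantum queries in all units, run on the accepted witness $\ket\psi$ (which lies in the joint $+1$ eigenspace of all units), gives total weight at most $T$ spread over the $N$ inputs. So the set of unqueried points with small query weight has size at least $N-C-T/\epsilon$. I would choose $D$ inside this set, of the size required to drive $\Phi$ from $1$ down to $\beta$. The aim is to show that, on the modified oracle $f'$, the same witness is still accepted with probability close to $1$. For ordinary $\QMA$ verifiers this fails, since hybrid bounds give only $\sqrt{\text{weight}}$ control and the witness may be re-optimized. Here I would use the rigidity of the model instead. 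Every querying unit is an exact reflection and pairwise commuting on both $f$ and $f'$, so the acceptance operator $\prod_j P_j$ is a projector in both cases. On the line of intermediate oracles between $f$ and $f'$, obtained by flipping one point at a time, each intermediate oracle is again promised or near-promised. So the maximal acceptance probability along the path is always $0$ or $1$, up to the promise gap.

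\textbf{Step 4: continuity contradicts the dichotomy.} A single flip moves every unit by an amount controlled by that point's query weight, and hence moves the accepting projector only slightly. By Step 1, the path from $f$ to $f'$ must cross from YES to NO, so at some single step the maximal acceptance probability must jump from $\ge c$ to $\le s$. If every point of $D$ carried query weight below the gap $c-s$, no such jump is possible. Hence the path must contain $\Omega(\beta N)$ points that are classically queried or quantum-heavy. Counting these gives $C+T\ge\beta 2^n-O(1)$.

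\textbf{Main obstacle.} The hard part is Step 3/4: turning per-point query weight into a bound on how much the top eigenvalue of $\prod_j P_j$ can change in one flip, uniformly over all witnesses rather than just the fixed $\ket\psi$. I would first try to exploit the exact-reflection property. A controlled reflection whose oracle call is flipped at one point should differ from the original by an operator supported on the amplitude that queries that point. Commutation would then let me treat the units as simultaneously diagonal, so the per-unit perturbations can be tracked eigenspace by eigenspace. Intermediate oracles that violate the promise, and hence might break commutation, would need separate handling, probably by choosing the flip order so that every intermediate $\Phi$ remains in a promised region as long as possible.
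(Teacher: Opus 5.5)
Your Steps 1 and 2 coincide with the paper: a bent $g_\star$ with $h=H_Ng_\star$, and freezing the preprocessing leaf by only flipping unqueried positions of $f$. Steps 3--4, however, have a genuine gap, and it is structural rather than technical.

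First, the path cannot stay promised. By Lemma~\ref{lemma:forrelation-under-hamming}, $\Phi(f,g_\star)=1-2k/N$ depends only on the Hamming distance $k$ to $h$. So \emph{every} flip order crosses roughly $(\alpha-\beta)N/2$ consecutive oracles with $\beta<\Phi<\alpha$. These are not ``near-promised'': Definition~\ref{def:QIMA-O} imposes no commutation there, so no $0/1$ dichotomy is available. The optimal acceptance probability may decline gradually over these steps, and no single flip has to carry a drop from $\ge c$ to $\le s$.

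Second, even at promised oracles the dichotomy concerns whether \emph{some} witness is accepted, whereas your weights are measured on the single witness $\ket{\psi}$ accepted at $f$. A flip changes an oracle call by an operator of norm $2$. The states spanning the accepting subspace at step $j$ may concentrate their query weight on exactly the point flipped at step $j$, so the budget ``total weight $\le T$'' does not carry along the path.

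Once the dichotomy is gone, what remains is an ordinary hybrid argument, which cannot succeed here. Over $|D|=\Theta(N)$ flips it bounds the change of the state only by $O(T\sqrt{|D|/N})=O(T)$. Indeed, the two-query circuit of Appendix~\ref{sec:forrelation_in_bqp} puts weight $1/N$ on each point yet separates the two endpoints. Finally, oracle-free units may be arbitrary unitaries, so even on promised inputs the acceptance operator is $P_fB$ rather than a projector. The maximal acceptance probability therefore need not be $0$ or $1$.

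The missing idea is to exploit the rigidity algebraically instead of by continuity. The paper fixes $B=\prod_j E_{U_j}$, which is constant on the leaf, and takes the spectral projector $Q=\mathbf{1}_{(\theta,1]}(B)$ with $\theta=(c+s)/2$. It then sets $p(f)=\operatorname{Tr}(QP_f)$ with $P_f=\prod_i(I+R_i^f)/2$. On promised inputs $P_f$ commutes with $Q$, so $p(f)=\dim(\operatorname{im}P_f\cap\operatorname{im}Q)$. This gives $p\ge1$ on YES and $p=0$ \emph{exactly} on NO. Meanwhile the entries of $P_f$ are polynomials of degree at most $T$ in the values of $f$. So $p$ is one low-degree polynomial whose values at unpromised oracles are simply never used.

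Minsky--Papert symmetrization over Hamming layers of the $N-r$ free coordinates then gives a univariate $q$ with $\deg q\le T$ and $q(0)\ge1$. It has a root at every integer $k\in[(1-\beta)N/2,(1+\beta)N/2]$ with $k\le N-r$. Counting these $\beta N-r-O(1)$ roots yields the bound. The exactness of the zeros (an integer dimension rather than an approximate probability) is what makes a bound linear in $N$ possible. The width of this NO window is where the constant $\beta$ comes from.
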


We then use diagonalization to construct a single classical oracle \(\mathcal O\), whose slices encode Forrelation instances, such that
\[
\mathrm{BQP}^{\mathcal O}\not\subseteq\mathrm{QIMA}^{\mathcal O}.
\]

Throughout, access to $(f,g)$ is defined as follows. 
\begin{definition}[Oracle access]\label{def:oracle-access}
For $f,g:\{0,1\}^n\to\{-1,+1\}$, define the combined phase oracle by
\[
O_{f,g}\ket{b,x}
=
\begin{cases}
f(x)\ket{0,x}, & b=0,\\
g(x)\ket{1,x}, & b=1.
\end{cases}
\]
Quantum queries apply $O_{f,g}$, possibly conditioned on designated
control qubits all being in state $\ket{1}$.
The query register consists of an \textbf{oracle-selector qubit} \(b\), which selects between \(f\) and \(g\), and an \(n\)-qubit \textbf{address register} \(x\), which specifies the input to the selected function. We refer to the latter \(n\) qubits as the \textbf{address qubits}. A classical query returns either $f(x)$ or $g(x)$ at a chosen input $x$. Note that quantum queries to controlled $O_{f,g}$ gates allow for evaluating $f$ and $g$.
\end{definition}

Throughout the paper, $\QIMA^{O}$ refers to the class defined in
Definition~\ref{def:QIMA-O}. To address alternative choices of
relativization, in Section~\ref{sec:separation} we extend our separation to two natural
relaxations of this definition. In each case, we show in Section~\ref{section:comp_model} that a further weakening of the corresponding
restriction collapses the resulting class to $\QMA^{\mathcal{O}}$,
providing evidence that such a broader relaxation is too permissive
as an oracle analogue of $\QIMA$. Below, we explain the choices
in Definition~\ref{def:QIMA-O}, the extensions covered by our
lower bounds, and the relaxations that recover the full power
of $\QMA^{\mathcal{O}}$.

\subsubsection{The Reflection Requirement}\label{subsec:reflection_requirement}

Note that, for unrestricted circuit classes such as $\mathrm{BQP}$ and
$\mathrm{QMA}$, relativization has a fairly canonical meaning:
one allows the computation to make oracle queries. For $\QIMA$,
however, the verifier must consist of mutually commuting
units. Allowing oracle queries inside these units makes them non-local, and moreover results in equivalent formulations of the unrelativized class yielding \emph{in}equivalent oracle models. 

As a consequence, we make non-trivial choices in our definition of relativized $\QIMA$, which we now discuss. On one hand, our notion is very permissive, in that we allow our units to be highly non-local and make an arbitrary polynomial number of oracle queries. This permissiveness makes our oracle separation stronger. On the other hand, we require that units making oracle queries are reflections, which is a non-trivial restriction. We now discuss these choices.

Our formulation of the unrelativized class $\QIMA$ requires its
local units to be reflections. Under perfect completeness, we observe that reflecting units is without loss of generality. To see this, let $U_1,\ldots,U_m$
be arbitrary commuting local unitaries, and let $P_i$ project onto
the $+1$ eigenspace of $U_i$. Locality and the explicit descriptions
of the units allow us to efficiently construct the local projectors $P_i$.
Moreover, these projectors commute, and
\[
    P_i \preceq E(U_i),
    \qquad
    E(U_i) := \frac{2I+U_i+U_i^\dagger}{4},
\]
where $E(U_i)$ is the acceptance operator of the Hadamard test. That is, the probability of the Hadamard test on $U_i$ accepting an input state $\ket{\psi}$, is $\bra{\psi}E(U_i)\ket{\psi}$.

Replacing each $U_i$ by the reflection $2P_i-I$ therefore preserves
perfect completeness and can only improve soundness. This gives
a projective normal form for $\QIMA$. This projective normal form is then the basis for our relativization.

Now we observe that equivalent formulations of $\QIMA$ yield different classes once relativized. Indeed, for an oracle-containing unit given implicitly by a circuit, the
projector onto its $+1$ eigenspace need not have an efficient
implementation with comparable query complexity. Thus, relativizing the non-reflection version of $\QIMA$, despite being equivalent to the reflection version, yields a distinct class. Our Definition~\ref{def:QIMA-O} makes the explicit choice of having every oracle-containing unit being a reflection $R_i^{\mathcal{O}}$, which tests the projective constraint
\[
    \Pi_i^{\mathcal{O}} = \frac{I+R_i^{\mathcal{O}}}{2}.
\]
Oracle-free
units may still be arbitrary unitaries.

It is natural to consider different relativizations of $\QIMA$. For example, what if we directly relative the non-reflection version of $\QIMA$, allowing arbitrary commuting units, which may or may not be reflections. We prove in Section \ref{subsec:inverse-poly-reflection} that modifying Definition~\ref{def:QIMA-O} to allow for general unitary oracle-querying units results in a class equivalent to $\QMA^{\mathcal{O}}$. Moreover, this equivalence holds even if the units are restricted to being ``close'' to reflections, where ``close'' means inverse-polynomial distance in operator norm. In other words, unlike in the plain model, requiring the verifier units to be reflections is not without loss of generality in the oracle model. 

\begin{theorem}[Informal] \label{thm:informal_deviating_from_reflection}
For every efficiently computable inverse-polynomial tolerance
$\delta(n)>0$, every problem in $\QMA^{\mathcal{O}}$ admits an
efficient verifier consisting of a single Hadamard test of a
unitary $W^{\mathcal{O}}$ satisfying
\[
    \bigl\|W^{\mathcal{O}}-R\bigr\|_{\mathrm{op}}
    \le \delta(n),
\]
where $R$ is an efficiently implementable oracle-free reflection.
The verifier has an inverse-polynomial completeness--soundness
gap and no trusted workspace beyond the Hadamard-test control.
\end{theorem}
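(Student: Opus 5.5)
The plan is to start from an arbitrary $\QMA^{\mathcal O}$ verifier and embed its whole computation in a single unitary that is a small perturbation of an oracle-free reflection. First, amplify the gap and write the verifier as a unitary $V^{\mathcal O}$ on witness register $w$ and ancilla register $a$; the ancillas are supposed to start in $\ket{0}$, and the output qubit is measured. Because the Hadamard test gives us no trusted workspace, the ancillas must live inside the witness register. We therefore have to guard against a malicious prover who supplies ancillas not in $\ket{0}$. For this we use the oracle-free projector $Q=\ketbra{0}_a\otimes I$ together with the acceptance projector $\Pi^{\mathcal O}=(V^{\mathcal O})^\dagger(\ketbra{1}_{\mathrm{out}}\otimes I)V^{\mathcal O}$. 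A standard Marriott--Watrous-style statement, viewed through Jordan's lemma for the pair $(Q,\Pi^{\mathcal O})$, says the following: the largest principal angle cosine $\cos^2\theta_{\max}$ equals the optimal acceptance probability $p^*$, which is at least $c$ on YES instances and at most $s$ on NO instances.

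Next, define $R=2Q-I$, which is oracle-free and efficiently implementable, and set
\[
W^{\mathcal O}=R\,\exp\bigl(i\delta'\,(2\Pi^{\mathcal O}-I)\bigr),
\]
with $\delta'=\Theta(\delta)$. In that case $\|W^{\mathcal O}-R\|_{\mathrm{op}}\le\delta'\le\delta$. The exponential is implementable with two oracle-calling applications of $V^{\mathcal O}$ and a phase rotation on the output qubit, since $\exp(i\delta'(2\Pi-I))=(V)^\dagger\,\mathrm{diag}(e^{-i\delta'},e^{i\delta'})_{\mathrm{out}}\,V$. A controlled version is obtained by controlling only the rotation and the reflection. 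For the acceptance probability we compute $\bra{\psi}E(W)\ket{\psi}=\tfrac12+\tfrac12\mathrm{Re}\bra{\psi}W\ket{\psi}$. We then decompose into the Jordan blocks of $(Q,\Pi)$. On each two-dimensional block with angle $\theta$, $W$ is the product of the reflection $R$ and a small rotation about the $\Pi$ axis. The quantity $\max_\psi \mathrm{Re}\bra{\psi}W\ket{\psi}$ is a continuous function of $\theta$ that we can compute in closed form. Expanding to first order in $\delta'$, it behaves like $1-\delta'^2\,g(\theta)$, where $g$ is monotone in $\cos^2\theta$. Blocks lying entirely inside $Q$ contribute acceptance exactly $1$ only if they also lie inside $\Pi$, which is the case $\theta=0$. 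Hence the optimal acceptance is $1-\Theta(\delta'^2)\,h(p^*)$ for a monotone $h$. This yields a completeness--soundness gap of order $\delta'^2(c-s)$, which is inverse polynomial.

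The main obstacle is the block computation in the last step. We must verify that the acceptance probability separates according to $\cos^2\theta$ with a gap that is polynomial in $\delta'$ and $c-s$. We must also handle the degenerate one-dimensional blocks; in particular, a witness in $Q^\perp$ that gets eigenvalue $-1$ under $R$ is simply rejected, which is good. If first-order behavior does not separate cleanly, the fallback is to replace the single rotation by $R\,\exp(i\delta' S)$ for some $S$ built from both $\Pi$ and $Q$, so that $W$'s eigenphases near $+1$ encode $\theta$ directly. Another fallback is to phase-estimate-amplify $p^*$ toward $\{0,1\}$ beforehand, so that on YES instances some block has $\theta\approx0$ (acceptance $\approx1$) while on NO instances all blocks have $\theta\approx\pi/2$, giving acceptance $\le 1-\Omega(\delta'^2)$.
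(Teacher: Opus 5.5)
There is a genuine gap. The overall plan is the paper's: a single unit $R\cdot(\text{small oracle-dependent phase})$ with $R=2Q-I$, analyzed block by block via Jordan's lemma. The specific phase you chose, however, breaks the key monotonicity step. The Hadamard test sees only the cosine of an eigenphase. Your factor $\exp\bigl(i\delta'(2\Pi^{\mathcal O}-I)\bigr)$ puts phase $+\delta'$ on accepting directions and $-\delta'$ on rejecting ones, so it treats them symmetrically.

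Concretely, take a properly initialized witness accepted with certainty, i.e.\ a one-dimensional block with $Q=\Pi^{\mathcal O}=1$. It sees $W=e^{i\delta'}$. A witness rejected with certainty ($Q=1$, $\Pi^{\mathcal O}=0$) sees $W=e^{-i\delta'}$. Both pass with the same probability $\cos^2(\delta'/2)$.

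On a two-dimensional block with parameter $\lambda$ (your $\cos^2\theta$), $W$ has determinant $-1$ and trace $2i(2\lambda-1)\sin\delta'$. So its eigenvalues are $e^{i\alpha}$ and $-e^{-i\alpha}$ with $\sin\alpha=(2\lambda-1)\sin\delta'$, and the best acceptance on that block is $\tfrac12\bigl(1+\sqrt{1-(2\lambda-1)^2\sin^2\delta'}\bigr)$. Thus your $g$ is proportional to $(2\lambda-1)^2$. It is symmetric under $\lambda\mapsto1-\lambda$ and gives acceptance exactly $1$ at $\lambda=1/2$, so it is not monotone in $\cos^2\theta$.

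The optimum is a maximum over all eigenvalues of the original acceptance operator, not just $p^*$. Hence a NO instance with a witness accepted with probability $1/3$ beats a perfectly complete YES instance whose acceptance operator has spectrum in $\{0,1\}$. Your second fallback, amplifying toward $\{0,1\}$, fails for the same reason: under this $W$, blocks with $\lambda\approx0$ and $\lambda\approx1$ are indistinguishable.

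The missing idea is to make the residual eigenphase vanish exactly on properly initialized accepting directions and grow as $\lambda$ decreases. The paper does this with a compensating phase on the initialization projector. In your notation the unit is $W=R\,e^{-i\delta'\Pi^{\mathcal O}}e^{i\delta' Q}$. On each two-dimensional block the determinant is still $-1$, but the trace becomes $2i(1-\lambda)\sin\delta'$, so $\sin\alpha_\lambda=(1-\lambda)\sin\delta'$. The optimal acceptance $\tfrac12\bigl(1+\sqrt{1-(1-\omega)^2\sin^2\delta'}\bigr)$ is then increasing in the original optimum $\omega$. Blocks inside $\ker Q$ pass with probability at most $\sin^2(\delta'/2)$, which is below the value at $\omega=0$. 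For $\delta'\le1/4$ the gap between $\omega=2/3$ and $\omega=1/3$ is at least $\delta'^2/72$, while $\|W-R\|_{\mathrm{op}}\le2\delta'$.

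A more minimal repair of your own construction also works: put the phase only on the rejecting subspace, $W=R\,e^{i\delta'(I-\Pi^{\mathcal O})}$. This gives a residual phase $\delta'/2+\arcsin\bigl((1-2\lambda)\sin(\delta'/2)\bigr)$, which decreases monotonically from $\delta'$ to $0$. As written, though, the unit $R\exp\bigl(i\delta'(2\Pi^{\mathcal O}-I)\bigr)$ does not yield a sound verifier.
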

Theorem~\ref{thm:informal_deviating_from_reflection} demonstrates that restricting to reflecting units is necessary in order for the class to be a non-trivial restriction of $\QMA^{\mathcal{O}}$, provided the oracle-querying units are otherwise un-restricted.

\medskip

To give further evidence for a relativized separation between $\QIMA$ and $\BQP$, we also consider other potential variations of $\QIMA^{\mathcal{O}}$ where oracle-querying units are not required to be reflections. Despite Theorem \ref{thm:informal_deviating_from_reflection}, our lower bound does extend to
certain deviations from exact reflections. In
Section~\ref{sec:relax_reflection}, consider the setting which allows each oracle-containing
unit to be negligibly close in operator norm to an exact companion
reflection. We observe that this relaxation does not change our complexity class $\QIMA^{\mathcal{O}}$.

\begin{observation}[Informal]
Consider a relaxation of Definition~\ref{def:QIMA-O} with the
following changes. Each oracle-containing unit need only be
negligibly close in operator norm to an exact companion reflection,
uniformly over promised instances. The same preprocessing must
output these companions without additional classical queries.
Each companion must use no more quantum queries than the
corresponding unit, and the companions must commute with one
another and with all oracle-free units on promised instances. Then, for every fixed classical oracle $O$, this relaxation
defines the same class $\mathsf{QIMA}^O$ as Definition~1.4.
\end{observation}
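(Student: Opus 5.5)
The plan is to prove the two inclusions separately. Write $\QIMA^{O}_{\mathrm{relax}}$ for the relaxed class.

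\textbf{Easy direction.} Every verifier satisfying Definition~\ref{def:QIMA-O} is also a relaxed verifier. Each oracle-containing unit is its own exact companion, at distance $0$ in operator norm. The companion uses exactly as many queries as the unit, and commutation is inherited. Hence $\QIMA^{O}\subseteq\QIMA^{O}_{\mathrm{relax}}$.

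\textbf{Substitution of companions.} For the reverse inclusion, start from a relaxed verifier with units $W_1^{O},\ldots,W_m^{O}$. For each oracle-containing unit, let $R_j^{O}$ be its companion, which preprocessing outputs at no extra classical cost. Form the new verifier that runs the same preprocessing, keeps every oracle-free unit, and replaces each oracle-containing $W_j^{O}$ by $R_j^{O}$. We then check the conditions of Definition~\ref{def:QIMA-O} one by one.
\begin{itemize}
\item Each $R_j^{O}$ is an exact reflection by assumption.
\item The total quantum query count does not increase, since each companion uses at most as many queries as its unit.
\item Classical preprocessing is unchanged.
\item Pairwise commutation holds on promised instances: companions commute with one another and with the oracle-free units by hypothesis, and oracle-free units commute with each other because they are untouched units of the original verifier.
\end{itemize}

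\textbf{Acceptance probability.} The remaining step, and the only quantitative one, is to show that the gap survives. The Hadamard-test acceptance operator of a unit $U$ is $E(U)=(2I+U+U^\dagger)/4$. This gives $\|E(W_j)-E(R_j)\|_{\mathrm{op}}\le \tfrac12\|W_j-R_j\|_{\mathrm{op}}\le \negl(n)$. It is cleaner to compare the whole sequential procedure, viewed as a unitary on witness plus the $m$ control qubits followed by a final projective measurement. The acceptance probability for a witness $\rho$ is $\Tr[\Pi_{\mathrm{acc}}\,V\rho V^\dagger]$, where $V=\prod_j \Lambda(W_j)$ is the product of the controlled units together with Hadamards on the controls. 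Replacing each $W_j$ by $R_j$ changes $V$ by at most $\sum_j\|W_j-R_j\|_{\mathrm{op}}$ via the standard hybrid/telescoping argument. Since $m=\poly(n)$, this is $\poly(n)\cdot\negl(n)=\negl(n)$, uniformly over promised instances and over all witnesses. The acceptance probability therefore changes by at most $2\cdot\negl(n)$. The new verifier then has completeness $c-\negl$ and soundness $s+\negl$, so its gap is still at least $1/\poly(n)-\negl(n)\ge 1/(2\poly(n))$ for large $n$. Finitely many small $n$ are handled by hardcoding.

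\textbf{Main obstacle.} The subtle point is uniformity. The negligible bound must hold simultaneously for all witnesses and all promised instances at a given length. Otherwise the sum over $m$ units, or the supremum over witnesses in the completeness condition, could fail. The hypothesis ``uniformly over promised instances'' supplies this, and the operator-norm bound is automatically witness-independent. I would state this explicitly when invoking the telescoping bound. Together with the first inclusion, this yields equality of the two classes for every fixed classical oracle $O$.
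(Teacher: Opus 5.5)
Your proposal is correct and follows essentially the same route as the paper. Both use the trivial self-companion inclusion, then substitute the companion reflections, check the structural conditions of Definition~\ref{def:QIMA-O}, and apply a hybrid argument that bounds the change in acceptance probability by a polynomial times $\delta(n)$. The differences are minor: the paper counts replaced units by $T(n)$ and charges diamond-norm distance $2\delta(n)$ per unit, and it fixes the new thresholds as $c-g/4$ and $s+g/4$ with $g=c-s$, which, unlike your $c-\negl$ and $s+\negl$, are efficiently computable without knowing $\delta$.
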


In particular, our query lower-bound and oracle separation also hold for this
formulation. This negligible tolerance is essentially sharp: due to Theorem~\ref{thm:informal_deviating_from_reflection}, allowing
inverse-polynomial deviations already permits simulation of
$\QMA^{\mathcal{O}}$ verifiers. The simulation uses a single verifier unit,
but that unit may contain polynomially many oracle queries.

\medskip

A different variation is to consider units that are not close to reflections, but instead restrict how the unit access the oracle, say by limiting the number of queries. What restrictions allow us to decide Forrelation without collapsing the class to $\QMA^{\cal{O}}$?\footnote{Regardless of the answer to this question, we argue that the reflection model is a natural relativization of $\QIMA$. However, for completeness, we make an effort to address the question.} A unit containing a single oracle query has the form
\[
    U^{\mathcal{O}} = A O^{(Q)} B,
\]
where $O^{(Q)}$ denotes an oracle query on register $Q$, and
$A,B$ are oracle-free polynomial-size circuits. For Forrelation,
the query is $O_{f,g}^{(Q)}$. $A$ and $B$ may act
on the full witness register, including all address qubits,
and the completed unit need not be a reflection.
Section~\ref{subsec:one-query-collapse} proves the following.

\begin{theorem}[Informal]\label{thm:informal-single-unit-collapse}
The variant of Definition~\ref{def:QIMA-O} in which every
oracle-containing unit has the one-query form above, without
the reflection requirement, equals $\QMA^{\mathcal{O}}$.
Moreover, every problem in $\QMA^{\mathcal{O}}$ can be verified
efficiently using just one such unit and therefore one oracle
query. The verifier has an inverse-polynomial
completeness--soundness gap, and all registers other than the
Hadamard-test control are supplied by the witness.
\end{theorem}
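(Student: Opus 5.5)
The containment of the variant class in $\QMA^{\mathcal O}$ should be the easy direction. A $\QMA^{\mathcal O}$ verifier can run the classical preprocessing itself and then perform the Hadamard tests of the produced units one after another on the witness, each with a fresh control qubit. Every unit is a polynomial-size circuit with at most one query, so this is an efficient $\QMA^{\mathcal O}$ procedure with the same acceptance probabilities. The inclusion therefore holds with the same gap, and commutativity is not even needed. The substance is the converse. I would show that an arbitrary $\QMA^{\mathcal O}$ verifier $V=U_L\cdots U_1$, some of whose gates are oracle queries, can be compiled into a \emph{single} unit $W=AO^{(Q)}B$ with one oracle query, whose Hadamard test has an inverse-polynomial gap.

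\textbf{Construction.} First I amplify $V$ (e.g.\ Marriott--Watrous) so that its completeness is $1-2^{-p(n)}$ and its soundness is $2^{-p(n)}$. Next I make the computation cyclic. Let $\Pi_{\mathrm{init}}$ project onto all ancilla qubits being $\ket 0$ and $\Pi_{\mathrm{acc}}$ onto the output qubit being $\ket 1$. Define reflections $R_1=2\Pi_{\mathrm{init}}-I$ and $R_2=V^\dagger(2\Pi_{\mathrm{acc}}-I)V$, and set $M=R_2R_1$. Writing $M$ as a sequence of $L'=O(L)$ elementary steps $G_1,\dots,G_{L'}$ gives steps that are oracle-free gates, oracle queries (from $V$ and $V^\dagger$), or the two phase reflections. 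The witness register then carries a clock register of size $L'$ together with the work register. I define
\[
W\ket{t}\ket{\phi}=\ket{t+1 \bmod L'}\,G_{t+1}\ket{\phi}.
\]
This $W$ is implementable with exactly one oracle query, as follows. $B$ applies all clock-controlled oracle-free gates and, controlled on the clock, swaps the register targeted at query step $t$ into a fixed query register $Q$. The query $O^{(Q)}$ is applied controlled on a flag bit indicating that ``step $t$ is a query step''; controlled queries are permitted by \cref{def:oracle-access}. $A$ then swaps the register back and increments the clock. All registers except the Hadamard control come from the witness, as required.

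\textbf{Spectral analysis.} The Hadamard test accepts with probability $(1+\mathrm{Re}\bra\psi W\ket\psi)/2$, so I need to control the numerical range of $W$. Since $W^{L'}$ acts as $I_{\mathrm{clock}}\otimes M$ up to a clock relabeling, the eigenvalues of $W$ are exactly the $L'$-th roots of the eigenvalues of $M$. The corresponding eigenvectors are history-like states $\sum_t \omega^{-t}\ket t\, G_t\cdots G_1\ket{\phi}$, with $\phi$ an eigenvector of $M$. Hence
\[
\max_\psi \mathrm{Re}\bra\psi W\ket\psi=\cos(\theta_{\min}/L'),
\]
where $\theta_{\min}$ is the smallest absolute eigenphase of $M$. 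Since $M$ is a product of two reflections, Jordan's lemma decomposes it into $2$-dimensional blocks with eigenphases $\pm 2\theta_j$. Here $\cos^2\theta_j$ are the eigenvalues of $\Pi_{\mathrm{init}}V^\dagger\Pi_{\mathrm{acc}}V\Pi_{\mathrm{init}}$, i.e.\ the acceptance probabilities of $V$ on initialized inputs. The $1$-dimensional blocks give eigenvalues $\pm1$, and this is the point to watch: for instance, vectors with $\Pi_{\mathrm{init}}=0$ and rejected output both carry phase $-1$ and are harmless. The case to rule out is a trivial $+1$ eigenvector that does not come from accepting computations. If the sign conventions make such vectors appear, I would fix this by swapping one reflection for its negation.

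\textbf{Conclusion and main obstacle.} With the signs fixed, the YES case gives $\theta_{\min}=O(2^{-p/2})$, so the acceptance probability is at least $1-O(2^{-p})$. In the NO case every initialized direction has $\cos^2\theta_j\le 2^{-p}$, so $\theta_{\min}\ge \pi/2-o(1)$. The acceptance probability is then at most $(1+\cos(\Omega(1)/L'))/2\le 1-\Omega(1/L'^2)$. This yields the inverse-polynomial gap and the collapse to $\QMA^{\mathcal O}$. I expect the main obstacle to be the bookkeeping of the $1$-dimensional Jordan blocks and the sign conventions, so that no spurious eigenvalue-$1$ vector exists in the NO case. A secondary point is to check carefully that the clock-controlled routing realizes $W$ with literally a single controlled query.
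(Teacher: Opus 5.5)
\textbf{Gap: the unit $M=R_2R_1$ has a spurious $+1$ eigenspace, so your verifier is unsound.} Your easy direction and your cyclic-clock spectral analysis are fine and essentially match Lemma~\ref{lem:one-query-clock}. The problem is the unit you feed into the clock, and the failure is exactly the case you flagged and then dismissed. Write $P=\Pi_{\mathrm{init}}$ and $Q=V^\dagger\Pi_{\mathrm{acc}}V$. On $\ker P\cap\ker Q$ each reflection acts as $-1$, so $M$ acts as $(-1)(-1)=+1$. These improperly initialized, rejecting vectors are therefore accepted with certainty, not harmless. Take $m$ witness qubits, $a$ workspace qubits and one output qubit. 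Then this subspace has dimension at least $2^m(2^a-1)+2^{m+a-1}-2^{m+a}=2^{m+a-1}-2^m$, which is positive as soon as $a\ge 2$. Hence $\theta_{\min}=0$, and your verifier accepts every instance with probability $1$; amplifying $V$ does not touch this. Negating one reflection does not repair it either. The operator $-R_2R_1$ has $+1$ eigenvectors in $\operatorname{im}P\cap\ker Q$ (initialized inputs rejected with certainty) and in $\ker P\cap\operatorname{im}Q$. It also shifts every two-dimensional block's eigenphases $\pm 2\theta_j$ by $\pi$, so it favors NO instances. Every signed product of the two reflections is $\pm(2Q-I)(2P-I)$, so no sign convention separates $\ker P\cap\ker Q$ from $\operatorname{im}P\cap\operatorname{im}Q$.

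\textbf{Missing idea: an asymmetric gadget.} You need a unit that breaks this symmetry while still encoding the acceptance probability in an eigenphase. The paper uses $W_\theta=R_Pe^{-i\theta Q}e^{i\theta P}$ with $R_P=2P-I$ and the constant $\theta=1/4$ (Lemma~\ref{lem:small-phase-mw}). On one-dimensional blocks inside $\ker P$ it acts as $-1$ or $-e^{-i\theta}$, so those directions pass with probability at most $\sin^2(\theta/2)$. On a two-dimensional Jordan block with parameter $\lambda$ its eigenvalues are $e^{i\alpha_\lambda}$ and $-e^{-i\alpha_\lambda}$, with $\sin\alpha_\lambda=(1-\lambda)\sin\theta$. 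Thus $\mu(W_\theta)=G_\theta(\omega)$ is increasing in the QMA value $\omega$, with a constant gap between $\omega=2/3$ and $\omega=1/3$, so no exponential amplification is needed. It makes $2q$ queries, using $e^{-i\theta Q}=V^\dagger e^{-i\theta\Pi_{\mathrm{acc}}}V$. Your clock argument applied to this unit then gives a one-query unit with gap at least $(c_0-s_0)/L^2$.

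\textbf{Two smaller points in your clock.} Both matter because Merlin supplies every register. First, the clock must have exactly $L$ basis states (take $L$ a power of two); otherwise Merlin can park it in an unused state. Second, a query controlled on a flag computed into a witness qubit is really controlled on that qubit's initial value XOR the predicate. The flag-$1$ block therefore cycles a different circuit whose spectrum you do not control. The paper sidesteps both issues by padding with cancelling query pairs, using $O^2=I$ and an even query count. Then every one of the $L$ steps has the form $G_tO^{(R)}$ with an unconditional query.
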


The two collapse results establish complementary limitations.
Theorem~\ref{thm:informal_deviating_from_reflection} achieves simulation
within any prescribed inverse-polynomial distance of an exact
reflection, while Theorem~\ref{thm:informal-single-unit-collapse} achieves
simulation with a single oracle query. The one-query construction in Theorem~\ref{thm:informal-single-unit-collapse}
does not preserve the near-reflection guarantee, so it does not
subsume Theorem~\ref{thm:informal_deviating_from_reflection}.

In Section~\ref{subsec:restricted-address-support-lower-bound}, we
complement the one-query collapse by limiting the number of
address qubits on which the oracle-free circuits $A$ and $B$
may together act nontrivially.
We retain all units allowed by Definition~\ref{def:QIMA-O}
and additionally permit non-reflection units of the form
\[
    U_i^{f,g}=A_i O_{f,g}^{(Q_i)}B_i.
\]
For each additional unit, however, $A_i$ and $B_i$ must together
act nontrivially on at most $k$ of the query's $n$ address qubits.
Equivalently, there must be at least $n-k$ address qubits on which
both circuits act as the identity. These untouched qubits may
differ between units.

We call $k$ the \emph{address-support bound}. The query itself
still acts on the full address register, and the restriction
does not limit the action of $A_i,B_i$ on the oracle-selector
qubit or on other witness qubits. All completed units must
still commute on promised instances, and no trusted workspace
is allowed beyond the Hadamard-test controls.

\begin{theorem}[Informal]
Suppose every additional one-query unit has address-support
bound at most $k(n)$. Any verifier in this extended model
that decides Forrelation must satisfy, for even $n$,
\[
    C(n)+2^{k(n)}T(n)\ge \beta 2^n-O(1),
\]
where $C(n)$ counts classical preprocessing queries and $T(n)$
counts all quantum queries, including those in the original
reflection units and the additional one-query units.
Consequently, if $C(n)$ is polynomial, then
\[
    T(n)=\Omega\!\left(2^{n-k(n)}\right).
\]
In particular, polynomial-query verification is impossible
whenever $n-k(n)=\omega(\log n)$.
\end{theorem}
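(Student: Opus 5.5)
We reduce the address-restricted model to the reflection-only argument behind Theorem~\ref{thm:informal_lower_bound}, paying a factor $2^{k}$ per query that occurs in an additional unit. We do not know the authors' exact mechanism for the base theorem, so we first fix one. Choose a hard pair of distributions: YES instances with $g=\mathrm{sign}(\hat f)$-type structure giving $\Phi\ge\alpha$, and NO instances obtained by modifying $(f,g)$ on a set of inputs. The key structural fact we would use about exact reflection units is commutation. If two promised oracles $(f,g)$ and $(f',g')$ differ only on inputs the verifier never classically queried, then the preprocessing produces the same unit descriptions. The acceptance projector is $\prod_j \Pi_j^{f,g}$, and completeness for YES plus soundness for NO forces the two projector products to differ substantially. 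We would then argue via an adversary/hybrid that a reflection unit's action on a witness changes by at most $O(\text{query mass on changed inputs})$. Summing over $T$ quantum queries and $C$ classical ones, any fixed witness sees total change bounded in terms of $(C+T)/(\beta 2^n)$. This gives the base inequality $C+T\ge \beta 2^n-O(1)$. The $\beta 2^n$ arises because a NO instance can be obtained from a YES instance by flipping roughly $\beta N$ values chosen from a large pool, so an averaging argument finds an unqueried flip set.

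For the additional units $U_i=A_iO_{f,g}^{(Q_i)}B_i$, we would exploit the address-support bound. Since $A_i,B_i$ act as identity on $n-k$ address qubits $S_i^c$, we write the address as $x=(y,z)$ with $y$ on the $k$ touched qubits and $z$ untouched. The untouched register $z$ is never rotated, so $U_i$ commutes with the computational-basis projectors $\ketbra{z}$ on $S_i^c$ and decomposes as $\bigoplus_z A_i O^{(z)} B_i$. Here $O^{(z)}$ is the oracle restricted to the $2^k$ inputs $\{(y,z)\}_y$ (for both $f$ and $g$). Within each block the unit is a fixed oracle-free circuit around a phase oracle on $2^k$ points. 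We would then replace each such unit by an equivalent object charged $2^k$ queries. The block-diagonal structure lets us treat $U_i$ as a controlled family of at most $2^{k+1}$ single-point phase queries $\ket{b,y,z}\mapsto \pm$, conditioned on the basis value $z$. In the hybrid/adversary accounting, a change of the oracle on a set $D$ then affects $U_i$ only through the query weight on $D$, and the per-block structure bounds this weight by $2^k$ times the weight the witness places on the $z$-slices meeting $D$. Consequently each additional unit costs at most $2^{k}$ units of ``query budget,'' and the base counting argument goes through with $T$ replaced by $2^kT$. Reflection units are handled exactly as before, and the factor $2^k\ge1$ only loosens their contribution.

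The main obstacle is that the base argument may use the reflection property in an essential way. For example, it may need eigenvalues in $\{\pm1\}$, so that the acceptance operator is a projector and the $0/1$ dichotomy lets small perturbations be amplified. The additional units are not reflections. We would handle this by using the $E(U)=\tfrac{2I+U+U^\dagger}{4}$ acceptance operators and the commutation of all units, which gives a joint eigenbasis. In that basis, acceptance of a witness is $\prod_j \frac{1+\mathrm{Re}\,\lambda_j}{2}$, and perturbation bounds on each $\lambda_j$ suffice. If that fails, the fallback is to argue per $z$-block, where the unit depends on only $2^{k+1}$ oracle bits. Perfect commutation across all promised instances should then force each block to be determined, up to a phase, by a bounded set of values that can be learned classically at cost $2^{k}$. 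Finally, taking $C$ polynomial gives $T=\Omega(2^{n-k})$, which is superpolynomial when $n-k=\omega(\log n)$.
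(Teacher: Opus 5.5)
There is a genuine gap. The hybrid/adversary step at the core of your plan cannot give a bound linear in $N$, and your account of where $\beta 2^n$ comes from is incorrect.

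With $g=g_\star$ fixed and $h=H_Ng_\star$ a perfect partner, $\Phi(f,g_\star)=1-2\,\mathrm{dist}(f,h)/N$. So every NO instance differs from the YES instance on between $(1-\beta)N/2$ and $(1+\beta)N/2$ points. More generally, flipping a set $D$ of oracle values moves $\Phi$ by at most $O(\sqrt{|D|/N})$, so YES and NO instances always differ on $\Omega(N)$ points. A witness whose query amplitude is spread over all addresses then puts constant weight on the difference set at every query. The bound ``change $\le$ query mass on changed inputs'' is therefore vacuous after $O(1)$ queries, and no averaging over flip sets helps.

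More fundamentally, an argument that only tracks how a fixed witness's acceptance moves under oracle perturbations applies verbatim to a $\BQP$-style circuit. There Forrelation needs a single query, so such an argument cannot be what separates the classes. The $\beta N$ in the bound is not a flip budget: it is the number of distinct Hamming layers that consist entirely of NO instances.

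The missing idea is the exact-zero polynomial method. Fix the preprocessing transcript on a perfect pair built from a bent function; this choice is what makes $\Phi$ depend only on the Hamming layer. Restrict to $f$ agreeing with $h$ on the $r\le C(n)$ classically queried points.

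\begin{itemize}
\item The reflection units give projectors whose product $P^f$ has entries of degree at most $T_{\mathrm{refl}}$.
\item For each one-query unit, take the spectral projectors $P^f_{i,\lambda}=\mathbf{1}_{\{\lambda\}}(E(U_i^f))$ over the finite set of eigenvalues occurring on the subcube.
\item For each eigenvalue tuple, take the fixed projector $Q_{\boldsymbol\lambda}=\mathbf{1}_{(\theta,\infty)}((\prod_i\lambda_i)B)$. Here $B$ is the product of the oracle-free acceptance operators and $\theta=(c+s)/2$.
\end{itemize}

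By commutation, $p(f)=\sum_{\boldsymbol\lambda}\Tr(Q_{\boldsymbol\lambda}P^f\prod_iP^f_{i,\lambda_i})$ counts the eigenvalues of the acceptance operator exceeding $\theta$. Hence $p(h)\ge1$, while $p(f)=0$ \emph{exactly} on every NO instance.

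Your decomposition $\bigoplus_z$ over untouched address bits is the right observation, but it should bound degree rather than query mass. Each block of $P^f_{i,\lambda}$ is an arbitrary function of at most $2^{k}$ values of $f$ (not $2^{k+1}$, since $g_\star$ is fixed). It is therefore a multilinear polynomial of degree at most $2^{k}$.

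Minsky--Papert symmetrization over Hamming layers then gives a univariate $q$ with $q(0)\ge1$, $\deg q\le T_{\mathrm{refl}}+\sum_i2^{k_i}\le 2^{k}T$, and at least $\beta N-r-O(1)$ roots. This yields $C(n)+2^{k(n)}T(n)\ge\beta 2^n-O(1)$. Your fallback, that commutation forces each block to be classically learnable at cost $2^k$, is unsupported and unnecessary.
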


For $k(n)\le (1-\varepsilon)n$, where $\varepsilon>0$ is constant,
this gives an exponential query lower bound. More generally,
it gives a superpolynomial lower bound whenever the oracle-free
circuits surrounding each additional query leave a
superlogarithmic number of address qubits untouched.
By contrast, the collapse to $\QMA^{\mathcal{O}}$ in
Section~\ref{subsec:one-query-collapse} allows these circuits
to act on all $n$ address qubits. The two results thus compare
restricted and unrestricted address support within the same
one-query form, without identifying an exact threshold between
the two regimes.

\subsubsection{Ancilla Qubits}

A $\QIMA^{\mathcal O}$ verifier has no trusted ancilla qubits beyond the fresh control
qubits used for the individual Hadamard tests. Allowing even one
additional trusted active qubit gives a model equivalent to
$\mathrm{QMA}$ in the plain setting, and to
$\mathrm{QMA}^{\mathcal{O}}$ in the oracle setting, as shown in
Section~\ref{subsec:ancilla-collapse}.\footnote{In this reduction we only require $\QIMA$ to have an inverse-polynomial completeness-soundness gap. Without trusted ancillas this is equivalent to the
$(1,0)$ formulation of $\QIMA$, but this does not necessarily hold when a trusted ancilla is added to the model.} 
\begin{theorem}[Informal]
Consider the variants of $\QIMA$ and $\QIMA^{\mathcal{O}}$
that allow an inverse-polynomial completeness--soundness gap
and give the verifier one additional trusted qubit initialized
to $\ket{0}$, on which the commuting units may act.
This qubit is separate from the usual Hadamard-test controls.

These variants equal $\QMA$ and $\QMA^{\mathcal{O}}$,
respectively. Moreover, in the oracle setting, the simulation
can be realized using a single oracle-containing unit that
is an exact reflection.
\end{theorem}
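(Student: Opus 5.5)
Both inclusions have to be shown: the ancilla-augmented classes sit inside $\QMA$ (resp.\ $\QMA^{\mathcal O}$), and conversely every $\QMA$ (resp.\ $\QMA^{\mathcal O}$) verifier can be simulated with one trusted qubit.

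\emph{Upper bound.} Here the ancilla-augmented verifier is just a special $\QMA$ verifier. It prepares the trusted qubit in $\ket{0}$, runs the Hadamard tests on the units in sequence (each with a fresh control), and accepts iff every control measures $+$. This is a polynomial-size quantum circuit with a polynomial total number of oracle queries. It is therefore a $\QMA^{\mathcal O}$ verifier whose gap is inherited directly.

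\emph{Lower bound.} This is the substantive direction. I will use a single unit $W$ acting on the witness register $w$ and the trusted qubit $a$.

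\begin{enumerate}
\item Start from an arbitrary $\QMA^{\mathcal O}$ verifier $V$. It acts on a witness $\ket{\psi}$ and $q$ ancillas in $\ket{0^q}$, and accepts by measuring an output qubit.
\item Remove the trusted ancillas from $V$ by making them part of the witness. To stop a cheating prover from initializing them badly, use the trusted qubit $a$ as a flag, so that an ancilla that is not $0$ is penalized.
\item Define the reflection
\[
W = V^\dagger\,(Z_{\mathrm{out}} \otimes \text{something on } a)\,V,
\]
with $V$ acting on the witness-supplied register. Conjugating a reflection by a unitary gives a reflection, so $W$ is an exact oracle-containing reflection. With one unit, commutation is trivial.
\item The Hadamard test on $W$ accepts with probability $\tfrac{1+\langle W\rangle}{2}$, so a honest state yields high acceptance. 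The difficulty is that a cheater may put nonzero values in the ancilla part of the witness, and $V$ run on bad ancillas may accept anything.
\item Use $a$ to enforce ancilla initialization. Controlled on $a$, the unit either runs the verification reflection or checks that the ancilla register is all-zero: it applies $I\otimes(2\ketbra{0^q}-I)$, or a Hadamard-type rotation that mixes the two branches. A natural choice is
\[
W = \ketbra{0}_a\otimes V^\dagger Z_{\mathrm{out}}V \;+\; \ketbra{1}_a\otimes(2\ketbra{0^q}-I),
\]
preceded by rotating $a$ into $\ket{+}$ by an oracle-free unit. The oracle-free unit is allowed to be arbitrary; since every unit must commute with every other, I will instead fold this rotation into the single reflection by conjugating with $H_a$.
\end{enumerate}

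The main obstacle is soundness. Because $a$ is put in superposition, the two branches act on the same witness, so acceptance is an average of the $\ket{0}_a$-branch and $\ket{1}_a$-branch expectations, plus cross terms after the $H_a$ conjugation. I will compute the acceptance operator $\tfrac12\bigl(I + H_aWH_a\bigr)$ restricted to $a=\ket{0}$. In this restriction the effective operator is
\[
\tfrac12\Bigl(I + \tfrac12\bigl(V^\dagger Z V + (2\Pi_0 - I)\bigr)\Bigr),
\]
where $\Pi_0=\ketbra{0^q}$ projects the witness-supplied ancillas onto all-zero. Its top eigenvalue is governed by Jordan's lemma applied to the projectors $V^\dagger\ketbra{1}_{\mathrm{out}}V$ and $\Pi_0$.

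I will then show the following.
\begin{itemize}
\item \emph{Completeness.} On a YES instance, honest ancillas give a value of at least $\tfrac12+\tfrac{c}{2}$-type, and in particular at least $1-(1-c)/2$.
\item \emph{Soundness.} On a NO instance, the maximum is at most $\tfrac12\bigl(1+\tfrac12(1+\sqrt{s})\bigr)$-type, since the principal angle between the two projectors is controlled by the original soundness $s$.
\end{itemize}
First I would amplify $V$ so that $c\ge 1-2^{-n}$ and $s\le 2^{-n}$; then the resulting gap is a constant. If this Jordan-angle bound turns out to be too weak, my fallback is to amplify $V$ further first.

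In the plain (unrelativized) setting the same construction works: $V$ is local-gate decomposable, so the unit can be split into commuting local pieces via the standard clock or history-state reduction. Alternatively, one can invoke $\QMA$-completeness of the local Hamiltonian problem and note that a trusted qubit allows encoding a non-commuting Hamiltonian as commuting controlled terms. I expect this unrelativized locality step to need the most care.
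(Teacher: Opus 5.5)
\paragraph{Gap: the unrelativized equality.} Your proposal does not establish $\QIMA_{\mathrm{ta}(1)}=\QMA$, where the units must be constant-local commuting reflections. Your last paragraph only sketches this, and your first suggested route fails. Splitting the global reflection into its local gates, or passing to the Feynman--Kitaev history-state Hamiltonian, gives local pieces that do not commute, which is exactly the obstruction QIMA imposes.

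Your second route points the right way, but it omits the three ingredients that make it work in the paper.
\begin{enumerate}
\item A single controlling qubit gives only two orthogonal branches $\ket{\pm}$. Controlled terms therefore commute only if the Hamiltonian splits into two internally commuting families. The paper gets such a QMA-hard instance from the pinned commuting $3$-local Hamiltonian result of Nagaj, Hangleiter, Eisert and Schwarz.
\item Hamiltonian terms are positive operators, not reflections, and Hadamard tests of commuting reflections only produce projector acceptance operators. The paper sets $\gamma=\Delta/(2m^2)$, so that $\prod_i(I-\gamma H_i)$ approximates $I-\gamma H$ to within $\gamma^2m^2/2$. It then dilates each $I-\gamma H_i$ into a local projector $P_i$ on $\operatorname{supp}(H_i)\otimes B_i$ with $\bra{0}P_i\ket{0}=I-\gamma H_i$, where $B_i$ is a witness-supplied qubit.
\item The checks that the $B_i$ and the pinned qubit are $\ket{0}$ do not commute with the $P_i$. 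The paper therefore keeps the pinned qubit witness-supplied. It puts all energy reflections in the $\ket{+}_C$ branch of the trusted qubit $C$ and all initialization reflections in the $\ket{-}_C$ branch. Jordan's lemma then gives optimal acceptance $(1+\sqrt{\omega})/2$, where $\omega$ is the top eigenvalue of $\prod_i(I-\gamma H_i)$ compressed to the pinned subspace.
\end{enumerate}

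\paragraph{The oracle half is correct.} The unit $\ketbra{+}_a\otimes V^\dagger(2\Pi_{\mathrm{acc}}-I)V+\ketbra{-}_a\otimes(2\Pi_0-I)$ is an exact reflection. With $a$ in $\ket{0}$, its acceptance operator is $(Q+\Pi_0)/2$ with $Q=V^\dagger\Pi_{\mathrm{acc}}V$, whose top eigenvalue is exactly $(1+\sqrt{\omega})/2$ for $\omega$ the original optimal acceptance probability.

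Your soundness expression $\tfrac12\bigl(1+\tfrac12(1+\sqrt s)\bigr)$ comes from a slip: the top eigenvalue of $Q+\Pi_0-I$ is $\sqrt{\omega}$, not $(1+\sqrt{\omega})/2$. It is still a valid upper bound, so your amplified version goes through. Computed exactly, the gap is at least $(c-s)/4$ with no amplification. You should also orient the reflection so that its $+1$ eigenspace is the accepting one.

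The paper's oracle proof takes a different and more economical route. A coherent flag $F$ flips the trusted qubit exactly when the purported workspace is all zeros. The single reflection $\widetilde V^\dagger\bigl(2\,\Pi_{\mathrm{acc}}\otimes\ketbra{1}_C-I\bigr)\widetilde V$, with $\widetilde V=(V\otimes I_C)F$, then preserves the original acceptance probability exactly. Your branch splitting is instead the mechanism the paper uses in its plain-model lemma.
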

\noindent This result motivates excluding trusted ancillas from our $\QIMA^{\mathcal O}$ model.

\subsubsection{Locality}
Locality of the verifier's units, which is required in $\QIMA$, cannot be retained after introducing oracle
access: a query to a function on $n$-bit strings acts on an $n$-qubit query
register. Thus, in $\QIMA^{\mathcal{O}}$, we allow a unit to act on any number of qubits. 

As a result, our oracle model need not recover the plain class \(\mathsf{QIMA}\) when the oracle is trivial. For comparison, in the QMA–QCMA separation of Bostanci et al. \cite{QMA_QCMA}, the relativized verifiers are uniform polynomial-size circuits consisting of elementary local gates and oracle gates. If the oracle query unitary is the identity, its gates can be deleted, recovering the corresponding unrelativized class. In our model, however, the completed commutation units may be nonlocal even when they contain no oracle queries. Deleting trivial oracle calls therefore need not produce a QIMA verifier. One reason is that decomposing the remaining units into local gates need not preserve commutativity. Indeed, our $\QMA^{\mathcal{O}}$ simulation in Section~\ref{subsec:inverse-poly-reflection} implies that $\QIMA^{O_0}=\QMA$ for the identity oracle $O_0$.

This can be understood as evidence for the strength of our $\QIMA^{\mathcal{O}}$ model.
Enlarging the model strengthens the lower-bound statement, which consequently applies to every narrower oracle extension contained in it. For example, one could restrict the commutation units to constant-locality oracle-free reflections and individual oracle-query reflections, requiring all units to commute. Our separation applies to this model as well. 

\subsubsection{Completeness and Soundness Parameters}

While $\QIMA$ is defined with perfect completeness and zero
soundness, Definition~\ref{def:QIMA-O} allows $\QIMA^{\mathcal O}$ verifiers to have any efficiently
computable completeness and soundness parameters separated by
an inverse-polynomial gap. Our separation therefore also applies
to the perfectly complete restriction of this oracle model.

When every tested unit is an exact reflection and the units commute,
the verifier's acceptance operator is an orthogonal projector.
Its maximum acceptance probability is therefore either zero or one.
This explains why positive completeness and soundness below one
suffice for the perfect-parameter formulation of plain $\QIMA$.

In our oracle model, however, oracle-free units may be arbitrary
unitaries, so the overall acceptance operator need not be a
projector. We do not establish whether requiring perfect
completeness changes this class. The collapse results in Section~\ref{section:comp_model} use the
inverse-polynomial-gap convention as well.

\subsection{Related Work}

\subsubsection{Commuting Local-Hamiltonians}
Bravyi and Vyalyi initiated the research into the commuting local-Hamiltonian problem, and showed that for 2-local commuting Hamiltonians the problem is in NP~\cite{bravyi2004commutativeversionklocalhamiltonian}. A series of works then extended their result by showing that it holds for additional families of commuting local-Hamiltonians. Aharonov and Eldar proved it for 3-local Hamiltonians over qubits, and for 3-local Hamiltonians over qutrits under a nearly-Euclidean geometric assumption~\cite{aharonov2011complexitycommutinglocalhamiltonians}. Schuch showed that commuting Hamiltonians on a two-dimensional square lattice of qubits are in NP~\cite{schuch2011complexitycommutinghamiltonians}. Aharonov and Eldar showed that sufficiently good local expansion of a commuting local-Hamiltonian's interaction graph
allows ground-energy approximation in NP~\cite{aharonov2013commutinglocalhamiltonianlocallyexpanding}. Aharonov, Kenneth, and Vigdorovich substantially generalized the two-dimensional qubit result to a broad class of nearly-Euclidean surface complexes, while additionally showing that the corresponding ground states can be efficiently prepared~\cite{Aharonov_two_dimensional_CLH}. Irani and Jiang extended the two-dimensional square-lattice result beyond qubits, proving NP containment for qutrits, as well as for factorized commuting Hamiltonians of arbitrary local dimension~\cite{irani2023commutinglocalhamiltonianproblem}. Most recently, Bostanci and Hwang proved NP containment for rank-one commuting Hamiltonians in two dimensions with unrestricted local dimension and locality, and obtained the first such result for a family of three-dimensional commuting Hamiltonians~\cite{bostanci2025commutinglocalhamiltonians2d}.

While the above results show that for certain families of commuting local-Hamiltonian, the local-Hamiltonian problem is in NP, Gosset, Mehta and Vidick proved that the ground state connectivity problem is as hard for commuting local-Hamiltonians as it is for general local-Hamiltonians~\cite{Gosset_2017}. Specifically, the problem is QCMA-hard in both cases.

\subsubsection{The Forrelation Problem}

Aaronson and Ambainis introduced Forrelation as a decision problem that
exhibits a separation between quantum and randomized query
complexity: Forrelation is solvable with a constant number of quantum queries,
whereas every bounded-error randomized algorithm requires
$\widetilde{\Omega}(\sqrt{N})$ queries~\cite{aaronson2014forrelationproblemoptimallyseparates}. A series of works then proved improved separations from extensions of this problem. Bansal and Sinha 
proved the conjectured optimal lower bound for $k$-Forrelation for every
$k$~\cite{BansalForrelation2021}, and Sherstov, Storozhenko, and Wu obtained optimal separations
between randomized and quantum query complexity~\cite{SherstovForrelation2021}.  Raz and Tal used a distributional
variant of Forrelation to construct an oracle relative to which
$\BQP$ is not contained in the Polynomial Hierarchy~\cite{RazBqpPH2018}.

More recently, Girish and Servedio studied the extremal version of the problem,
in which the Forrelation is promised to be either $+1$ or $-1$.  They proved an exponential
randomized-query lower bound even though the problem has an exact one-query
quantum algorithm~\cite{girish2025forrelationextremallyhard}.  Girish subsequently used Fourier-growth bounds
to show that Forrelation requires exponentially many queries in the
$\DQC_1$ model, and asked whether it can be solved by commuting IQP
circuits~\cite{GirishFourierSpectrum2026}.  Buzet and Chailloux answered this question positively:
the signed problem can be solved by one IQP circuit making one joint oracle
query, while the absolute-value problem can be solved by two IQP executions
together with efficient classical post-processing~\cite{buzet2026iqpcircuits2forrelation}.

Although IQP and QIMA both impose a commutativity restriction, the corresponding
computational models are substantially different.  An IQP computation begins
with a register prepared in the state $\lvert 0^m\rangle$,
measures that register at the end, and may accept according to an
efficiently recognizable set of outcomes.  In QIMA, by contrast, the target
register is supplied by an untrusted prover: apart from the control qubits used
for the individual Hadamard tests, the verifier has no trusted initialized
workspace. Indeed, in Section \ref{section:comp_model}, we prove that extending the QIMA model by allowing the verifier to use a trusted ancilla qubit, collapses the class to QMA. Thus the IQP algorithms of~\cite{buzet2026iqpcircuits2forrelation} do not yield QIMA
verifiers for Forrelation.

\subsection{Technical Overview}
\label{sec:technical-overview}

We first explain the Forrelation lower bound for
$\QIMA^{\mathcal{O}}$. The complementary lower bounds in
Sections \ref{subsec:restricted-address-support-lower-bound} and \ref{sec:relax_reflection} follow the same core argument, with
modifications intended to handle one-query units with restricted address
support and negligible deviations from exact reflections,
respectively. We then outline the approach taken for obtaining the collapse results in Section~\ref{section:comp_model}.

\subsubsection{The oracle separation.}
We prove the separation by establishing an exponential query lower bound for Forrelation against $\QIMA^{\mathcal O}$ verifiers. Let $C(n)$ be the number of classical oracle queries made during preprocessing, and let $T(n)$ be the total number of quantum oracle queries appearing in the units produced by the preprocessing algorithm. Our main technical statement is that, for every sufficiently large even $n$,
\[
    C(n)+T(n)\geq \beta 2^n-O(1).
\]
Thus, although Forrelation has a constant-query quantum algorithm, a $\QIMA^{\mathcal O}$ verifier for the problem needs an exponential number of queries. We first prove the lower bound when every oracle-containing unit is an exact reflection. At the end of the argument, we explain why it continues to hold when every unit with oracle queries is allowed to be within negligible distance of an exact reflection.

On a promised input, all verifier units have a common eigenbasis.
Each oracle-dependent reflection test accepts or rejects a common
eigenstate with certainty (as reflection eigenvalues are $\pm1$),
so multiplying their accepting projectors gives the projector onto
the subspace of states that pass all these tests. The oracle-free
units commute with every oracle-dependent projector, so they
preserve this common accepting subspace.
We can therefore consider the span of common eigenstates whose
acceptance probability exceeds a threshold between soundness and
completeness. An optimal witness can be chosen from the common
eigenbasis, so completeness guarantees that this span is nontrivial
on YES-instances. Soundness excludes every such eigenstate on
NO-instances, making the dimension exactly zero even when soundness
is nonzero.

The lower bound then follows from the polynomial
method~\cite{BealsPolynomialMethod1997,MinskyPerceptrons}. The crucial observation is that, once
the classical preprocessing is fixed, the dimension of the space described above
admits a polynomial expression in the oracle's truth-table entries,
with degree bounded by the number of quantum queries made by the verifier. Thus we obtain a
polynomial with \emph{exact zeros} on NO-instances. We choose a family of Forrelation instances on which this polynomial
is nonzero at a YES-instance and vanishes on many NO-instances.
Applying the standard Minsky--Papert symmetrization~\cite{MinskyPerceptrons}
yields a nonzero univariate polynomial of no greater degree with
many distinct roots. Counting these roots forces a large degree,
yielding the query lower bound.

\paragraph{Perfect Forrelation Pairs.} A perfect Forrelation pair is a pair of sign functions $g,h:\{0,1\}^n\rightarrow \{-1,1\}$ such that $\Phi(g,h)=1$, meaning that their Forrelation value is 1. Girish and Servedio established that $g,h$ are a perfect Forrelation pair if and only if $H_N g = h$, meaning that applying the Hadamard transform to the $2^n$-sized vector representing the truth table of $g$ results in the vector representing the truth table of $h$~\cite{girish2025forrelationextremallyhard}. Such functions are referred to in the literature as \textit{bent functions}, and they exist for any even $n$~\cite{Dillon1974}. For an explicit construction of a bent function pair, see Section \ref{subsec:perfect_forr_pairs}.

The useful feature of a bent function pair $(g,h)$ is that its Forrelation value changes in a particularly simple way when one of the functions is perturbed. If $f$ differs from $h$ on exactly $k$ truth-table entries, then
\[
    \Phi(g,f)
    =
    1-\frac{2k}{N}.
\]

This fact is proven in Lemma \ref{lemma:forrelation-under-hamming}. Thus, after fixing the first function to $g$, the Forrelation value $\Phi(g,f)$ depends only on the Hamming distance between $h$ and $f$. In particular, for every function $f$ at a distance $k$ from $h$ such that
\[
    \frac{1-\beta}{2}N
    \;\leq\;
    k
    \;\leq\;
    \frac{1+\beta}{2}N,
\]
$(g,f)$ is a no-instance.

\paragraph{Freezing the classical preprocessing.}
A $\QIMA^{\mathcal O}$ verifier may construct its quantum units based on a series of adaptive classical oracle queries. We remove this adaptivity by fixing the preprocessing computation. Run the deterministic preprocessing algorithm on a perfect Forrelation pair $(g,h)$, and let the "leaf" $\tau$ be the resulting transcript. Let
\[
    F\subseteq\{0,1\}^n
\]
be the set of positions of $h$ queried during this execution, and let $r=|F|$. Since the entire preprocessing makes at most $C(n)$ queries, $r\leq C(n)$.

Now keep $g$ fixed and consider every function $f$ satisfying
\[
    h(x)=f(x)
    \qquad
    \text{for every }x\in F.
\]
Every such $f$ produces exactly the same transcript $\tau$. Therefore, across this set of functions, the preprocessing algorithm produces the same quantum circuit descriptions.

\paragraph{Extracting an algebraic distinguisher.}
Next, we compile the verifier circuit into a function that must take distinct values for Yes and No instances.

Fix the transcript $\tau$. Write the oracle-containing units produced on this transcript as exact reflections
\[
    R_1^f,\ldots,R_a^f,
\]
and write the oracle-free units as
\[
    U_1,\ldots,U_b.
\]
Consider a unitary $W$, and define

\[
    E(W):=\frac{2I+W+W^\dagger}{4}.
\]

This is the \textit{accepting POVM element} of $W$: the $+$ outcome of controlled-Hadamard test of $W$ on a pure state $\ket{\psi}$ is obtained with
probability
\[\bra{\psi} E(W) \ket{\psi}\]
Since $R_i^f$ is a reflection (and hence self-adjoint), its accepting operator is the projector
\[
    \Pi_i^f:=E(R_i^f)=\frac{I+R_i^f}{2}.
\]
Define
\[
    P_f:=\prod_{i=1}^a\Pi_i^f,
    \qquad
    B:=\prod_{j=1}^b E(U_j).
\]
On every promised input, the units commute. It follows that $P_f$ is an orthogonal projector, $B$ is positive semidefinite with all
eigenvalues in $[0,1]$, $P_f$ commutes with $B$, and the accepting operator of the entire verifier is
\[
    A_f=P_fB.
\]
Concretely, for every normalized witness $|\psi\rangle$,
\[
\Pr[\text{verifier accepts }|\psi\rangle]
   = \langle\psi|A_f|\psi\rangle.
\]

Notice that $B$ is shared across all oracles which share the same preprocessing computation transcript $\tau$. Let $c(n)$ and $s(n)$ be the verifier's completeness and soundness, set
\[
    \theta:=\frac{c(n)+s(n)}{2},
\]
and let
\[
    Q_\theta:=\mathbf{1}_{(\theta,1]}(B)
\]
be the spectral projector of $B$ onto eigenvalues larger than $\theta$.\footnote{$Q_\theta$ is used only in the lower-bound argument; it need not be efficiently implementable.} Since $B$ is fixed on the leaf $\tau$, so is $Q_\theta$. We define
\[
    p(f):=\Tr(Q_\theta P_f).
\]
For promised $f$, the projectors $Q_\theta$ and $P_f$ commute, and hence $Q_\theta P_f$ is the projector onto the intersection of their images. Therefore, $p(f)$ is the dimension of this intersection.

Suppose first that $(g,f)$ is a YES-instance. Completeness gives
\[
    \lambda_{\max}(P_fB)\geq c(n).
\]
Because $P_f$ and $B$ commute, they have a common eigenvector on which $P_f$ has eigenvalue $1$ and $B$ has eigenvalue at least $c(n)>\theta$. This vector lies in the images of both $P_f$ and $Q_\theta$, and therefore
\[
    p(f)\geq 1.
\]

Conversely, suppose that $(g,f)$ is a NO-instance.

If $p(f)>0$, then
$\operatorname{im}P_f\cap\operatorname{im}Q_\theta$ is nonzero.
Since $P_f$ commutes with $B$, this intersection is invariant
under $B$ and contains a normalized eigenvector $|\psi\rangle$
of $B$ with eigenvalue $\lambda>\theta$.
Thus $P_f|\psi\rangle=|\psi\rangle$ and
\[
P_fB|\psi\rangle=\lambda|\psi\rangle,
\qquad \lambda>\theta>s(n),
\]
contradicting soundness. Hence $p(f)=0$.

We have therefore used the completeness--soundness gap to derive the following distinction
\[
    (g,f)\in\mathrm{Yes}
    \ \Longrightarrow\
    p(f)\geq 1,
    \qquad
    (g,f)\in\mathrm{No}
    \ \Longrightarrow\
    p(f)=0.
\]

The reflection condition turns the oracle-dependent units into projectors, while commutativity allows deriving a single projector out of them.

\paragraph{Quantum queries bound the degree.}
Suppose that $R_i^f$ makes $t_i$ quantum oracle queries. Once $g$ and the classical transcript $
\tau$ are fixed, every matrix entry of $R_i^f$ is a polynomial of degree at most $t_i$ in the truth-table values of $f$.
This is a rather standard polynomial-method observation~\cite{BealsPolynomialMethod1997}. Since $f(x)^2=1$, repeated variables can be multi-linearized without increasing the degree.

It follows that every matrix entry of $P_f$ has degree at most
\[
    T_\tau:=\sum_{i=1}^a t_i\leq T(n).
\]
The projector $Q_\theta$ is fixed on the preprocessing leaf $\tau$, so multiplying $P_f$ by $Q_\theta$ and taking the trace do not increase the degree. Hence
\[
    \deg p\leq T_\tau.
\]

\paragraph{Deriving a univariate polynomial.}

Recall that once $g$ is fixed, the Forrelation value
depends only on the Hamming distance of $f$ from $h$:
\[
    \Phi(g,f)
    =
    1-\frac{2\,\mathrm{dist}(f,h)}{N}.
\]
Following the Minsky–Papert symmetrization
method~\cite{MinskyPerceptrons,BealsPolynomialMethod1997}, we average $p$ over all functions $f$ such that $(g,f)$ has the preprocessing transcript $\tau$ and $\text{dist}(f,h)=k$.\footnote{In the proof, we reparameterize the $N-r$ unfixed
truth-table entries of $f$ using Boolean flip variables. See Section~\ref{sec:separation} for details.}

\[
    q(k):=
    \underset{\substack{f:\,\operatorname{dist}(f,h)=k\\
                       f|_F=h|_F}}{\mathbb E}
    \bigl[p(f)\bigr],
    \qquad 0\le k\le N-|F|.
\]

After fixing the coordinates in $F$, the polynomial $p$ still has
degree at most $T_\tau$. Averaging over the remaining coordinates
at Hamming distance $k$ from $h$ gives, by a standard symmetrization argument,
a univariate polynomial $q(k)$ of degree at most $T_\tau$.

For every integer $k$ in the interval
\[
    \frac{1-\beta}{2}N
    \;\leq\;
    k
    \;\leq\;
    \frac{1+\beta}{2}N,
\]
every function at distance $k$ from $h$ is a no-instance. Hence $q(k)=0$ for every integer $k$ in this range.\footnote{We additionally require $k \le N-r$.
This restriction removes at most $r$ integers from the interval.}  The interval contains
$\beta N-O(1)$ integer values of $k$, while fixing $r$ coordinates during
classical preprocessing can remove at most $r$ of them. Thus $q$ has at least
\[
    \beta N-r-O(1)
\]
distinct roots. On the other hand, $k=0$ corresponds to the original perfect yes-instance,
and therefore
\[
    q(0)=p(h)\geq 1.
\]
Thus $q$ is not the zero polynomial. Since a nonzero univariate polynomial
cannot have more roots than its degree, we obtain
\[
    T_\tau
    \geq
    \beta N-r-O(1).
\]
Using $T_\tau\leq T(n)$, $r\leq C(n)$, and $N=2^n$ gives
\[
    C(n)+T(n)
    \geq
    \beta 2^n-O(1).
\]

A standard diagonalization argument over all polynomial-time
candidate $\QIMA^{\mathcal O}$ verifiers turns the query lower bound
into an oracle separation. Specifically, there exists a classical
oracle $\mathcal O$ such that
\[
    \BQP^{\mathcal O}\not\subseteq\QIMA^{\mathcal O}.
\]

\paragraph{Robustness to negligible deviations from reflections.}
Section~\ref{sec:relax_reflection} introduces the relaxed class
$\QIMA^{O}_{\mathrm{negl\text{-}refl}}$, in which each
oracle-containing unit may be negligibly close to a
query-preserving exact companion reflection, provided that the companion reflections commute on promised inputs. It is then observed that $\QIMA^{O}_{\mathrm{negl\text{-}refl}}=\QIMA^{O}$. The reason is that replacing the units of a $\QIMA^{O}_{\mathrm{negl\text{-}refl}}$ verifier by its companion reflections yields a $\QIMA^{O}$ verifier with a noticeable completeness-soundness gap. Hence, the same query lower bound applies.

\subsubsection{Collapse to $\QMA^{\mathcal{O}}$}

\paragraph{Collapse under inverse-polynomial deviations from reflections.}
Section~4.1 shows that allowing even an inverse-polynomial
deviation from an exact reflection suffices to simulate an
arbitrary $\QMA^{\mathcal{O}}$ verifier. The main challenge is
that a $\QMA^{\mathcal{O}}$ verifier may use workspace (trusted ancilla qubits), whereas a $\QIMA^{O}$ verifier may not. Thus, in the simulation, the workspace is to be provided by the witness, and the validity of its state must be verified while preserving the commutation requirement of the verifier's units.

Let $V^{\mathcal{O}}$ be the original $\QMA^{\mathcal{O}}$ verifier, acting on a witness
register $M$ and a workspace register $A$ that would ordinarily
be initialized to $|0^a\rangle$. Define
\[
    P := I_M \otimes |0^a\rangle\langle 0^a|_A,
    \qquad
    Q := (V^{\mathcal{O}})^\dagger
         \Pi_{\mathrm{acc}} V^{\mathcal{O}},
\]
where $\Pi_{\mathrm{acc}}$ projects onto the accepting output
subspace. Thus, $P$ projects onto inputs with correctly initialized
workspace, while $Q$ projects onto inputs that
$V^{\mathcal{O}}$ maps into the accepting output subspace. Testing $Q$ alone would allow Merlin to exploit
improperly initialized workspace, but testing $P$ separately
need not commute with testing $Q$.

We instead combine these two projectors into a single unit,
using a variation of the two-projector construction
underlying Marriott--Watrous QMA amplification~\cite{Marriott2005}. Set
\[
    R_P := 2P-I,
    \qquad
    W_\theta := R_P e^{-i\theta Q}e^{i\theta P}.\footnote{Here, consider the spectral decomposition of $P=SDS^\dag$, where $D$ is a diagonal matrix. Then $e^{i\theta P}=Se^{i\theta D}S^\dag$, where in $e^{i\theta D}$ each value $\lambda$ along the diagonal of $D$ becomes $e^{i\theta \lambda}$.}
    \]
For a correctly initialized input $|\psi,0^a\rangle$, we have
\[
    \|Q|\psi,0^a\rangle\|^2
    =
    \Pr[V^{\mathcal{O}}\text{ accepts }|\psi\rangle].
\]
Thus, if this acceptance probability is close to one, the input
lies almost entirely in the image of $Q$. On such an input,
$e^{i\theta P}$ applies the phase $e^{i\theta}$, while
$e^{-i\theta Q}$ applies the opposite phase to almost all of
the state. The two shifts therefore nearly cancel. This
connects the action of the phase shifts to the original
verifier's acceptance probability. The factor $R_P$
is essential for soundness. Without it, every state in
$\ker P\cap\ker Q$ would be fixed by the two phase shifts
and would pass the Hadamard test with certainty.

A two-dimensional block analysis shows that the maximum
acceptance probability of the Hadamard test for $W_\theta$
is an increasing function of the original verifier's optimal
acceptance probability. In particular, an original
completeness--soundness gap between $2/3$ and $1/3$ becomes
a gap of $\Omega(\theta^2)$. This analysis maximizes over all
states of $M\otimes A$, so it already accounts for arbitrary
workspace supplied by Merlin. At the same time,
\[
    \|W_\theta-R_P\|_{\mathrm{op}}\leq 2\theta.
\]
Choosing $\theta$ sufficiently small therefore meets any
prescribed inverse-polynomial permitted distance from a reflection, while retaining an
inverse-polynomial acceptance gap. The construction is
query-efficient because
\[
    e^{-i\theta Q}
    = (V^{\mathcal{O}})^\dagger
      e^{-i\theta\Pi_{\mathrm{acc}}}V^{\mathcal{O}},
\]
and the remaining operations are oracle-free phases conditioned
on the workspace being zero. Since the verifier tests only
this one unit, commutativity is automatic.

\paragraph{Collapse with a single oracle query.}
The verifier unit $W_\theta$ used in Section~\ref{subsec:inverse-poly-reflection} for simulating a $\QMA^{\mathcal{O}}$ verifier, may contain polynomially many oracle queries.
Section~\ref{subsec:one-query-collapse} reduces this to one query when the reflection
requirement is removed entirely. We start with the same unit
$W_\theta$, now taking $\theta$ to be a fixed constant, and
apply a cyclic version of the familiar clock construction
underlying the Feynman--Kitaev circuit-to-Hamiltonian
reduction~\cite{Kitaev2002}. This cyclic version also
appears in~\cite[Section~6]{JW07}.

We arrange the unit into $L$
steps of the form $G_t\mathcal{O}^{(R)}$, where each $G_t$
is oracle free and every query uses the same register $R$.
Here $L$ is polynomially bounded. Add a "clock register"\footnote{A clock register is a quantum
register whose basis states $|t\rangle$ label the steps of a
computation. It allows the propagation unitary to select the
operation for step $t$ and advance the label to the next step,
coherently even when the clock is in a superposition of labels.} $C$
and define a unitary that performs one step and advances
the clock cyclically:
\[
    S^{\mathcal{O}}
    \bigl(|t\rangle_C\otimes|\psi\rangle\bigr)
    =
    |t+1 \bmod L\rangle_C
    \otimes G_t\mathcal{O}^{(R)}|\psi\rangle.
\]
The oracle operation is the same for every clock value.
Consequently, implementing $S^{\mathcal{O}}$ requires only
one oracle query, followed by oracle-free operations
controlled by the clock.

The clock preserves the information encoded in the spectrum
of $W_\theta$: the eigenvalues of $S^{\mathcal{O}}$ are exactly
the $L$th roots of the eigenvalues of $W_\theta$. Since a
Hadamard test accepts an eigenvector with eigenvalue
$e^{i\varphi}$ with probability $(1+\cos\varphi)/2$, taking
these roots induces an increasing transformation of the
optimal acceptance probability. The completeness--soundness
gap decreases by at most a factor of $L^2$, and hence remains
inverse polynomial.

The spectral analysis covers the entire enlarged register,
so both the workspace and the clock may be supplied by the witness.
This gives a simulation of a $\QMA^{\mathcal{O}}$ verifier using a single unit and a single oracle
query, with no trusted workspace beyond the Hadamard-test
control. The clock
construction does not retain the near-reflection guarantee
of Section~\ref{subsec:inverse-poly-reflection}.

\paragraph{Collapse with one trusted active qubit.}
Finally, Section~4.3 shows that one trusted ancilla
qubit (in addition to the control qubits used in the Hadamard test) suffices to recover $\QMA$ in the plain model and
$\QMA^{\mathcal{O}}$ in the oracle model, with an inverse-polynomial completeness-soundness gap. 

For the plain model, we rely on the result of Nagaj, Hangleiter,
Eisert, and Schwarz~\cite{Nagaj_2021} that the pinned commuting
local-Hamiltonian problem is $\QMA$-complete. In this problem,
the Hamiltonian terms commute, but the energy is minimized
only over states in which one designated qubit is fixed to
$|0\rangle$.

Their reduction starts from a $\QMA$-hard family of Hamiltonians
of the form
\[
    H = \sum_i A_i + \sum_j B_j,
\]
where the $A_i$ commute with one another and the $B_j$ commute
with one another, but terms from different families need not
commute. They add one qubit and define
\[
    \widetilde{H}
    =
    \sum_i A_i \otimes |+\rangle\langle+|
    +
    \sum_j B_j \otimes |-\rangle\langle-|.
\]
All terms of $\widetilde{H}$ commute: terms within each family
commute by assumption, and products of terms from different
families vanish because
$|+\rangle\langle+|\,|-\rangle\langle-|=0$.
Moreover, for every state $|\psi\rangle$,
\[
    \langle\psi,0|\widetilde{H}|\psi,0\rangle
    =
    \frac{1}{2}\langle\psi|H|\psi\rangle.
\]
Thus, the minimum energy of $\widetilde{H}$ with the additional
qubit pinned to $|0\rangle$ is exactly half the ground-state
energy of $H$. 

We translate this pinned Hamiltonian test into
commuting reflection tests using one trusted active qubit. We turn each local energy test into a reflection test using
an auxiliary qubit: when this qubit is initialized to
$|0\rangle$, the new test reproduces the original acceptance
probability. Since the witness supplies these auxiliary qubits,
we also check their initialization and the Hamiltonian's
pinning condition. We make the two groups of tests commute
using the same construction as above: the energy tests act
only when the trusted qubit is in $|+\rangle$, and the
initialization checks act only when it is in $|-\rangle$.
Preparing the trusted qubit in
$|0\rangle=(|+\rangle+|-\rangle)/\sqrt{2}$ makes the overall
acceptance probability the average of the probabilities
of passing the two groups of tests.

In the oracle model, a simpler construction suffices: the trusted
qubit records whether the workspace supplied by Merlin is
correctly initialized. We combine this check with the original
verifier's acceptance condition into a single exact reflection,
preserving completeness and soundness.

\subsection{AI Usage Disclosure}
We came up with the idea of proving an oracle separation between $\QIMA$ and $\QMA$, and identified Spectral Forrelation (see \cite{bostanci2026separatingquantumclassicaladvice}) and Forrelation as leading candidate problems. We then proved a separation for limited relativized analogues of $\QIMA$, and identified the reflection model as a conceptually appealing setting that could give way to a separation. ChatGPT 5.6-sol was then used for completing the separation proof. We came up with the approach for proving sections \ref{sec:relax_reflection}, \ref{subsec:inverse-poly-reflection} and \ref{subsec:ancilla-collapse}, and used ChatGPT 5.6-sol for formalizing the proofs. ChatGPT 6 Astra was then used for proving sections \ref{subsec:restricted-address-support-lower-bound} and \ref{subsec:one-query-collapse}, and for improving the writing throughout the paper. The exposition and interpretation of the results were developed by us, and we take full responsibility for them and for the correctness of the results themselves. 

\subsection{Organization}

Section 2 introduces the necessary preliminaries. Section 3 presents the proof that Forrelation $\notin\QIMA^{\cal{O}}$. Section 4 contains results which demonstrate that natural extensions of our $\QIMA^{\cal{O}}$ model collapse it to $\QMA^{\cal{O}}$, thus showing a tightness property of our model.

\section{Preliminaries}
\label{sec:preliminaries}
\subsection{The Class $\QIMA_k$}

\begin{definition}[$\QIMA_k$]\label{def:QIMA_k}
A promise problem $L=(L_{\mathrm{yes}},L_{\mathrm{no}})$ is in
$\QIMA_k$ if there exists a polynomial-time uniform family of
quantum verifiers with the following properties.

\begin{itemize}
    \item \textbf{Witness.}
    On an input $x\in\{0,1\}^n$, the verifier receives a
    $\poly(n)$-qubit quantum witness $\ket{\psi}$.

    \item \textbf{Commutation units.}
    Given an input $x\in\{0,1\}^n$, the verifier computes in classical-polynomial time $m=\poly(n)$ reflections
    \[
    R_1,\ldots,R_m,
    \]
    acting on the witness register, where each $R_i$ is $k$-local and
    the reflections mutually commute:
    \[
    [R_i,R_j]=0
    \qquad
    \text{for all }i,j.
    \]

    \item \textbf{Verification.}
    For each $R_i$, the verifier introduces a fresh control qubit
    initialized to $\ket{+}$, applies controlled-$R_i$, and measures the
    control qubit in the $X$ basis. The verifier accepts if and only if
    every measurement outcome is $+$.

    \item \textbf{Completeness and soundness.}
    
    \begin{align*}
        x\in L_{\mathrm{yes}}
        &\Longrightarrow
        \exists\,\ket{\psi}\quad
        \Pr[V_x(\ket{\psi})\text{ accepts}]\,=\, 1,\\
        x\in L_{\mathrm{no}}
        &\Longrightarrow
        \forall\,\ket{\psi}\quad
        \Pr[V_x(\ket{\psi})\text{ accepts}]\,=\, 0.
    \end{align*}
\end{itemize}
\end{definition}

\begin{remark}
Recall that we define $\QIMA=\bigcup_{k\in\mathbb{N}} \QIMA_k$. 
\end{remark}

\begin{remark} \label{remark: gate-convention}
We allow constant-local unitaries specified by exact algebraic
matrices as elementary gates, with descriptions generated uniformly
in polynomial time. Local Hamiltonian terms use the same exact
representation.
\end{remark}

\subsection{Alternating Projectors}
\label{subsec:mw-two-projector}

Marriott and Watrous designed algorithms which rely on alternating projectors for
witness-preserving QMA amplification~\cite{Marriott2005}. We discuss their approach below using the formulation via Jordan’s lemma and products of reflections from the work of Nagaj et al.\cite{NagajFastAmplification2009}.

Let $V$ be a unitary QMA verifier acting on a witness register $M$
and an $a$-qubit workspace register $A$, where $A$ is initialized to
$\lvert 0^a\rangle$. Let $\Pi_{\mathrm{acc}}$ be the projector onto the
accepting output subspace, and define
\begin{equation}
    P
    :=
    I_M \otimes \lvert 0^a\rangle\!\langle 0^a\rvert_A,
    \qquad
    Q
    :=
    V^\dagger \Pi_{\mathrm{acc}} V.
\end{equation}
Thus, $P$ projects onto properly initialized inputs, while $Q$ projects
onto inputs that $V$ maps to the accepting subspace. For every witness
$\lvert\psi\rangle\in M$,
\begin{equation}
    \Pr\!\left[V \text{ accepts } \lvert\psi\rangle\right]
    =
    \langle \psi,0^a \rvert Q \lvert \psi,0^a\rangle.
\end{equation}
Equivalently, the usual QMA acceptance operator on $M$ is
\begin{equation}
    \Omega
    :=
    \bigl(I_M\otimes \langle 0^a\rvert\bigr)
    Q
    \bigl(I_M\otimes \lvert 0^a\rangle\bigr),
\end{equation}
and the maximal eigenvalue of $\Omega$ can be described as follows 
\begin{equation}
    \lambda_{\max}(\Omega)
    =
    \lambda_{\max}
    \left(
        PQP
    \right).
\end{equation}

\noindent The interaction between $P$ and $Q$ is described by Jordan's lemma.

\begin{lemma}[Jordan's lemma for two projectors]
\label{lem:jordan-two-projectors}
Let $P$ and $Q$ be orthogonal projectors on a finite-dimensional
Hilbert space $\mathcal H$. There is an orthogonal decomposition
\begin{equation}
    \mathcal H
    =
    \bigoplus_{\ell} \mathcal H_\ell
\end{equation}
into subspaces of dimension at most two, each invariant under both
$P$ and $Q$.

On every two-dimensional block, there is an orthonormal basis in which
\begin{equation}
    P
    =
    \begin{pmatrix}
        1 & 0\\
        0 & 0
    \end{pmatrix},
    \qquad
    Q
    =
    \begin{pmatrix}
        \lambda & \sqrt{\lambda(1-\lambda)}\\
        \sqrt{\lambda(1-\lambda)} & 1-\lambda
    \end{pmatrix}
\end{equation}
for some $\lambda\in(0,1)$. On a one-dimensional block, each of
$P$ and $Q$ acts as either $0$ or $1$.
\end{lemma}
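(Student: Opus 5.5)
The statement to prove is Jordan's lemma for two projectors $P$ and $Q$. I will take the standard route through the positive operator $PQP$, restricted to $\operatorname{im}P$.

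\emph{Step 1: spectral decomposition of $PQP$ on $\operatorname{im}P$.} The operator $PQP$ is self-adjoint and positive semidefinite, its eigenvalues lie in $[0,1]$, and it maps $\operatorname{im}P$ into itself. Diagonalize it on $\operatorname{im}P$ with an orthonormal eigenbasis $\{\ket{u_j}\}$, so that $PQP\ket{u_j}=\lambda_j\ket{u_j}$.

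\emph{Step 2: the blocks spanned by each eigenvector.} For each $j$, set $\mathcal H_j:=\operatorname{span}\{\ket{u_j},Q\ket{u_j}\}$. We have
\[
\|Q\ket{u_j}\|^2=\bra{u_j}Q\ket{u_j}=\lambda_j,
\qquad
PQ\ket{u_j}=PQP\ket{u_j}=\lambda_j\ket{u_j}.
\]
From these we get:
\begin{itemize}
\item $\mathcal H_j$ is invariant under $Q$, since $Q^2=Q$.
\item $\mathcal H_j$ is invariant under $P$, since $P$ maps both spanning vectors back into $\mathcal H_j$.
\item For $j\neq k$, the blocks $\mathcal H_j$ and $\mathcal H_k$ are orthogonal, because $\bra{u_j}Q\ket{u_k}=\bra{u_j}PQP\ket{u_k}=\lambda_k\delta_{jk}$, and the remaining inner products vanish similarly.
\end{itemize}

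\emph{Step 3: the explicit matrices.} When $\lambda_j\in\{0,1\}$, either $Q\ket{u_j}=0$ or $Q\ket{u_j}=\ket{u_j}$. Then $\mathcal H_j$ is one-dimensional, and both $P$ and $Q$ act on it as $0$ or $1$. When $\lambda_j\in(0,1)$, write
\[
Q\ket{u_j}=\lambda_j\ket{u_j}+\sqrt{\lambda_j(1-\lambda_j)}\,\ket{v_j},
\]
where $\ket{v_j}\perp\ket{u_j}$ is a unit vector lying in $\ker P$. In the basis $(\ket{u_j},\ket{v_j})$, the operator $P$ is $\operatorname{diag}(1,0)$. For $Q$:
\begin{itemize}
\item Its first column is $Q\ket{u_j}$, given by the expansion above.
\item Self-adjointness fixes its off-diagonal entries.
\item Because $Q$ restricted to $\mathcal H_j$ is a rank-one projector (trace $1$, idempotent), the remaining entry must be $1-\lambda_j$.
\end{itemize}
This yields exactly the matrix in the statement.

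\emph{Step 4: the complement.} Let $\mathcal K:=(\bigoplus_j\mathcal H_j)^\perp$. This subspace is invariant under $P$ and $Q$, since both are self-adjoint. It contains no nonzero vector of $\operatorname{im}P$, because all of $\operatorname{im}P$ is spanned by the $\ket{u_j}$; hence $P=0$ on $\mathcal K$. Diagonalizing $Q$ on $\mathcal K$ then splits $\mathcal K$ into one-dimensional blocks on which $P=0$ and $Q\in\{0,1\}$.

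The main obstacle is bookkeeping rather than conceptual. The two points to verify carefully are the pairwise orthogonality of the blocks when eigenvalues repeat, which holds because we fixed an orthonormal eigenbasis rather than eigenspaces, and the invariance of $\mathcal K$ under both projectors.
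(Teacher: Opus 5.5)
Your proof is correct, and the paper gives no proof to compare it with. The paper quotes Jordan's lemma as a standard fact, in the formulation of Nagaj et al.\ following Marriott--Watrous, and uses only its conclusion. Your argument therefore supplies a self-contained proof, via the spectral decomposition of $PQP|_{\operatorname{im}P}$, where the paper relies on citation.

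Two steps should each get an explicit line of justification.

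\emph{Step~3.} You assert that $Q|_{\mathcal H_j}$ has trace $1$ without a reason. Either route works:
\begin{itemize}
\item For $0<\lambda_j<1$ we have $Q\ket{u_j}\neq 0$ and $Q\ket{u_j}\neq\ket{u_j}$, so the restricted projector is neither $0$ nor $I$, hence has rank one.
\item Compare the $(2,1)$ entries of $Q^2=Q$. This gives $\sqrt{\lambda_j(1-\lambda_j)}\,(\lambda_j+Q_{22})=\sqrt{\lambda_j(1-\lambda_j)}$, so $Q_{22}=1-\lambda_j$.
\end{itemize}

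\emph{Step~4.} The fact that $\mathcal K$ contains no nonzero vector of $\operatorname{im}P$ gives $P|_{\mathcal K}=0$ only when combined with invariance, since $P\mathcal K\subseteq\mathcal K\cap\operatorname{im}P=\{0\}$. More directly, $\operatorname{im}P\subseteq\bigoplus_j\mathcal H_j$ gives $\mathcal K\perp\operatorname{im}P$, that is, $\mathcal K\subseteq\ker P$.

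Your construction also has a side benefit: the block parameters are, by construction, the eigenvalues with multiplicity of $PQP|_{\operatorname{im}P}$, with $\lambda_j=\bra{u_j}Q\ket{u_j}$. The paper uses exactly this identification without comment in two places:
\begin{itemize}
\item when it reads $\lambda$ as the acceptance probability of $\ket{\psi}_M\ket{0^a}_A$;
\item in Lemma~\ref{lem:small-phase-mw}, when it maximizes over Jordan blocks to get $\lambda_{\max}(E_\theta)=G_\theta(\omega)$ with $\omega=\lambda_{\max}(PQP|_{\operatorname{im}P})$.
\end{itemize}
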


In the QMA interpretation, $\lambda$ is the acceptance probability for the unit vector \(|p\rangle=|\psi\rangle_M|0^a\rangle_A\) spanning \(\operatorname{im}P\) within that block. Thus, \(\lambda=\langle p|Q|p\rangle\). The amplification algorithms by Marriott--Watrous rely on alternating application of $P$ and $Q$, and can be analyzed using the Jordan decomposition.\footnote{For a detailed account of the amplification algorithms, see \cite{Marriott2005} or \cite{NagajFastAmplification2009}.}

The same geometry can be
described coherently using the reflections
\begin{equation}
    R_P := 2P-I,
    \qquad
    R_Q := 2Q-I.
\end{equation}
Writing $\lambda=\cos^2\varphi$, the product $R_QR_P$ has eigenvalues
$e^{\pm 2i\varphi}$ on the two-dimensional block which corresponds to $\lambda$. Thus,
the original QMA acceptance probability is encoded in the rotation
generated by the two reflections.

Equivalently, the reflections arise from full selective phase shifts:
\begin{equation}
    e^{i\pi P}=-R_P,
    \qquad
    e^{-i\pi Q}=-R_Q,
\end{equation}
and hence
\begin{equation}
    e^{-i\pi Q}e^{i\pi P}
    =
    R_QR_P.
\end{equation}
In Section~\ref{subsec:inverse-poly-reflection}, we use the same
two-projector mechanism, but replace these full phase shifts by small
phase shifts.

\section{Oracle Separation}\label{sec:separation}

In Appendix~\ref{sec:forrelation_in_bqp}, we present the simple $\BQP^{\cal{O}}$ algorithm for Forrelation that was presented by Aaronson and Ambainis~\cite{aaronson2014forrelationproblemoptimallyseparates}. We can thus state the following proposition:

\begin{proposition}[Forrelation$\,\in \BQP^{\cal{O}}$]
The Forrelation promise problem belongs to \(\BQP^{O}\).
\end{proposition}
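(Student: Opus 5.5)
The plan is to give the Aaronson--Ambainis two-query circuit explicitly, compute its acceptance probability exactly, and then amplify by repetition. First, a single run of the circuit. Start with $n$ qubits in $\ket{0^n}$ and apply $H^{\otimes n}$ to obtain $\frac{1}{\sqrt N}\sum_x\ket{x}$. Query the $g$-branch of $O_{f,g}$ (selector qubit set to $\ket{1}$), which gives $\frac{1}{\sqrt N}\sum_x g(x)\ket{x}$. Apply $H^{\otimes n}$ again. Then query the $f$-branch (selector set to $\ket{0}$), apply $H^{\otimes n}$ once more, and measure. Accept iff the outcome is $0^n$.

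The amplitude of $\ket{0^n}$ is
\[
\frac{1}{\sqrt N}\sum_y f(y)\,\frac{1}{\sqrt N}\sum_x (-1)^{x\cdot y}g(x)\cdot\frac{1}{\sqrt N}
=\frac{1}{N}f^{\mathsf T}H_Ng=\Phi(f,g).
\]
So a single run accepts with probability exactly $\Phi(f,g)^2$. This uses two quantum queries and $O(n)$ gates. The selector qubit is fixed classically, so no controlled queries are needed.

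Second, separate the two cases:
\[
\textsc{Yes}:\ \Pr[\text{accept}]\ge\alpha^2,\qquad
\textsc{No}:\ \Pr[\text{accept}]\le\beta^2 .
\]
Since $0<\beta<\alpha$ are fixed constants, the gap $\alpha^2-\beta^2$ is a positive constant. This is the only place the sign of $\Phi$ could matter, and squaring makes it harmless. The Yes case uses $\Phi\ge\alpha>0$, and the No case uses $|\Phi|\le\beta$.

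Third, amplify. Run the circuit $K=O\!\left((\alpha^2-\beta^2)^{-2}\right)=O(1)$ times. Accept iff the empirical acceptance fraction exceeds $(\alpha^2+\beta^2)/2$. By a Chernoff/Hoeffding bound, this gives error at most $1/3$ on both Yes and No instances, using $O(1)$ queries and polynomial time overall.

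Finally, state how this sits inside $\BQP^{\mathcal O}$. An input of length $n$ specifies the slice of $\mathcal O$ encoding $(f,g)$ on $n$ bits, and the algorithm queries that slice. The only mild obstacle is keeping this oracle indexing consistent with the later diagonalization. It is routine: the machine only needs $n$ from the input length and then addresses the corresponding slice.
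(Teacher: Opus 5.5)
Your proposal is correct and follows essentially the same argument as the paper (Appendix~B): the Aaronson--Ambainis circuit $H_N, O_g, H_N, O_f, H_N$ has amplitude $\Phi(f,g)$ on $0^n$, so acceptance probability $\Phi(f,g)^2$ separates $\alpha^2$ from $\beta^2$, and constantly many repetitions decide the problem. The explicit Hoeffding threshold, fixing the selector qubit of $O_{f,g}$ classically, and the slice-indexing remark fill in details the paper leaves implicit.
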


Next, we provide the complete argument for the query lower bound for a $\QIMA^{\cal{O}}$-verifier for Forrelation.

\subsection{Perfect Forrelation Pairs} \label{subsec:perfect_forr_pairs}

The lower-bound argument uses a perfect Forrelation pair. Girish and Servedio point out that a Boolean function $f$ has a companion Boolean function $g$ such that $\Phi(f,g)=1$, if and only if $f$ is a bent function~\cite{girish2025forrelationextremallyhard} (section 1.4.2). For completeness, we give an explicit construction using a
simple instance of the classical Maiorana--McFarland family
of bent functions~\cite{Dillon1974}, also used by Girish and
Servedio~\cite{girish2025forrelationextremallyhard}. Bent functions were defined by Rothaus~\cite{rothaus1976}.

\begin{definition}[Bent functions] 
For a sign function $u:\{0,1\}^n\to\{-1,+1\}$, define its normalized
Walsh--Hadamard transform by
\[
    \widehat u(z):=(H_Nu)(z)
      =\frac{1}{\sqrt N}
        \sum_{x\in\{0,1\}^n}(-1)^{x\cdot z}u(x).
\]
The function $u$ is called \emph{bent} if
\[
    |\widehat u(z)|=1
    \qquad\text{for every }z\in\{0,1\}^n.
\]
\end{definition}
Equivalently, every unnormalized Walsh coefficient of $u$ has absolute value
$\sqrt N$.  In particular, if $u$ is bent, then every entry of $H_Nu$ is either
$-1$ or $+1$, so $H_Nu$ is itself the truth table of a Boolean sign function.

This gives a direct correspondence between bent functions and perfect
Forrelation pairs~\cite{girish2025forrelationextremallyhard}.  If $u$ is bent and $v:=H_Nu$, then
$v\in\{-1,+1\}^N$ and
\[
    \Phi(v,u)
      =\frac{1}{N}v^{\mathsf T}H_Nu
      =\frac{1}{N}v^{\mathsf T}v
      =1.
\]
Conversely, suppose that $u,v\in\{-1,+1\}^N$ satisfy $\Phi(v,u)=1$.  Since
$H_N$ is orthogonal,
\[
    v^{\mathsf T}H_Nu
      \leq \|v\|_2\,\|H_Nu\|_2
      =N.
\]
Equality holds, and both vectors have Euclidean norm $\sqrt N$; equality in
Cauchy--Schwarz therefore implies $v=H_Nu$.  Hence $H_Nu$ is Boolean and $u$ is
bent.

For completeness, we use a simple instance of the
Maiorana--McFarland family of bent functions~\cite{Dillon1974}. When $n$ is even one may take the explicit quadratic bent
function
\[
    g^{\star}(x)
      :=(-1)^{x_1x_2+x_3x_4+\cdots+x_{n-1}x_n}.
\]
Indeed, for every $a,b\in\{0,1\}$,
\[
    \frac{1}{2}
    \sum_{r,s\in\{0,1\}}
      (-1)^{rs+ar+bs}
      =(-1)^{ab}.
\]
The Walsh--Hadamard transform of $g^{\star}$ therefore factors over the
$n/2$ disjoint pairs of variables, giving
\[
\begin{aligned}
    (H_Ng^{\star})(z)
      &=\prod_{j=1}^{n/2}
        \left(
        \frac{1}{2}
        \sum_{r,s\in\{0,1\}}
        (-1)^{rs+z_{2j-1}r+z_{2j}s}
        \right) \\
      &=\prod_{j=1}^{n/2}(-1)^{z_{2j-1}z_{2j}}
       =g^{\star}(z).
\end{aligned}
\]
Thus $g^{\star}$ is self-dual and
$\Phi(g^{\star},g^{\star})=1$.

\begin{remark}
Perfect pairs of this form can exist only when \(n\) is even. Indeed, if
\[
H_Ng\in\{-1,+1\}^N,
\]
then every unnormalized Walsh coefficient of \(g\) equals
\[
\pm\sqrt{N}
=
\pm 2^{n/2}.
\]
An unnormalized Walsh coefficient is an integer, whereas \(2^{n/2}\) is
not an integer when \(n\) is odd.

This causes no difficulty for the oracle separation. Proving the lower
bound for every even input length already rules out a polynomial-query
verifier.
\end{remark}

Next, we prove that for a perfect Forrelation pair $(g,h)$, flipping bits in the truth table of one of the functions causes the Forrelation value to change by an additive factor that is linear in the number of bits flips. This lemma will be used in our oracle separation proof.

\begin{lemma}[Forrelation under Hamming perturbations] \label{lemma:forrelation-under-hamming}
Suppose that
\[
g,h:\{0,1\}^n\to\{-1,+1\}
\]
satisfy
\[
H_Ng=h.
\]
Let \(f:\{0,1\}^n\to\{-1,+1\}\) differ from \(h\) on exactly \(k\)
inputs. Then
\[
\Phi(f,g)=1-\frac{2k}{N}.
\]
\end{lemma}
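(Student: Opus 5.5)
The plan is to compute $\Phi(f,g)$ directly from the definition, using $H_Ng=h$ to turn it into an inner product between $f$ and $h$. By Definition~\ref{def:forrelation},
\[
\Phi(f,g)=\frac{1}{N}f^{\mathsf T}H_Ng=\frac{1}{N}f^{\mathsf T}h=\frac{1}{N}\sum_{x\in\{0,1\}^n}f(x)h(x).
\]
Both $f$ and $h$ take values in $\{-1,+1\}$. Hence each summand $f(x)h(x)$ equals $+1$ when $f(x)=h(x)$ and $-1$ when $f(x)\neq h(x)$.

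Next I count the summands. By hypothesis exactly $k$ inputs have $f(x)\neq h(x)$, so the remaining $N-k$ inputs agree. The sum is therefore $(N-k)-k=N-2k$, and dividing by $N$ gives
\[
\Phi(f,g)=1-\frac{2k}{N}.
\]

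There is no real obstacle; the only point to check is the orientation of the arguments. The lemma is stated for $\Phi(f,g)$ with $H_N$ applied to $g$, which matches the hypothesis $H_Ng=h$. The symmetry of $H_N$ would in any case let one transpose $f^{\mathsf T}H_Ng=g^{\mathsf T}H_Nf$ if needed. Note also that the hypothesis $H_Ng=h$ with $h$ Boolean is exactly the perfect-pair condition discussed above, but the computation does not use bentness beyond this identity.
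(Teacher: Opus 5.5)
Your proposal is correct and matches the paper's proof: substitute $H_Ng=h$ to reduce $\Phi(f,g)$ to $\frac{1}{N}f^{\mathsf T}h$. Then count the $N-k$ agreeing coordinates and the $k$ disagreeing ones to get $f^{\mathsf T}h=N-2k$.
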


\begin{proof}
Since \(f(x),h(x)\in\{-1,+1\}\),
\[
f(x)h(x)
=
\begin{cases}
1,&\text{if }f(x)=h(x),\\
-1,&\text{if }f(x)\neq h(x).
\end{cases}
\]
There are \(N-k\) coordinates on which \(f\) and \(h\) agree, and \(k\)
coordinates on which they disagree. Hence
\[
\begin{aligned}
f^{\mathsf T}h
&=
\sum_{x\in\{0,1\}^n}f(x)h(x)\\
&=
(N-k)-k\\
&=
N-2k.
\end{aligned}
\]
Using \(H_Ng=h\), we obtain
\[
\begin{aligned}
\Phi(f,g)
&=
\frac{1}{N}f^{\mathsf T}H_Ng\\
&=
\frac{1}{N}f^{\mathsf T}h\\
&=
1-\frac{2k}{N}.
\end{aligned}
\]
\end{proof}

\subsection{Forrelation Query Lower-Bound}

\begin{theorem} \label{thm:qima_lower_bound}
Consider the Forrelation promise problem from Definition~\ref{def:forrelation}, with parameters $(\alpha,\beta)$.
Suppose a \(\QIMA^{\mathcal{O}}\) verifier $V^{\mathcal{O}}$ decides this problem with the soundness and completeness parameters
\[
c(n)>s(n).
\]
Then for even values of $n$,
\[
C(n)+T(n)\ge \beta N-O(1).
\]
In particular, no such verifier can have both classical and quantum polynomial query complexity.
\end{theorem}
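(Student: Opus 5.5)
We follow the polynomial method sketched in the technical overview. Fix an even $n$ and take the perfect pair $(g^\star,h)$ with $h:=H_Ng^\star=g^\star$ from Section~\ref{subsec:perfect_forr_pairs}. Run the deterministic preprocessing on $O_{g^\star,h}$ and let $\tau$ be the resulting transcript. Let $F$ be the set of positions of the second function that are queried, with $r:=|F|\le C(n)$. The oracle calls in the combined oracle $O_{f,g}$ place the perturbed function in the slot that the definition of $\Phi$ requires. We therefore hold $g^\star$ fixed and vary only the function $f$ that plays the role of $h$, exactly as in Lemma~\ref{lemma:forrelation-under-hamming}. Every $f$ with $f|_F=h|_F$ yields the same transcript $\tau$, and hence the same unit descriptions: reflections $R_1^f,\dots,R_a^f$ making $t_1,\dots,t_a$ queries, and oracle-free units $U_1,\dots,U_b$.

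For promised $f$ in this family, set $\Pi_i^f=(I+R_i^f)/2$, $P_f=\prod_i\Pi_i^f$ and $B=\prod_jE(U_j)$. Commutativity makes $P_f$ a projector commuting with $B$, and the acceptance operator is $A_f=P_fB$. This holds because the Hadamard tests are performed sequentially on the witness, and with commuting units their Kraus operators multiply to $P_fB$ up to commuting factors.

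Let $\theta=(c+s)/2$, $Q_\theta=\mathbf 1_{(\theta,1]}(B)$ and $p(f)=\Tr(Q_\theta P_f)$. On YES instances we have $p(f)\ge1$, since a common eigenvector realizes $\lambda_{\max}(P_fB)\ge c>\theta$. On NO instances we have $p(f)=0$: a nonzero vector in $\operatorname{im}P_f\cap\operatorname{im}Q_\theta$ would contain a $B$-eigenvector with eigenvalue $>\theta\ge s$, contradicting soundness.

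The crucial point is that $p$ is defined by the same formula for every $f$, including unpromised $f$. Each entry of $R_i^f$ is a polynomial of degree $\le t_i$ in the truth-table values of $f$: every query, controlled or not, is linear in the entries of $f$, and the remaining gates are fixed. Expanding the product, $p$ is a polynomial of degree $\le T_\tau:=\sum t_i\le T(n)$, which we multilinearize using $f(x)^2=1$. The YES/NO dichotomy is needed only at promised points, and these are exactly the points we evaluate.

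We then substitute $f(x)=h(x)(1-2y_x)$ for $x\notin F$, with $y_x\in\{0,1\}$, and fix $y_x=0$ on $F$. Averaging over the symmetric group on the $N-r$ free coordinates (Minsky--Papert) gives a univariate polynomial $q$ with $\deg q\le T_\tau$ and $q(k)=\mathbb E_{|y|=k}[p]$ for integers $0\le k\le N-r$. By Lemma~\ref{lemma:forrelation-under-hamming}, $\Phi(f,g^\star)=1-2k/N$.

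At $k=0$ we get $f=h$ with $\Phi=1\ge\alpha$, so $q(0)=p(h)\ge1$ and $q\ne0$. For every integer $k\in[\frac{1-\beta}{2}N,\frac{1+\beta}{2}N]$ with $k\le N-r$, every such $f$ satisfies $|\Phi|\le\beta$, so it is a NO instance and $q(k)=0$. That interval contains at least $\beta N-1$ integers, and the constraint $k\le N-r$ removes at most $r$ of them. Hence $q$ has at least $\beta N-r-1$ roots, and
\[
T(n)\ge T_\tau\ge\beta N-r-1\ge\beta N-C(n)-O(1).
\]
The final sentence of the statement follows because $\beta N$ is exponential in $n$.

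The main obstacle is to justify the degree bound and the dichotomy simultaneously. The dichotomy uses the commutation and reflection structure only on promised instances. The polynomial, however, must be one fixed algebraic expression in the truth-table entries, valid for all $f$ in the leaf, so that symmetrization applies. We handle this by defining $p$ formally through the circuit matrices; the two facts are then required only at the points where they are used. A further check is that the averages $q(k)$ are evaluated only at integers in $[0,N-r]$, where they are genuine means of $p$ over promised inputs.
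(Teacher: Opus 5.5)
Your proposal is correct and follows essentially the same route as the paper's proof. You freeze the preprocessing leaf on a perfect pair, form $p(f)=\Tr(Q_\theta P_f)$ and derive the YES/NO dichotomy from commutativity and the reflection structure, bound $\deg p\le T_\tau$ by expanding the circuit matrices, and then symmetrize over Hamming layers of the free coordinates and count roots in the NO interval. Perturbing the second oracle slot instead of the first, as the paper does, is harmless because $H_N$ is symmetric, so $\Phi(g^\star,f)=\Phi(f,g^\star)$; and your remark that $p$ is a single formula valid also at unpromised points matches the paper's observation that the polynomial representation holds on the whole subcube.
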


\begin{proof}
A verifier's oracle access is as in Definition~\ref{def:oracle-access}. Assume first that \(n\) is even. Choose a bent function
\[
g_\star:\{0,1\}^n\to\{-1,+1\}
\]
such that
\[
h:=H_Ng_\star\in\{-1,+1\}^N.
\]
Then
\[
\Phi(h,g_\star)
= \frac1N h^{\mathsf T}H_Ng_\star
= \frac1N h^{\mathsf T}h
=1,
\]
so \((h,g_\star)\) is a \textsc{Yes} instance.

Run the classical preprocessing phase of $V^{\mathcal{O}}$ on \((h,g_\star)\), and let \(\tau\) be the resulting adaptive query transcript. Let
\[
F\subseteq\{0,1\}^n
\]
be the set of positions of \(f\) queried along this transcript, and write $
r:=|F|\le C(n)$. We keep \(g=g_\star\) fixed and consider any \(f\) satisfying
\begin{equation}
f(x)=h(x)
\qquad
\text{for every }x\in F.
\label{eq:fixed-subcube}
\end{equation}
Every such \(f\) produces the same preprocessing transcript \(\tau\): inductively, every classical query receives the same answer as on \((h,g_\star)\), and therefore the next query is also the same. Consequently, throughout this subcube the preprocessing algorithm outputs exactly the same quantum circuit.

Partition this circuit into oracle-querying reflections
\[
R_1^f,\ldots,R_a^f
\]
and oracle-free unitaries
\[
U_1,\ldots,U_b.
\]

For a unitary \(W\), define the operator

\[E_W:=\frac{2I+W+W^\dagger}{4}.\]

Then, the \(+\) outcome of its controlled-Hadamard test on a pure state $\ket{\psi}$ is obtained with probability

\[\bra{\psi}E_W\ket{\psi}.\]

For a reflection $R_i^f$, this operator reduces to the projector
\[
\Pi_i^f
:=
E_{R_i^f}
=
\frac{I+R_i^f}{2}.
\]
Define
\begin{equation}
P_f
:=
\prod_{i=1}^a \Pi_i^f
\label{eq:reflection-projector}
\end{equation}
and
\begin{equation}
B
:= \prod_{j=1}^b E_{U_j}= 
\prod_{j=1}^b
\frac{2I+U_j+U_j^\dagger}{4}.
\label{eq:oracle-free-operator}
\end{equation}

On every promised input, all completed units are commuting unitaries. They are therefore simultaneously diagonalizable. In particular, all the factors in \eqref{eq:reflection-projector} and \eqref{eq:oracle-free-operator} commute, and the total accepting POVM element is
\begin{equation}
A_f=P_fB.
\label{eq:total-acceptance}
\end{equation}

That is, for a witness state $\ket{\psi}$, its probability of acceptance by the verifier is 

\[\bra{\psi}A_f\ket{\psi}\]

Moreover, \(P_f\) is an orthogonal projector and
\[
0\le B\le I.
\]

Choose a threshold
\[
\theta:=\frac{c(n)+s(n)}{2},
\]
and let
\[
Q
:=
\mathbf 1_{(\theta,1]}(B)
\]
be the spectral projector of \(B\) onto eigenvalues larger than \(\theta\). The operator \(Q\) need not be efficiently implementable; it is used only in the lower-bound argument. Since \(B\) is fixed throughout the transcript \(\tau\), so is \(Q\).

For every promised \(f\) satisfying \eqref{eq:fixed-subcube}, the projector \(P_f\) commutes with \(B\), and hence also with \(Q\). Indeed, \(Q=\mathbf 1_{(\theta,1]}(B)\) is a spectral projector of \(B\). Since \(P_f\) commutes with \(B\), it preserves every eigenspace of \(B\). It therefore preserves the direct sum of the eigenspaces of \(B\) with eigenvalues in \((\theta,1]\), as well as its orthogonal complement. Hence \(P_f\) commutes with the projector onto this subspace, namely \(Q\). Thus, $QP_f$ is a projector.

Define
\[
p(f)
:=
\operatorname{Tr}(QP_f).
\]

We claim that
\begin{equation}
\begin{aligned}
(f,g_\star)\in\textsc{Yes}
&\Longrightarrow p(f)\ge1,\\
(f,g_\star)\in\textsc{No}
&\Longrightarrow p(f)=0.
\end{aligned}
\label{eq:yes-no-separation}
\end{equation}

Indeed, suppose first that \((f,g_\star)\) is a \textsc{Yes} instance. By completeness and \eqref{eq:total-acceptance},
\begin{equation}\label{eq:completness}
\lambda_{\max}(P_fB)\ge c(n).
\end{equation}
Since $P_f$ is a projector, its only eigenvalues can be $0,1$. Additionally, \(P_f\) and \(B\) commute, and due to \eqref{eq:completness} they have a common eigenvector \(\ket{\psi}\) satisfying
\[
P_f\ket{\psi}=\ket{\psi},
\qquad
B\ket{\psi}=\lambda\ket{\psi}
\]
for some
\[
\lambda\ge c(n)>\theta.
\]
Therefore \(Q\ket{\psi}=\ket{\psi}\), so \(QP_f\neq0\). Thus, \(QP_f\) is a non-zero projector, and hence
\[
p(f)=\operatorname{Tr}(QP_f)\ge1.
\]

Now suppose that \((f,g_\star)\) is a \textsc{No} instance. Soundness gives
\[
\lambda_{\max}(P_fB)\le s(n).
\]

If $QP_f\neq0$, then
$\operatorname{im}P_f\cap\operatorname{im}Q$ is nonzero.
Since $P_f$ commutes with $B$, this intersection is invariant
under $B$ and contains a normalized eigenvector $|\psi\rangle$
of $B$ with eigenvalue $\lambda>\theta$.
Thus $P_f|\psi\rangle=|\psi\rangle$ and
\[
P_fB|\psi\rangle=\lambda|\psi\rangle,
\qquad \lambda>\theta>s(n),
\]
contradicting soundness. Thus $QP_f=0$, proving \eqref{eq:yes-no-separation}.

We next bound the degree of \(p(f)\) as a polynomial in the oracle values. Fix the preprocessing transcript \(\tau\). The quantum circuit output by the preprocessing algorithm may depend on the classical oracle answers appearing in \(\tau\). However, every \(f\) satisfying \eqref{eq:fixed-subcube} produces exactly the same transcript \(\tau\). Therefore, throughout the subcube defined by \eqref{eq:fixed-subcube}, the following are fixed:

\begin{itemize}
\item the sequence of gates in every generated circuit;
\item the subsets of witness qubits on which they act;
\item the query registers of the quantum oracle calls.
\end{itemize}

Only the actions of the quantum oracle calls continue to depend on \(f\) even under the fixed $\tau$. Suppose that the reflection \(R_i^f\) makes \(t_i\) quantum oracle queries. On the fixed transcript \(\tau\), write its circuit as
\begin{equation}
R_i^f
=
V_{i,t_i}^{\tau}
O_{f,g_\star}^{(Q_{i,t_i})}
V_{i,t_i-1}^{\tau}
\cdots
O_{f,g_\star}^{(Q_{i,1})}
V_{i,0}^{\tau}.
\label{eq:fixed-circuit-decomposition}
\end{equation}
Here:

\begin{itemize}
\item each \(V_{i,j}^{\tau}\) is a fixed unitary determined by the transcript \(\tau\);
\item \(V_{i,j}^{\tau}\) may act on an arbitrary subset of the witness qubits;
\item \(O_{f,g_\star}^{(Q_{i,j})}\) denotes a phase-oracle call on an arbitrary designated query subsystem \(Q_{i,j}\);
\item every operation is understood to act as the identity on all untouched qubits.
\end{itemize}

Fix the computational basis of the entire witness register, including all registers on which the query operations act. A phase-oracle call is diagonal in this basis. Thus, for every basis state \(\ket{y}\),
\begin{equation}
\label{eq:oracle_on_comp_states}
O_{f,g_\star}^{(Q_{i,j})}\ket{y}
=
\chi_{i,j}(y)\ket{y},
\end{equation}

Here $\chi_{i,j}(y)$ is either $f(x)$ or $g_\star(x)$,
where $x$ is the address specified by the query register in
$|y\rangle$, or $1$ if the query is controlled and its
controls are not all in state $|1\rangle$.
Since $g_\star$ is fixed, each such factor is either a
variable $f(x)$ or a constant.

We next apply the polynomial method for quantum query complexity
\cite{BealsPolynomialMethod1997}.  The standard observation underlying this method
is that the amplitudes of a $t$-query quantum computation are polynomials
of degree at most $t$ in the oracle values.  We use the same observation
entrywise for each oracle-querying reflection. We observe that

\[
\begin{aligned}
\bra{u}R_i^f\ket{v}
=
\sum_{y_0,\ldots,y_{t_i-1}}
&\bra{u}V_{i,t_i}^{\tau}
O_{f,g_\star}^{(Q_{i,t_i})}
\ket{y_{t_i-1}}
\\
&\quad\cdot
\prod_{j=1}^{t_i-1}
\bra{y_j}V_{i,j}^{\tau}
O_{f,g_\star}^{(Q_{i,j})}
\ket{y_{j-1}}
\cdot
\bra{y_0}V_{i,0}^{\tau}\ket{v}.
\end{aligned}
\]

By applying \eqref{eq:oracle_on_comp_states}, and setting \(y_{-1}:=v\) and \(y_{t_i}:=u\) for notational convenience, the preceding
expression becomes
\[
\bra{u}R_i^f\ket{v}
=
\sum_{y_0,\ldots,y_{t_i-1}}
\left(
\prod_{j=0}^{t_i}
\bra{y_j}V_{i,j}^{\tau}\ket{y_{j-1}}
\right)
\left(
\prod_{j=1}^{t_i}
\chi_{i,j}(y_{j-1})
\right).
\]

For each summand, its first product consists entirely of fixed matrix entries
of the unitaries \(V_{i,j}^{\tau}\), while its second product contains
exactly one oracle value for each of the \(t_i\) oracle calls.
 All matrix entries of the \(V_{i,j}^{\tau}\)'s are constants once \(\tau\) is fixed. The factors involving \(g_\star\) are also constants and may be absorbed into the coefficients. Consequently, \(\bra{u}R_i^f\ket{v}\) is a polynomial in the values
\[
\{f(x):x\in\{0,1\}^n\}
\]
of total degree at most \(t_i\).

A single oracle position may be queried more than once, producing powers such as \(f(x)^r\). Since
\[
f(x)\in\{-1,+1\},
\qquad
f(x)^2=1,
\]
the polynomial may be multilinearized without increasing its degree.

Let
\[
T_\tau:=\sum_{i=1}^a t_i\le T(n).
\]
Every matrix entry of
\[
P_f
=
\prod_{i=1}^a\frac{I+R_i^f}{2}
\]
is therefore a polynomial of degree at most \(T_\tau\). Since \(Q\) is fixed throughout the preprocessing leaf \(\tau\), its matrix entries are constants independent of \(f\). Multiplication by \(Q\) and taking the trace do not increase the degree. Hence
\[
p(f)
=
\operatorname{Tr}(QP_f)
\]
is a multilinear polynomial in the values of \(f\) satisfying
\[
\deg p\le T_\tau.
\]

Let
\[
\overline F:=\{0,1\}^n\setminus F,
\qquad
M:=|\overline F|=N-r.
\]
For each \(x\in\overline F\), introduce a bit \(b_x\in\{0,1\}\), and define
\begin{equation}
f_b(x)
=
\begin{cases}
h(x),&x\in F,\\
h(x)(1-2b_x),&x\in\overline F.
\end{cases}
\label{eq:subcube-parameterization}
\end{equation}
Every \(f_b\) agrees with \(h\) on all classical queries made in \(\tau\), and hence remains in the same preprocessing leaf.

Recall that the oracle values \(f(x)\) are sign variables, satisfying
\(f(x)^2=1\). We reparameterize the unfixed oracle values by Boolean
variables \(b_x\in\{0,1\}\).
This is an affine substitution, so substituting \(f=f_b\) into \(p(f)\)
does not increase its total degree. The resulting polynomial is defined
on the Boolean cube \(\{0,1\}^{M}\), where \(b_x^2=b_x\). It may therefore
be multilinearized, without changing its values on the Boolean cube or
increasing its degree.

Write the multilinearized polynomial as

\[
\widetilde p(b)
=
\sum_{\substack{S\subseteq\overline F\\ |S|\le T_\tau}}
c_S\prod_{x\in S}b_x.
\]

We now symmetrize $\widetilde p$ over Hamming layers, following the
Minsky--Papert symmetrization method
\cite{MinskyPerceptrons,BealsPolynomialMethod1997}. For every \(k\in\{0,\ldots,M\}\), define
\[
q(k)
:=
\mathbb E_{\substack{b\in\{0,1\}^{M}\\ |b|=k}}
\widetilde p(b).
\]

If \(S\subseteq\overline F\) has size \(d\), then
\[
\mathbb E_{|b|=k}
\left[
\prod_{x\in S}b_x
\right]
=
\frac{\binom{M-d}{k-d}}{\binom Mk}
=
\frac{\binom kd}{\binom Md}.
\]

Indeed, the product \(\prod_{x\in S} b_x\) equals \(1\) exactly when
\(b_x=1\) for every \(x\in S\). Among the \(\binom{M}{k}\) binary strings
of weight \(k\), precisely \(\binom{M-d}{k-d}\) have this property: the
\(d=|S|\) positions in \(S\) are fixed to \(1\), and the remaining
\(k-d\) ones may be chosen among the other \(M-d\) positions. 

Therefore
\begin{equation}
q(k)
:= \mathbb{E}_{|b|=k} \left[
\widetilde{p}(b)
\right]=
\sum_{\substack{S\subseteq\overline F\\ |S|\le T_\tau}}
c_S
\frac{\binom{k}{|S|}}{\binom{M}{|S|}}.
\label{eq:symmetrized-polynomial}
\end{equation}
For fixed \(d\),
\[
\binom{k}{d}
=
\frac{k(k-1)\cdots(k-d+1)}{d!}
\]
is an ordinary polynomial in \(k\) of degree \(d\). Thus \eqref{eq:symmetrized-polynomial} defines a univariate polynomial over \(\mathbb C\) satisfying
\begin{equation}
\deg q\le T_\tau.
\label{eq:symmetrized-degree}
\end{equation}

If \(|b|=k\), then \(f_b\) differs from \(h\) on exactly \(k\) positions. Since \(h=H_Ng_\star\),
\[
\Phi(f_b,g_\star)
=
\frac1N f_b^{\mathsf T}h
=
1-\frac{2k}{N}.
\]

Consequently, every integer \(k\) satisfying
\begin{equation}
\frac{1-\beta}{2}N
\le k\le
\frac{1+\beta}{2}N
\label{eq:no-instance-interval}
\end{equation}
and
\[
k\le M
\]
corresponds entirely to \textsc{No} instances. By \eqref{eq:yes-no-separation},
\[
q(k)=0
\]
for every such integer.

The interval in \eqref{eq:no-instance-interval} contains \(\beta N-O(1)\) integers. Intersecting it with \(\{0,\ldots,M\}\) removes at most
\[
N-M=r
\]
integers. Hence \(q\) has at least
\begin{equation}
\beta N-r-O(1)
\label{eq:root-count}
\end{equation}
distinct roots.

At \(k=0\), we recover the canonical \textsc{Yes} instance \((h,g_\star)\). By \eqref{eq:yes-no-separation},
\begin{equation}
q(0)=p(h)\ge1.
\label{eq:nonzero-polynomial}
\end{equation}
Thus \(q\) is not the zero polynomial.

A nonzero univariate polynomial over \(\mathbb C\) has at most as many distinct roots as its degree. Combining \eqref{eq:symmetrized-degree}, \eqref{eq:root-count}, and \eqref{eq:nonzero-polynomial}, we obtain
\[
T_\tau
\ge
\beta N-r-O(1).
\]
Since
\[
T_\tau\le T(n)
\qquad\text{and}\qquad
r\le C(n),
\]
it follows that
\[
C(n)+T(n)\ge\beta N-O(1).
\]

Since \(N=2^n\), the verifier cannot have both polynomial classical-query complexity and polynomial quantum-query complexity.
\end{proof}

\subsection{Invariance under Negligible Deviations from Exact Reflections} \label{sec:relax_reflection}

In this formulation, each oracle-containing unit need only
be within negligible distance in operator norm from an exact
companion reflection. Under the requirements below, these
companions can themselves serve as the verifier's units, with only negligible error.
We therefore observe that this relaxation leaves the class $\QIMA^{\mathcal{O}}$
unchanged. 

\begin{definition}[$\QIMA^{\cal{O}}_{\rm{negl-refl}}$]
The class is defined identically to $\QIMA^{\cal{O}}$ as in Definition \ref{def:QIMA-O}, with the following difference. There exists a function $\delta(n)=\negl(n)$ such that each oracle-containing unit \(W_i^O\), making \(t_i\) quantum oracle queries, is accompanied by an exact reflection
circuit \(\widehat R_i^O\) such that
\[
(\widehat R_i^O)^\dagger=\widehat R_i^O,
\qquad
(\widehat R_i^O)^2=I,
\qquad
\left\|W_i^O-\widehat R_i^O\right\|_{\mathrm{op}}
\le \delta(n),
\]
and \(\widehat R_i^O\) makes at most \(t_i\) quantum oracle queries.
Additionally, on every promised oracle, the companion reflections
\(\{\widehat R_i^O\}_i\), together with all oracle-free commutation units,
form a pairwise-commuting family.
\end{definition}

\begin{remark} \label{remark:uniform_bound}
The same uniform classical preprocessing procedure must output
the circuit descriptions of both the actual units and their
companion reflections, with no additional classical oracle
queries. The approximation bound must hold uniformly over all promised oracle instances, with a single negligible function $\delta(n)$.
\end{remark}

More generally, for a prescribed function $\delta(n)\geq 0$,
let $\QIMA^{O}_{\delta\text{-near-refl}}$ denote the same model
with approximation bound $\delta(n)$, without requiring
$\delta$ to be negligible. Thus,
\[
    \QIMA^{O}_{\mathrm{negl\text{-}refl}}
    =
    \bigcup_{\delta\ \mathrm{negligible}}
    \QIMA^{O}_{\delta\text{-near-refl}}.
\]
Throughout, $\QIMA^{O}$ denotes the exact-reflection model
of Definition~\ref{def:QIMA-O}.

\begin{observation}[Invariance under negligible deviations]
For every classical oracle $O$,
\[
    \mathsf{QIMA}^O_{\text{negl-refl}}
    =
    \mathsf{QIMA}^O.
\]
\end{observation}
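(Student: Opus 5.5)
The inclusion $\QIMA^O\subseteq\QIMA^O_{\mathrm{negl\text{-}refl}}$ is immediate. Any exact-reflection verifier of Definition~\ref{def:QIMA-O} is a negligibly-near-reflection verifier with $\delta\equiv 0$ if we take each oracle-containing unit as its own companion. Its units commute on promised oracles, so the companion family does as well, and the query counts are preserved trivially. The substantive direction is therefore $\QIMA^O_{\mathrm{negl\text{-}refl}}\subseteq\QIMA^O$.

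For that direction, fix a $\QIMA^O_{\mathrm{negl\text{-}refl}}$ verifier with completeness $c(n)$, soundness $s(n)$, $c-s\ge 1/\poly(n)$, approximation function $\delta(n)=\negl(n)$, and $m=\poly(n)$ units. We build a new verifier that runs the same preprocessing. By Remark~\ref{remark:uniform_bound}, this preprocessing already outputs descriptions of the companions $\widehat R_i^O$ without extra classical queries. The new verifier then tests the companions $\widehat R_i^O$ in place of the $W_i^O$, and the oracle-free units unchanged. Checking the conditions of Definition~\ref{def:QIMA-O} is direct:
\begin{itemize}
\item each oracle-containing unit is now an exact reflection;
\item on promised oracles, the companions commute pairwise and with the oracle-free units by hypothesis;
\item the total number of quantum queries does not increase, because $\widehat R_i^O$ uses at most $t_i$ queries;
\item preprocessing is unchanged, so the classical query count is the same.
\end{itemize}

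It remains to control the acceptance probabilities. The overall verifier is a fixed sequence of at most $m$ controlled-unitary Hadamard tests, and each test can be deferred to a single final joint measurement. Replacing $W_i$ by $\widehat R_i$ changes the controlled unitary by at most $\|W_i-\widehat R_i\|_{\mathrm{op}}\le\delta$. A hybrid argument then bounds the change in the final state, for any witness, by $m\delta$ in Euclidean norm. The acceptance probability therefore changes by at most $2m\delta=\negl(n)$, uniformly over witnesses and promised oracles; uniformity holds because $\delta$ is a single function, again by Remark~\ref{remark:uniform_bound}. The new verifier thus has completeness at least $c-2m\delta$ and soundness at most $s+2m\delta$. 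Since $2m\delta$ is negligible, for all sufficiently large $n$ this leaves a gap of at least $(c-s)/2\ge 1/\poly(n)$. Finitely many small $n$ can be hardcoded, or absorbed by adjusting thresholds, which gives membership in $\QIMA^O$.

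The main obstacle is this perturbation step, specifically making sure the hybrid argument is valid for the sequential, measurement-interleaved acceptance structure. The cleanest route is to write the acceptance probability as $\|\Pi_{+}^{\otimes m}\,\prod_i \mathrm{C}\text{-}U_i\,(\ket{+}^{\otimes m}\otimes\ket{\psi})\|^2$, using deferred measurement. Because the tests are controlled by distinct fresh qubits, this expression is exact and needs no commutativity. The standard telescoping bound $\|\prod_i X_i-\prod_i Y_i\|\le\sum_i\|X_i-Y_i\|$ for unitaries then applies directly.
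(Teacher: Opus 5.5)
Your proposal is correct and follows essentially the same route as the paper: replace each oracle-containing unit by its companion reflection, check the conditions of Definition~\ref{def:QIMA-O}, and use a hybrid bound (yours is $2m\delta$ via deferred measurement and telescoping; the paper's is $2T(n)\delta(n)$ via diamond norm) to retain a gap of $(c-s)/2$ for all sufficiently large $n$. One cosmetic point: state the new thresholds as the efficiently computable $c-(c-s)/4$ and $s+(c-s)/4$, as the paper does, rather than $c-2m\delta$ and $s+2m\delta$, which need not be efficiently computable.
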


\begin{proof}

Fix a classical oracle $O$.
The containment
$\mathsf{QIMA}^O \subseteq
 \mathsf{QIMA}^O_{\text{negl-refl}}$
is immediate: take each oracle-containing unit to be its
own companion. For the reverse containment, we use the
companion reflections in place of the actual units.

Let \(\mathcal V^O\) be the actual $\QIMA^{\cal{O}}_{\rm{negl-refl}}$-verifier. Construct an ideal verifier
\(\widehat{\mathcal V}^O\) by replacing every oracle-containing unit
\(W_i^O\) by its companion exact reflection \(\widehat R_i^O\), while
leaving all oracle-free units unchanged. Let $T(n)$ be a polynomial upper bound on the total number
of quantum oracle queries in the actual units.

There exists a negligible function $\delta(n)$ such that for every \(i\),
\[
\left\|
\operatorname{ctrl}(W_i^O)
-
\operatorname{ctrl}(\widehat R_i^O)
\right\|_{\mathrm{op}}
=
\left\|W_i^O-\widehat R_i^O\right\|_{\mathrm{op}}
\le \delta(n).
\]
The corresponding unitary channels are therefore at diamond-norm distance
at most \(2\delta(n)\). A hybrid argument over the at most \(T(n)\) replaced
units shows that, for every promised oracle and every witness, the acceptance
probabilities of \(\mathcal V^O\) and \(\widehat{\mathcal V}^O\) differ by
at most
\[
\varepsilon(n):=2T(n)\delta(n).
\]

Let $c(n)$ and $s(n)$ be the actual verifier's completeness
and soundness, and write $g(n):=c(n)-s(n)$.
Since $T(n)$ is polynomial, $\delta(n)$ is negligible, and
$g(n)$ is inverse polynomial, we have
\[
    \varepsilon(n) \le \frac{g(n)}{4}
\]
for all sufficiently large $n$.
For these input lengths, the ideal verifier therefore has
completeness at least
\[
    \widehat c(n):=c(n)-\frac{g(n)}{4}
\]
and soundness at most
\[
    \widehat s(n):=s(n)+\frac{g(n)}{4}.
\]
These thresholds are efficiently computable and satisfy
\[
    \widehat c(n)-\widehat s(n)=\frac{g(n)}{2}.
\]

The ideal verifier satisfies Definition~1.4: its
oracle-containing units are exact reflections, its
oracle-free units may be arbitrary unitaries, and its
completed units commute on promised inputs.
The same preprocessing produces the companion circuits,
so no additional classical queries are needed.
Each companion also makes no more quantum queries than
the corresponding actual unit. Thus
$\mathsf{QIMA}^O_{\text{negl-refl}}
 \subseteq \mathsf{QIMA}^O$,
proving the equality.

\end{proof}

\subsection{$\BQP^{\cal{O}} \not\subseteq \QIMA^{\cal{O}}$}

We now explain how the query separation yields a single oracle separating the
two complexity classes. Let a disjoint slice of a classical oracle
$\mathcal O$ encode, for each even input length $n$, a promised Forrelation pair
$(f_n,g_n)$. Odd-length inputs are outside the promise. For each even $n$, the $n$-th slice consists of addresses $bx$,
where $b\in\{0,1\}$ and $x\in\{0,1\}^n$: addresses $0x$
encode $f_n(x)$, and addresses $1x$ encode $g_n(x)$.
Each such address therefore has length $n+1$. Define the associated promise problem by
\[
\begin{aligned}
1^n\in L_{\mathrm{yes}}^{\mathcal O}
&\quad\Longleftrightarrow\quad
\Phi(f_n,g_n)\geq\alpha,\\
1^n\in L_{\mathrm{no}}^{\mathcal O}
&\quad\Longleftrightarrow\quad
|\Phi(f_n,g_n)|\leq\beta.
\end{aligned}
\]
The
constant-query Forrelation circuit decides $L^{\mathcal O}$ with bounded error for every oracle
whose slices satisfy this promise, and hence
$L^{\mathcal O}\in\BQP^{\mathcal O}$.

To construct the separating oracle, enumerate all uniform
polynomial-time candidate verifiers, together with their claimed efficiently computable completeness
and soundness bounds $c_i(n)>s_i(n)$, as $(V_i,c_i,s_i)_{i\geq 1}$. We maintain a length bound $\ell_{i-1}$
such that all oracle values at addresses of length at most
$\ell_{i-1}$ have been permanently fixed.

At stage $i$, choose an even $n_i>\ell_{i-1}$ sufficiently large
that the total query bound of $V_i$ on $1^{n_i}$ is smaller than
the lower bound in Theorem~\ref{thm:qima_lower_bound}. Let $\ell_i\geq n_i+1$ bound the length of every classical or
quantum oracle query that $V_i$ can make on input $1^{n_i}$,
over all possible oracle answers. Temporarily fill all still-unassigned oracle addresses of length
at most $\ell_i$, except those in the $n_i$-th slice, as follows:
use a fixed perfect Forrelation pair for each unassigned
even-indexed slice, and set all odd-indexed slices to zero. Previously fixed values remain unchanged.

Now vary the $n_i$-th slice over all promised Forrelation pairs.
If some such pair causes $V_i$ to violate the structural requirements in Definition \ref{def:QIMA-O}, choose that pair. Otherwise, these
requirements hold throughout this family, and the proof of
Theorem~\ref{thm:qima_lower_bound} guarantees a pair on which $V_i$ violates its
claimed completeness or soundness bound. 

Permanently fix the chosen pair and all the temporary
assignments through length $\ell_i$. Later stages modify only
longer addresses, so neither the structural violation nor the
completeness or soundness failure can be undone. After all
stages, assign arbitrary values to any remaining addresses.

The resulting oracle defeats every candidate
$\QIMA^{\mathcal O}$ verifier at some input length, while the
constant-query Forrelation circuit continues to decide $L^{\mathcal O}$.
Therefore, there exists a classical oracle $\mathcal O$ such that
\[
\BQP^{\mathcal O}
\not\subseteq
\QIMA^{\mathcal O}.
\]

\subsection{One-Query Units with Restricted Address Support}
\label{subsec:restricted-address-support-lower-bound}

We extend the Forrelation lower bound to certain oracle-containing units that need
not be reflections. This extends the oracle separation from $\BQP^{\mathcal{O}}$, to this model as well.
In addition to the units already permitted in
$\mathrm{QIMA}^O$, allow units of the form
\[
    U_i^{f,g}=A_i O_{f,g}^{(Q_i)} B_i,
\]
where $A_i$ and $B_i$ are oracle-free circuits. The superscript $Q_i$
specifies the query register. All units must still commute on
promised oracles, and the verifier has no trusted workspace beyond the
fresh control qubits used for the Hadamard tests.

We set a $k\le n$ and denote it the \textit{support bound}, and require that, for each \(i\), \(A_i\) and \(B_i\) together act nontrivially on at most \(k\) address qubits of the query $O_{f,g}^{(Q_i)}$.\footnote{Address qubits in an oracle query are defined in Definition~\ref{def:oracle-access}
.} We next show that this restriction bounds the polynomial degree of the spectral projectors of each unit’s Hadamard-test acceptance operator. For a unitary $U$, this acceptance operator is
\[
    E(U)=\frac{2I+U+U^\dagger}{4}.
\]
For a Hermitian operator $H$ and a set $J\subseteq\mathbb R$, let
$\mathbf{1}_J(H)$ denote its spectral projector onto eigenvalues in $J$.\footnote{This is the orthogonal projector onto the sum of
the eigenspaces of $H$ whose eigenvalues lie in $J$.}

\begin{lemma}[Spectral projectors of a restricted one-query unit]\label{lem:restricted-query-spectral-degree}

Fix \(g\), a support bound $k$, and oracle-free circuits \(A,B\) that together act
nontrivially on at most \(k\) address qubits of the query in
\[
U_f = A O_{f,g} B.
\]
For every fixed \(\lambda \in \mathbb{R}\), each matrix entry of
\[
P_\lambda^f = \mathbf{1}_{\{\lambda\}}\!\left(E(U_f)\right)
\]
in the computational basis is a multilinear polynomial of degree
at most \(2^k\) in the truth-table values of \(f\).
\end{lemma}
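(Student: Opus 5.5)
The plan is to exploit the fact that the address qubits untouched by $A$ and $B$ are only ever \emph{read} by the phase oracle, so $U_f$ breaks into blocks. Inside each block, the matrix entries depend on only $2^k$ truth-table values. We then bound the degree by the \emph{number of variables} an entry depends on, rather than by a query count, since spectral projectors are not low-degree functions of the unit's entries in general.

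\emph{Step 1 (block decomposition).} Let $K$ be the set of at most $k$ address qubits on which $A$ or $B$ acts nontrivially, and let $\overline K$ be the remaining $n-|K|\ge n-k$ address qubits. For $z\in\{0,1\}^{\overline K}$, let $\Pi_z$ be the computational-basis projector fixing $\overline K$ to $z$. Both $A$ and $B$ act as the identity on $\overline K$, so they commute with every $\Pi_z$. The phase oracle $O_{f,g}$ (possibly controlled, on any selector value) is diagonal in the computational basis, so it commutes with $\Pi_z$ as well. Hence $U_f$, $U_f^\dagger$, and therefore $E(U_f)$ are block diagonal with respect to $\bigoplus_z \operatorname{im}\Pi_z$. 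It follows that $P^f_\lambda=\sum_z \mathbf{1}_{\{\lambda\}}\bigl(\Pi_z E(U_f)\Pi_z\big|_{\operatorname{im}\Pi_z}\bigr)$. Here I use the fact that the spectral projector of a block-diagonal Hermitian operator is the direct sum of the blockwise spectral projectors. In particular, $\bra{u}P^f_\lambda\ket{v}=0$ unless the basis states $u$ and $v$ agree on $\overline K$, and that zero is trivially a polynomial.

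\emph{Step 2 (few variables per block).} On block $z$, every basis state has its address equal to $(w,z)$ for some $w\in\{0,1\}^K$. The oracle therefore multiplies the basis state by a phase drawn from $\{f(w,z)\}_w$, from the fixed values $g(w,z)$, or by $1$ if the controls are off. So the restriction of $U_f$ to block $z$, and hence of $E(U_f)$ and of its spectral projector for the fixed $\lambda$, is a function of the at most $2^k$ sign variables $\{f(w,z):w\in\{0,1\}^K\}$ alone. The circuits $A$, $B$ and the values of $g$ are fixed constants. This function may be highly nonlinear, since a spectral projector is not a polynomial in the entries, but that does not matter for the degree bound.

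\emph{Step 3 (multilinear expansion).} Every function $\phi:\{-1,+1\}^{m}\to\mathbb C$ equals its Fourier expansion $\sum_{S\subseteq[m]}\widehat\phi(S)\prod_{j\in S}y_j$. This is a multilinear polynomial of degree at most $m$, and we may regard it as a polynomial in all $N$ truth-table values of $f$ that does not involve the other variables. Applying this entrywise with $m\le 2^k$ shows that each matrix entry of $P^f_\lambda$ is a multilinear polynomial of degree at most $2^k$.

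I do not expect a genuine obstacle in this argument. The only point requiring care is Step 1: we must check that controlled queries, the oracle-selector qubit, and other witness qubits acted on by $A$ and $B$ do not break the block structure. They do not, because the block structure only needs $\overline K$ to be left untouched by $A$ and $B$ and to be read diagonally by the oracle.
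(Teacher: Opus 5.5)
Your proposal is correct and follows essentially the same route as the paper. Both arguments block-diagonalize $U_f$, $E(U_f)$ and $P^f_\lambda$ by fixing the address qubits untouched by $A$ and $B$, and both note that each block depends on at most $2^k$ values of $f$. Both then bound the degree using a multilinear representation of an arbitrary function on $\{-1,+1\}^{2^k}$, with cross-block entries equal to zero. The only difference is that you use the Fourier expansion where the paper uses the equivalent interpolation $\sum_\sigma F(\sigma)\prod_j \frac{1+\sigma_j s_j}{2}$.
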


\begin{proof}
Let \(S \subseteq [n]\) be the set of address positions on which
\(A\) or \(B\) acts nontrivially, so \(|S| \le k\). For each
assignment \(z\) to the remaining address positions $[n]\setminus S$, let \(\mathcal H_z\) be the subspace of the witness space in which the address qubits outside \(S\) are fixed to \(z\). Both \(A\) and \(B\) preserve \(\mathcal{H}_z\), since they
leave these qubits untouched. The phase oracle also preserves
\(\mathcal{H}_z\), since it is diagonal in the computational basis. Thus \(U_f\), \(E(U_f)\), and \(P_\lambda^f\) are all block diagonal
with respect to
\[
\mathcal{H} = \bigoplus_z \mathcal{H}_z.
\]
Within each block \(\mathcal{H}_z\), the address bits outside \(S\) are fixed,
so the oracle depends on at most \(2^{|S|}\le 2^k\) values of \(f\),
one for each assignment to the address bits in \(S\). Since \(g,A,B\) are fixed, these values of \(f\) completely determine
the block \(U_{f,z}\), hence \(E(U_{f,z})\) and its
\(\lambda\)-spectral projector. Each matrix entry of that projector
is therefore a function of at most \(2^k\) signs.

Every function \(F : \{-1,+1\}^d \to \mathbb{C}\) has the
multilinear representation
\[
F(s_1,\ldots,s_d)
=
\sum_{\sigma \in \{-1,+1\}^d}
F(\sigma)
\prod_{j=1}^d \frac{1+\sigma_j s_j}{2},
\]
whose degree is at most \(d\). Each computational-basis vector belongs to a single block, so every
matrix entry of \(P_\lambda^f\) is either an entry of one block's
spectral projector or zero if its row and column belong to different
blocks. Thus the degree bound for the individual blocks also holds
for every entry of \(P_\lambda^f\).
\end{proof}

\begin{theorem}[Forrelation lower bound with restricted one-query units]
\label{thm:restricted-query-lower-bound}
Consider the extended verifier model above. Let $C(n)$ bound the number
of classical preprocessing queries, and let $T_{\mathrm{refl}}(n)$ bound
the total number of queries in the oracle-containing reflection units.
Suppose the verifier also uses at most $m(n)$ one-query units, with
support bounds $k_1(n),\ldots,k_{m(n)}(n)$.

If the verifier decides Forrelation with parameters $(\alpha,\beta)$
and completeness and soundness $c(n)>s(n)$, then, for even $n$,
\begin{equation}
    C(n)+T_{\mathrm{refl}}(n)
    +\sum_{i=1}^{m(n)}2^{k_i(n)}
    \ge \beta 2^n-O(1).
    \label{eq:restricted-query-lower-bound}
\end{equation}
\end{theorem}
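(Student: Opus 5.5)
The plan is to rerun the proof of Theorem~\ref{thm:qima_lower_bound} with one change. The one-query non-reflection units now sit alongside the oracle-free units, so their acceptance operators are no longer projectors and depend on $f$. We handle them by replacing each such operator with spectral projectors of its Hadamard-test operator, and we bound the degree of those projectors with Lemma~\ref{lem:restricted-query-spectral-degree}.

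\textbf{Fixing the leaf.} As before, we fix $n$ even, a bent $g_\star$, and $h=H_Ng_\star$. We run preprocessing on $(h,g_\star)$ to obtain the transcript $\tau$ and the set $F$ with $r=|F|\le C(n)$. We then restrict to the subcube where $f|_F=h|_F$. On this subcube the circuits are fixed. The reflection units give a projector $P_f=\prod_i \Pi_i^f$ whose entries have degree at most $T_\tau\le T_{\mathrm{refl}}(n)$, and the oracle-free units give a fixed $B$.

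\textbf{Decomposing the one-query units.} For each one-query unit $U_i^f$, the operator $E(U_i^f)$ has spectrum in $[0,1]$. On promised inputs all units commute, so all the $E(U_i^f)$, the projector $P_f$ and $B$ are simultaneously diagonalizable. The total acceptance operator is $A_f=P_fB\prod_i E(U_i^f)$. The eigenvalues of $E(U_i^f)$ may depend on $f$, so we cannot threshold the product directly with a fixed spectral projector. Instead we use Lemma~\ref{lem:restricted-query-spectral-degree} as follows. Each block $U_{f,z}$ depends on only $2^{k_i}$ sign values, so its spectrum ranges over a \emph{finite} set $\Lambda_i$ as $f$ varies. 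Write $E(U_i^f)=\sum_{\lambda\in\Lambda_i}\lambda P^{f}_{i,\lambda}$, where each $P^f_{i,\lambda}$ has degree at most $2^{k_i}$. For each tuple $\vec\lambda=(\lambda_1,\dots,\lambda_m)$, let $\mu$ be an eigenvalue of $B$ and $Q_\mu$ its eigenprojector; these are fixed on the leaf. Define
\[
p(f):=\sum_{\substack{\vec\lambda,\mu:\ \mu\prod_i\lambda_i>\theta}}\operatorname{Tr}\Bigl(Q_\mu\, P_f\prod_{i=1}^{m}P^f_{i,\lambda_i}\Bigr),
\qquad \theta=\tfrac{c+s}{2}.
\]
On promised inputs all the factors are commuting projectors, so each trace equals the dimension of a joint eigenspace. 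Hence $p(f)$ is the dimension of the span of common eigenvectors that are accepted by all reflections and whose remaining acceptance value exceeds $\theta$. The same completeness and soundness argument as in Theorem~\ref{thm:qima_lower_bound}, using a common eigenbasis, gives $p(f)\ge1$ on \textsc{Yes} instances and $p(f)=0$ on \textsc{No} instances. Each summand is a product of entries of degree at most $T_\tau+\sum_i 2^{k_i}$. The sum is finite, so $\deg p\le T_{\mathrm{refl}}+\sum_i2^{k_i}$ after multilinearizing.

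\textbf{Symmetrization.} The rest carries over verbatim. We reparameterize by flip bits $b_x$ and apply Minsky--Papert symmetrization to get $q(k)$ of the same degree. This $q$ vanishes on the $\beta N-r-O(1)$ integer \textsc{No} distances and has $q(0)\ge1$, which yields \eqref{eq:restricted-query-lower-bound}.

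\textbf{Main obstacle.} The delicate step is making sure $p$ is a genuine polynomial identity on the whole subcube, not only where things commute. The index set of $(\vec\lambda,\mu)$ must be fixed independently of $f$; the finiteness of $\Lambda_i$ handles this. We also need the degree bound of Lemma~\ref{lem:restricted-query-spectral-degree} to hold entrywise for every $f$ in the cube, which it does because the lemma does not use commutation. On non-promised $f$ the value of $p$ is irrelevant, because symmetrization only needs $p$ to be a polynomial everywhere and to take controlled values at promised points. The averages at \textsc{No} distances involve only promised $f$, since every $f$ at such a distance is a \textsc{No} instance.
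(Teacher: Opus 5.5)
Your proposal is correct and is essentially the paper's proof. Both arguments fix the preprocessing leaf and take $\Lambda_i$ to be the finite, $f$-independent union of the spectra of $E(U_i^f)$ over the subcube. Both bound the degree of the spectral projectors $P^f_{i,\lambda}$ by $2^{k_i}$ via Lemma~\ref{lem:restricted-query-spectral-degree}. Both then count the eigenvalues of $A_f$ above $\theta=(c+s)/2$ with a trace polynomial of degree at most $T_{\mathrm{refl}}+\sum_i 2^{k_i}$, and finish with Minsky--Papert symmetrization.

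The only difference is cosmetic. For each tuple $\vec\lambda$, your sum over the eigenprojectors $Q_\mu$ of $B$ with $\mu\prod_i\lambda_i>\theta$ equals the paper's single threshold projector $\mathbf{1}_{(\theta,\infty)}\bigl((\prod_i\lambda_i)B\bigr)$.
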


\begin{proof}
We follow the proof of Theorem~\ref{thm:qima_lower_bound}, modifying the construction of its
polynomial $p$ to account for the additional one-query units.
Fix even $n$, set $N=2^n$, and choose a perfect Forrelation pair
$(h,g_\star)$ with $h=H_Ng_\star$. As in that proof, let $\tau$ be the
preprocessing transcript on this pair, let $F$ be the set of queried
positions of $h$ in the preprocessing stage, and write $r=|F|\le C(n)$. Keeping $g=g_\star$ fixed,
restrict to functions $f$ agreeing with $h$ on $F$. Every such $f$
produces the same transcript, so all circuit descriptions are fixed.
We suppress the dependence on $n$. Let $\ell\le m(n)$ be the number
of additional one-query units in the circuit produced on transcript
$\tau$, indexed so that their address-support bounds are
$k_1,\ldots,k_\ell$.

Retain the notation $P^f$ for the product of the reflection acceptance
operators and $B$ for the product of the oracle-free acceptance
operators. Thus $B$ is fixed, with $0\le B\le I$. Writing
$E_i^f=E(U_i^f)$ for the acceptance operator of the  one-query units, the accepting
operator on promised oracles is now

\[
A^f = P^f B \prod_{i=1}^{\ell} E_i^f.
\]

For each $i$, let $\Lambda_i$ be the union of the spectra of $E_i^f$
as $f$ ranges over all functions agreeing with $h$ on $F$, with
$g=g_\star$.\footnote{Thus, $\lambda\in\Lambda_i$ if and only if $\lambda$ is an
eigenvalue of $E_i^f$ for at least one such function $f$.} The sets $\Lambda_i\subseteq[0,1]$ are finite and independent of the
particular $f$. Define
\[
    P_{i,\lambda}^f=1_{\{\lambda\}}(E_i^f).
\]

Set $\theta=(c+s)/2$. In place of the single threshold projector \(Q\) used in Theorem 3.5, we use a threshold adapted to each possible tuple of eigenvalues of the acceptance operators of the one-query units. On a joint eigenspace where \(E_i^f\) has eigenvalue \(\lambda_i\), these operators contribute the acceptance factor \(\prod_{i=1}^{\ell}\lambda_i\). Accordingly, for each \(\lambda\in\Lambda_1\times\cdots\times\Lambda_\ell\), define
\[
Q_\lambda
=
\mathbf{1}_{(\theta,\infty)}
\left(\left(\prod_{i=1}^{\ell}\lambda_i\right)B\right).
\]
Each \(Q_\lambda\) is independent of \(f\). We now set
\[
p(f)
=
\sum_{\lambda\in\Lambda_1\times\cdots\times\Lambda_\ell}
\operatorname{Tr}\left(
Q_\lambda P^f\prod_{i=1}^{\ell}P^f_{i,\lambda_i}
\right).
\]

On promised oracles, all the projectors inside each trace commute. The product of the \(P^f_{i,\lambda_i}\) restricts to the corresponding joint eigenspace, \(P^f\) enforces the reflection constraints, and \(Q_\lambda\) selects the part where total acceptance exceeds \(\theta\). The trace is therefore the dimension of this selected subspace. Summing over \(\lambda\), we see that \(p(f)\) counts, with multiplicity, the eigenvalues of \(A^f\) exceeding \(\theta\). Since \(s<\theta<c\), completeness and soundness give
\[
p(h)\ge 1,
\qquad
p(f)=0\quad\text{whenever }(f,g_\star)\in\mathrm{No}.
\]
The degree argument in Theorem~\ref{thm:qima_lower_bound} bounds the degree of each entry of
$P^f$ by $T_{\mathrm{refl}}$. By Lemma~\ref{lem:restricted-query-spectral-degree}, each entry of
$P_{i,\lambda}^f$ has degree at most $2^{k_i}$. Since the
$Q_{\boldsymbol{\lambda}}$ are fixed, multiplication, summation, and
taking the trace yield, after multilinearization,
\[
\deg p \le D := T_{\mathrm{refl}}
    + \sum_{i=1}^{\ell} 2^{k_i}.
\]
This polynomial representation holds for all functions agreeing with $h$ on $F$, with
$g=g_\star$.

The rest of the proof is unchanged compared to that of Theorem~\ref{thm:qima_lower_bound}. Apply the flip-variable substitution
and Hamming-layer symmetrization from Theorem~\ref{thm:qima_lower_bound} to obtain a univariate
polynomial $q$ with $\deg q\le D$ and $q(0)=p(h)\ge1$.
The Forrelation value on layer $j$ is again $1-2j/N$, so the same
root count gives at least $\beta N-r-O(1)$ distinct zeros of $q$.
Since $q$ is nonzero,
\[
    D\ge\beta N-r-O(1).
\]
Using $r\le C(n)$ and the stated worst-case resource bounds proves
the claimed inequality.
\end{proof}

\begin{corollary}[Consequences of a uniform address-support bound]
\label{cor:restricted-query-support}
Suppose every additional one-query unit has support bound at most
$k(n)$, and let $T(n)$ be the total number of quantum queries, including
those in the reflection units. Then
\[
    C(n)+2^{k(n)}T(n)\ge\beta2^n-O(1).
\]
In particular, if $C(n)$ is polynomial, then
\[
    T(n)=\Omega\bigl(2^{n-k(n)}\bigr).
\]
Consequently, no polynomial-query verifier in this extended model
decides Forrelation when
\[
    n-k(n)=\omega(\log n).
\]
For constant $k$, the query lower bound remains $\Omega(2^n)$;
for $k=O(\log n)$, it is $2^n/\operatorname{poly}(n)$.
\end{corollary}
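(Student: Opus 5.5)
The corollary should follow directly from Theorem~\ref{thm:restricted-query-lower-bound}. The work is to bound the left-hand side of \eqref{eq:restricted-query-lower-bound} in terms of the single quantity $T(n)$.

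First, each additional one-query unit contributes exactly one quantum query. Since $T(n)$ counts all quantum queries, including those in the reflection units and in the one-query units, we get
\[
T_{\mathrm{refl}}(n)+m(n)\le T(n).
\]
Each support bound satisfies $k_i(n)\le k(n)$, so $2^{k_i(n)}\le 2^{k(n)}$. Also $1\le 2^{k(n)}$. Combining these,
\[
T_{\mathrm{refl}}(n)+\sum_{i=1}^{m(n)}2^{k_i(n)}
\le 2^{k(n)}\bigl(T_{\mathrm{refl}}(n)+m(n)\bigr)
\le 2^{k(n)}T(n).
\]
Substituting this into \eqref{eq:restricted-query-lower-bound} gives $C(n)+2^{k(n)}T(n)\ge\beta 2^n-O(1)$ for even $n$.

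For the consequences, assume $C(n)$ is polynomial. Then $C(n)=o(2^n)$, so for large even $n$ we have $C(n)+O(1)\le \beta 2^n/2$. This leaves $2^{k(n)}T(n)\ge \beta 2^n/2$, so $T(n)\ge (\beta/2)\,2^{n-k(n)}=\Omega(2^{n-k(n)})$.

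If $n-k(n)=\omega(\log n)$, then $2^{n-k(n)}=n^{\omega(1)}$ is superpolynomial. Hence no verifier with polynomial $C$ and polynomial $T$ exists. Since the bound holds along all even $n$, it holds for infinitely many input lengths, which rules out polynomial query complexity.

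The two special cases read off directly. For constant $k$, $2^{n-k}=\Omega(2^n)$. For $k=O(\log n)$, $2^{-k(n)}\ge 1/\operatorname{poly}(n)$, so the bound is $2^n/\operatorname{poly}(n)$.

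The only point that needs care is the accounting in the first step. It must be the case that $T(n)$ counts each one-query unit's query, so that $m(n)\le T(n)$, and that reflection-unit queries are multiplied by $2^{k}\ge1$ only as a harmless overestimate. Everything else is arithmetic on the theorem's inequality.
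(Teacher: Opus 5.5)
Your proposal is correct and follows the paper's proof: the paper also observes that each additional unit makes one query, so $T_{\mathrm{refl}}+\sum_i 2^{k_i}\le 2^{k}T$, and then invokes Theorem~\ref{thm:restricted-query-lower-bound}. You also spell out the routine arithmetic for the stated consequences, and your accounting step is cleanest when read on the fixed preprocessing transcript $\tau$, as in that theorem's proof, since there the reflection queries plus the number of one-query units on $\tau$ is automatically at most $T(n)$.
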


\begin{proof}
Each additional unit makes one query, so
$T_{\mathrm{refl}}+\sum_i2^{k_i}\le 2^k T$.
The conclusions follow from
Theorem~\ref{thm:restricted-query-lower-bound}.
\end{proof}

\paragraph{Negligible implementation errors.}
The same conclusion holds if the tested units are negligible
approximations to ideal units of the form above, provided that the
ideal units commute on promised oracles and satisfy the same query and
address-support bounds. As in Remark~\ref{remark:uniform_bound}, the same uniform classical preprocessing procedure
must output both the tested and ideal circuit descriptions, without
additional classical oracle queries. The approximation bounds must
hold uniformly over all promised oracle instances. Replacing units at operator-norm distances
$\delta_j$ changes any witness's acceptance probability by at most
$2\sum_j\delta_j$. For polynomially many tests and an
inverse-polynomial completeness--soundness gap, negligible errors
therefore preserve a positive gap, and the exact lower bound applies.

\section{Tightness of the Computational Model} \label{section:comp_model}

In this section, we prove results which discuss natural relaxations of our $\QIMA^{\cal{O}}$ model that go beyond relaxing it to $\QIMA^{\cal{O}}_{\rm{negl-refl}}$. Our results show that these relaxations collapse the model to $\QMA^{\cal{O}}$. This serves to illustrate the tightness of our model with respect to these changes.

\subsection{Collapse Due to Relaxing the Reflection Requirement}\label{subsec:inverse-poly-reflection}

Section~\ref{sec:relax_reflection} shows that our oracle separation persists when
oracle-containing units are negligibly close to
query-preserving exact companion reflections.
We now show that allowing an inverse-polynomial deviation
recovers the full power of $\QMA^{O}$.

We rely on the general approach of the amplification algorithms by Marriott and Watrous, as described in Section \ref{subsec:mw-two-projector}, with some modifications. Let $V$ be a unitary QMA verifier acting on a witness
register $M$ and an $a$-qubit workspace register $A$, where $A$ is initialized to $\lvert 0^a\rangle$. Let
$\Pi_{\mathrm{acc}}$ project onto the verifier's accepting output subspace, and
define
\begin{equation}
    P
    :=
    I_M\otimes \lvert 0^a\rangle\!\langle 0^a\rvert_A,
    \qquad
    Q
    :=
    V^\dagger \Pi_{\mathrm{acc}}V.
    \label{eq:mw-projectors-section}
\end{equation}
Thus, $P$ projects onto properly initialized inputs, while $Q$ projects
onto inputs that $V$ maps to the accepting subspace. For every witness
$\lvert\psi\rangle\in M$,
\begin{equation}
    \langle\psi,0^a\rvert Q\lvert\psi,0^a\rangle
    =
    \Pr\!\left[V\text{ accepts }\lvert\psi\rangle\right].
\end{equation}

By Jordan's lemma, the Hilbert space decomposes into one- and
two-dimensional subspaces invariant under both $P$ and $Q$. On a
nontrivial two-dimensional block, let $\lambda
    \in (0,1)$ denote the corresponding eigenvalue of
$\left.PQP\right|_{\operatorname{im}P}$. In the QMA interpretation, $\lambda$ is the acceptance probability for the unit vector \(|p\rangle=|\psi\rangle_M|0^a\rangle_A\) spanning \(\operatorname{im}P\) within that block. Thus, \(\lambda=\langle p|Q|p\rangle\).

Standard Marriott--Watrous amplification alternates the projective
tests associated with $P$ and $Q$. Consider the reflections
\begin{equation}
    R_P:=2P-I,
    \qquad
    R_Q:=2Q-I.
\end{equation}
Writing $\lambda=\cos^2\varphi$, the product $R_QR_P$ has eigenvalues
$e^{\pm 2i\varphi}$ on the corresponding two-dimensional block. Thus,
the acceptance probability $\lambda$ is encoded in the rotation
generated by the two reflections. These reflections may also be viewed as full selective phase shifts:
\begin{equation}
    e^{i\pi P}=I-2P=-R_P,
    \qquad
    e^{-i\pi Q}=I-2Q=-R_Q.
\end{equation}

Here, consider the spectral decomposition of $P=SDS^\dag$, where $D$ is a diagonal matrix. Then $e^{i\pi P}=Se^{i\pi D}S^\dag$, where in $e^{i\pi D}$ each value $\lambda$ along the diagonal of $D$ becomes $e^{i\pi \lambda}$. Consequently,
\begin{equation}
    e^{-i\pi Q}e^{i\pi P}
    =
    R_QR_P.
    \label{eq:full-phase-mw}
\end{equation}

For our purposes, full phase shifts are too far from the identity.
We instead introduce the small-phase Marriott--Watrous core
\begin{equation}
    \Gamma_\theta
    :=
    e^{-i\theta Q}e^{i\theta P}.
    \label{eq:gamma-theta}
\end{equation}
Equation~\eqref{eq:full-phase-mw} shows that
$\Gamma_\pi=R_QR_P$, whereas for small $\theta$ the operator
$\Gamma_\theta$ is close to the identity.

The geometric intuition is the same as in Marriott–Watrous amplification. Within a block with acceptance probability \(\lambda\) close to \(1\), the one-dimensional ranges of \(P\) and \(Q\) nearly coincide. The phase shifts \(e^{i\theta P}\) and \(e^{-i\theta Q}\), acting on these respective subspaces, therefore nearly cancel. The remaining phase discrepancy becomes smaller as \(\lambda\) increases.

The operator $\Gamma_\theta$ alone, however, does not give a sound
test when the purported workspace register $A$ is supplied by Merlin. For
example, every state in $\ker P\cap\ker Q$ is fixed by
$\Gamma_\theta$ and would therefore pass its Hadamard test with
probability $1$, despite being improperly initialized.

To control these directions, we define
\begin{equation}
    W_\theta
    :=
    R_P\Gamma_\theta
    =
    R_Pe^{-i\theta Q}e^{i\theta P}.
    \label{eq:W-theta}
\end{equation}
The additional factor $R_P$ has two roles. First, since
$\Gamma_\theta$ is close to the identity, $W_\theta$ is close to the
exact reflection $R_P$. Second, $R_P$ acts as $+1$ on
$\operatorname{im}P$ and as $-1$ on $\ker P$, thereby moving
improperly initialized directions away from the accepting eigenvalue
$+1$.

Recall that the accepting POVM element of the Hadamard test for a
unitary $W$ is
\begin{equation}
    E(W)
    :=
    \frac{2I+W+W^\dagger}{4}.
\end{equation}
An eigenvector of $W$ with eigenvalue $e^{i\alpha}$ is accepted with
probability
\begin{equation}
    \frac{1+\cos\alpha}{2}
    =
    \cos^2\!\left(\frac{\alpha}{2}\right).
\end{equation}
The following lemma makes the preceding intuition precise. It shows
that the eigenphase of $W_\theta$ closest to zero is determined
monotonically by the original QMA acceptance probability, while
$W_\theta$ remains close to the reflection $R_P$.

\begin{lemma}[Small-phase Marriott--Watrous gadget]
\label{lem:small-phase-mw}
Let $P$ and $Q$ be orthogonal projectors, with
$\operatorname{im}P\neq\{0\}$, and define
\begin{equation}
    R_P:=2P-I.
\end{equation}
For $0<\theta\le 1/4$, let
\begin{equation}
    W_\theta
    :=
    R_Pe^{-i\theta Q}e^{i\theta P},
\end{equation}
and let
\begin{equation}
    E_\theta
    :=
    \frac{2I+W_\theta+W_\theta^\dagger}{4}
\end{equation}
be the accepting POVM element of its Hadamard test. Write
\begin{equation}
    \omega
    :=
    \lambda_{\max}
    \left(
        \left.PQP\right|_{\operatorname{im}P}
    \right).
\end{equation}
Then
\begin{equation}
    \lambda_{\max}(E_\theta)
    =
    G_\theta(\omega),
    \label{eq:G-optimum}
\end{equation}
where
\begin{equation}
    G_\theta(\lambda)
    :=
    \frac{
        1+\sqrt{1-(1-\lambda)^2\sin^2\theta}
    }{2}.
    \label{eq:G-definition}
\end{equation}
Moreover, $G_\theta$ is increasing on $[0,1]$, and
\begin{equation}
    G_\theta\!\left(\frac23\right)
    -
    G_\theta\!\left(\frac13\right)
    \ge
    \frac{\theta^2}{72}.
    \label{eq:G-gap}
\end{equation}
Finally,
\begin{equation}
    \|W_\theta-R_P\|_{\mathrm{op}}
    \le
    2\theta.
    \label{eq:W-close-RP}
\end{equation}
\end{lemma}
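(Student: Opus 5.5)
The plan is to use Jordan's lemma (Lemma~\ref{lem:jordan-two-projectors}) to split $\mathcal H$ into one- and two-dimensional blocks invariant under both $P$ and $Q$. On each block I will compute the spectrum of $W_\theta$ explicitly. Each block is invariant under $R_P$, $e^{i\theta P}$ and $e^{-i\theta Q}$, so $W_\theta$ and $E_\theta$ are block diagonal. Moreover, $E_\theta=(2I+W_\theta+W_\theta^\dagger)/4$ is a function of the normal operator $W_\theta$. Hence the eigenvalues of $E_\theta$ are exactly the numbers $\cos^2(\alpha/2)$, where $e^{i\alpha}$ ranges over the eigenvalues of $W_\theta$ in the various blocks. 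So $\lambda_{\max}(E_\theta)$ is the maximum of these block-wise acceptance values.

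\textbf{Two-dimensional blocks.} Take the basis $\ket{p},\ket{p^\perp}$ of Lemma~\ref{lem:jordan-two-projectors}, so that $P=\mathrm{diag}(1,0)$. Then $Q=\ket{q}\!\bra{q}$ with $\ket{q}=\cos\varphi\ket{p}+\sin\varphi\ket{p^\perp}$ and $\lambda=\cos^2\varphi$. On this block $\det W_\theta=\det R_P\cdot e^{-i\theta}e^{i\theta}=-1$. The eigenvalues of the unitary $W_\theta$ therefore have the form $e^{i\alpha}$ and $-e^{-i\alpha}$, and $\operatorname{Tr}W_\theta=2i\sin\alpha$.

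I will compute the trace using $e^{-i\theta Q}=I+(e^{-i\theta}-1)Q$ and cyclicity, writing $\operatorname{Tr}(\mathrm{diag}(e^{i\theta},-1)\,e^{-i\theta Q})$. This should give
\begin{equation*}
\operatorname{Tr}W_\theta=(1-\lambda)\bigl[(e^{i\theta}-1)-(e^{-i\theta}-1)\bigr]=2i(1-\lambda)\sin\theta .
\end{equation*}
Thus $\sin\alpha=(1-\lambda)\sin\theta$. Choosing the root with $\cos\alpha\ge0$, the eigenvalue $e^{i\alpha}$ is accepted with probability $(1+\sqrt{1-(1-\lambda)^2\sin^2\theta})/2=G_\theta(\lambda)$. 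The other eigenvalue $-e^{-i\alpha}$ is accepted with probability $(1-\cos\alpha)/2\le 1/2$.

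\textbf{One-dimensional blocks.} There are four cases.
\begin{itemize}
\item $P=Q=1$: here $W_\theta=1$, with acceptance $1=G_\theta(1)$.
\item $P=1,Q=0$: here $W_\theta=e^{i\theta}$, with acceptance $(1+\cos\theta)/2=G_\theta(0)$.
\item $P=0,Q=1$: here $W_\theta=-e^{-i\theta}$, with acceptance $(1-\cos\theta)/2\le1/2$.
\item $P=Q=0$: here $W_\theta=-1$, with acceptance $0$.
\end{itemize}

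\textbf{Assembling the maximum.} The eigenvalues of $PQP|_{\operatorname{im}P}$ are exactly the $\lambda$'s of the two-dimensional blocks together with the values $1$ and $0$ from the first two one-dimensional cases. Since $\operatorname{im}P\ne\{0\}$, $\omega$ is attained by some block, and every block meeting $\operatorname{im}P$ contributes $G_\theta(\lambda)$ for some eigenvalue $\lambda\le\omega$. All remaining acceptance values are at most $1/2\le G_\theta(\omega)$. Monotonicity of $G_\theta$ is immediate, since $(1-\lambda)^2$ decreases on $[0,1]$. This yields \eqref{eq:G-optimum}.

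\textbf{Gap bound.} Put $s=\sin\theta$ and rationalize:
\begin{equation*}
G_\theta(\tfrac23)-G_\theta(\tfrac13)=\frac{\tfrac{3s^2}{9}}{2\bigl(\sqrt{1-s^2/9}+\sqrt{1-4s^2/9}\bigr)}\ge\frac{s^2}{12}.
\end{equation*}
For $0<\theta\le1/4$ we have $\sin\theta\ge 2\theta/\pi$, so the gap is at least $\theta^2/(3\pi^2)\ge\theta^2/72$.

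\textbf{Distance to $R_P$.} Since $R_P$ is unitary, $\|W_\theta-R_P\|_{\mathrm{op}}=\|\Gamma_\theta-I\|_{\mathrm{op}}$. This is at most $\|e^{-i\theta Q}-I\|_{\mathrm{op}}+\|e^{i\theta P}-I\|_{\mathrm{op}}\le 2|e^{i\theta}-1|\le2\theta$.

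\textbf{Main obstacle.} The only delicate step is the two-dimensional trace computation and the choice of the eigenvalue branch. Using $\det W_\theta=-1$ to reduce everything to the trace avoids diagonalizing $W_\theta$ directly. I also need to verify carefully that no block outside $\operatorname{im}P$ can exceed $G_\theta(\omega)$; the case analysis above handles this.
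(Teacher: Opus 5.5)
Your proposal is correct and follows essentially the same route as the paper: Jordan blocks, $\det W_\theta=-1$ together with the trace $2i(1-\lambda)\sin\theta$ to pin down the eigenphases, the four one-dimensional cases, and the triangle inequality for $\|W_\theta-R_P\|_{\mathrm{op}}$. The differences are cosmetic. You obtain the trace by cyclicity rather than by multiplying out the $2\times 2$ matrix, and you bound the gap by rationalizing and using $\sin\theta\ge 2\theta/\pi$, which gives the slightly stronger $\theta^2/(3\pi^2)$, instead of integrating the derivative bound $G_\theta'\ge\theta^2/24$.
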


\begin{proof}
Apply Jordan's lemma to $P$ and $Q$. Since $W_\theta$ and $E_\theta$
are constructed from $P$ and $Q$, they preserve every Jordan block.

Consider first a two-dimensional block with parameter
$\lambda\in(0,1)$, and write
\begin{equation}
    s_\lambda
    :=
    \sqrt{\lambda(1-\lambda)}.
\end{equation}
In the basis supplied by Jordan's lemma,
\begin{equation}
    P
    =
    \begin{pmatrix}
        1&0\\
        0&0
    \end{pmatrix},
    \qquad
    Q
    =
    \begin{pmatrix}
        \lambda&s_\lambda\\
        s_\lambda&1-\lambda
    \end{pmatrix},
    \qquad
    R_P
    =
    \begin{pmatrix}
        1&0\\
        0&-1
    \end{pmatrix}.
\end{equation}
Using
\begin{equation}
    e^{-i\theta Q}
    =
    I+(e^{-i\theta}-1)Q,
    \qquad
    e^{i\theta P}
    =
    \begin{pmatrix}
        e^{i\theta}&0\\
        0&1
    \end{pmatrix},
\end{equation}
a direct multiplication gives
\begin{equation}
    \left.W_\theta\right|_{\mathcal H_\ell}
    =
    \begin{pmatrix}
        \lambda+(1-\lambda)e^{i\theta}
        &
        (e^{-i\theta}-1)s_\lambda
        \\[1mm]
        (e^{i\theta}-1)s_\lambda
        &
        -\lambda-(1-\lambda)e^{-i\theta}
    \end{pmatrix}.
\end{equation}
In particular,
\begin{equation}
    \det\!\left(
        \left.W_\theta\right|_{\mathcal H_\ell}
    \right)
    =
    -1,
\end{equation}
and
\begin{equation}
    \operatorname{Tr}\!\left(
        \left.W_\theta\right|_{\mathcal H_\ell}
    \right)
    =
    2i(1-\lambda)\sin\theta.
\end{equation}
Since $W_\theta$ is unitary, its two eigenvalues on this block can
therefore be written as
\begin{equation}
    e^{i\alpha_\lambda}
    \qquad\text{and}\qquad
    -e^{-i\alpha_\lambda},
\end{equation}
where $0\le \alpha_\lambda\le\theta$ and
\begin{equation}
    \sin\alpha_\lambda
    =
    (1-\lambda)\sin\theta.
    \label{eq:alpha-lambda}
\end{equation}

\noindent The corresponding eigenvalues of $E_\theta$ are
\begin{equation}
    \frac{1+\cos\alpha_\lambda}{2}
    \qquad\text{and}\qquad
    \frac{1-\cos\alpha_\lambda}{2}.
\end{equation}
Since $\alpha_\lambda\le\theta\le1/4$, the first is the larger one.
By~\eqref{eq:alpha-lambda}, it equals
\begin{equation}
    \frac{
        1+\sqrt{1-(1-\lambda)^2\sin^2\theta}
    }{2}
    =
    G_\theta(\lambda).
\end{equation}

\noindent It remains to consider the one-dimensional Jordan blocks. The four
possibilities give
\begin{center}
\begin{tabular}{c|c|c}
$P$ & $Q$ & eigenvalue of $E_\theta$\\
\hline
$1$ & $1$ & $1$\\
$1$ & $0$ & $\cos^2(\theta/2)$\\
$0$ & $1$ & $\sin^2(\theta/2)$\\
$0$ & $0$ & $0$
\end{tabular}
\end{center}
The first two values are $G_\theta(1)$ and $G_\theta(0)$,
respectively. Blocks contained in $\ker P$ have acceptance probability
at most $\sin^2(\theta/2)\le G_\theta(0)$. Hence maximizing over all
Jordan blocks gives~\eqref{eq:G-optimum}.

Differentiating~\eqref{eq:G-definition} gives
\begin{equation}
    G_\theta'(\lambda)
    =
    \frac{
        (1-\lambda)\sin^2\theta
    }{
        2\sqrt{1-(1-\lambda)^2\sin^2\theta}
    }
    \ge 0,
\end{equation}
so $G_\theta$ is increasing. For
$\lambda\in[1/3,2/3]$,
\begin{equation}
    G_\theta'(\lambda)
    \ge
    \frac{\sin^2\theta}{6}
    \ge
    \frac{\theta^2}{24},
\end{equation}
where the last inequality uses
$\sin\theta\ge\theta/2$ for $0\le\theta\le1/4$.
Integrating over $[1/3,2/3]$ yields
\begin{equation}
    G_\theta\!\left(\frac23\right)
    -
    G_\theta\!\left(\frac13\right)
    \ge
    \frac{\theta^2}{72},
\end{equation}
proving~\eqref{eq:G-gap}.

Finally, since $R_P$ is unitary,
\begin{align}
    \|W_\theta-R_P\|_{\mathrm{op}}
    &=
    \|e^{-i\theta Q}e^{i\theta P}-I\|_{\mathrm{op}}
    \\
    &\le
    \|e^{-i\theta Q}-I\|_{\mathrm{op}}
    +
    \|e^{i\theta P}-I\|_{\mathrm{op}}
    \\
    &\le
    4\sin\!\left(\frac{\theta}{2}\right)
    \\
    &\le
    2\theta.
\end{align}
This proves~\eqref{eq:W-close-RP}.
\end{proof}

Equation~\eqref{eq:alpha-lambda} gives the precise small-phase analogue
of the Marriott--Watrous rotation picture. On a block with QMA
acceptance probability $\lambda$, the two opposite phase shifts leave
a residual phase $\alpha_\lambda$ satisfying
\begin{equation}
    \sin\alpha_\lambda
    =
    (1-\lambda)\sin\theta.
\end{equation}
Thus, the residual phase decreases monotonically as the original
acceptance probability $\lambda$ increases, and vanishes when $\lambda=1$. On this two-dimensional block, \(W_\theta=R_P\Gamma_\theta\) is unitary with determinant \(-1\), so its two eigenvalues have the form
\[
e^{i\alpha_\lambda}
\qquad\text{and}\qquad
-e^{-i\alpha_\lambda}.
\]
Since \(W_\theta\) is close to \(R_P\), these lie near \(+1\) and \(-1\), respectively.
Hence one eigenvalue lies near $+1$ and carries the acceptance
information, while the other lies near $-1$. The Hadamard test
converts the first eigenphase into the monotone acceptance probability
$G_\theta(\lambda)$.

At the same time, $R_P$ suppresses the improperly initialized
one-dimensional blocks: blocks with $P=0$ have Hadamard-test
acceptance probability at most $\sin^2(\theta/2)$, whereas without
$R_P$ a block with $P=Q=0$ would be accepted with certainty.

We now use this gadget to show that an inverse-polynomial relaxation
of the reflection requirement collapses the model to
$\QMA^{\mathcal O}$.

\begin{theorem}[Collapse at inverse-polynomial tolerance]
\label{thm:inverse-poly-collapse}
Let $\delta(n)$ be an efficiently computable function satisfying
\begin{equation}
    \delta(n)
    \ge
    \frac{1}{\operatorname{poly}(n)}.
\end{equation}
Then
\begin{equation}
    \QIMA^{\mathcal O}_{\delta\text{-near-refl}}
    =
    \QMA^{\mathcal O}.
    \label{eq:inverse-poly-collapse}
\end{equation}
Moreover, the containment
\begin{equation}
    \QMA^{\mathcal O}
    \subseteq
    \QIMA^{\mathcal O}_{\delta\text{-near-refl}}
\end{equation}
can be realized using a single oracle-containing unit.
\end{theorem}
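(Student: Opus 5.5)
The containment $\QIMA^{\mathcal O}_{\delta\text{-near-refl}}\subseteq\QMA^{\mathcal O}$ is routine. A $\QMA^{\mathcal O}$ verifier simply runs the given verifier. It performs the classical preprocessing, prepares one fresh $\ket{+}$ control per unit, applies each controlled unit, measures the controls in the $X$ basis, and accepts iff all outcomes are $+$. This uses polynomially many queries, and the control qubits are trusted ancillas initialized by the verifier. Acceptance probabilities are therefore identical, so the same $c(n),s(n)$ work. The companion reflections and the commutation requirement play no role in this direction.

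For $\QMA^{\mathcal O}\subseteq\QIMA^{\mathcal O}_{\delta\text{-near-refl}}$, I would start from a $\QMA^{\mathcal O}$ verifier $V^{\mathcal O}$ with completeness $2/3$ and soundness $1/3$, using standard parallel-repetition amplification if needed. Let $P,Q$ be as in \eqref{eq:mw-projectors-section}, choose $\theta:=\min\{1/4,\delta(n)/2\}$, and output the single unit $W_\theta=R_Pe^{-i\theta Q}e^{i\theta P}$ acting on the register $M\otimes A$, all of which is supplied by Merlin.

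The first step is efficient implementability. $e^{i\theta P}$ and $R_P$ are oracle-free phase and reflection gates conditioned on $A$ being $\ket{0^a}$. Moreover $e^{-i\theta Q}=(V^{\mathcal O})^\dagger e^{-i\theta\Pi_{\mathrm{acc}}}V^{\mathcal O}$, where $e^{-i\theta\Pi_{\mathrm{acc}}}$ is a phase on the output qubit. The unit thus makes twice the query count of $V^{\mathcal O}$, and $(V^{\mathcal O})^\dagger$ uses the inverse oracle, which is the oracle itself for a classical phase or XOR oracle.

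The second step is the near-reflection requirement. The companion is $R_P$, which is an oracle-free exact reflection produced by the same preprocessing with zero queries. Lemma~\ref{lem:small-phase-mw} gives $\|W_\theta-R_P\|_{\mathrm{op}}\le2\theta\le\delta(n)$. Since there is only one unit, commutation is trivially satisfied.

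The third step is completeness and soundness. By Lemma~\ref{lem:small-phase-mw}, the optimal acceptance probability over all states of $M\otimes A$ is $G_\theta(\omega)$, where $\omega=\lambda_{\max}(PQP|_{\operatorname{im}P})$ is exactly the optimal $\QMA$ acceptance probability. On YES instances $\omega\ge2/3$, and on NO instances $\omega\le1/3$. Since $G_\theta$ is increasing, we can set $c:=G_\theta(2/3)$ and $s:=G_\theta(1/3)$; these are efficiently computable and satisfy $c-s\ge\theta^2/72=1/\operatorname{poly}(n)$.

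I expect the main obstacle to be definitional. The definition of $\QIMA^{\mathcal O}$ lets the verifier fix $c(n),s(n)$ and demands only an inverse-polynomial gap, which the construction meets. The subtle point is that soundness must hold against arbitrary states on the workspace register $A$. This is exactly why the lemma maximizes over all Jordan blocks, including those inside $\ker P$, so I would emphasize that step. I would also check that the controlled version of $W_\theta$ is implementable with controlled oracle queries, which Definition~\ref{def:QIMA-O} allows.
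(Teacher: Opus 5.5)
Your proposal is correct and follows the paper's proof essentially step for step. It uses the same single unit $W_\theta=R_Pe^{-i\theta Q}e^{i\theta P}$ with companion $R_P$, and the same appeal to Lemma~\ref{lem:small-phase-mw} for the $2\theta$ distance bound, the $G_\theta(\omega)$ acceptance characterization over all of $M\otimes A$, and the $\theta^2/72$ gap. The only difference is cosmetic: you take $\theta=\min\{1/4,\delta/2\}$, which meets the tolerance exactly, whereas the paper takes $\theta=\min\{1/4,\delta/4\}$, leaving slack that it later uses to absorb finite-gate-set approximation error (Remark~\ref{rem:finite-gate-set-near-reflection}).
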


\begin{proof}
The containment
\begin{equation}
    \QIMA^{\mathcal O}_{\delta\text{-near-refl}}
    \subseteq
    \QMA^{\mathcal O}
\end{equation}
is immediate: a $\QMA^{\mathcal O}$ verifier can perform the
classical preprocessing and execute all controlled-unit tests of the
near-reflection verifier.

For the reverse containment, let
$L\in\QMA^{\mathcal O}$. We may assume that $L$ has a unitary verifier $V_x^{\mathcal O}$ whose
optimal acceptance probability $\omega_x$ satisfies
\begin{equation}
    x\in L_{\mathrm{yes}}
    \Longrightarrow
    \omega_x\ge\frac23,
    \qquad
    x\in L_{\mathrm{no}}
    \Longrightarrow
    \omega_x\le\frac13.
    \label{eq:qma-amplified-gap}
\end{equation}

Let $M$ be the witness register and $A$ the verifier's $a$-qubit
workspace register. In the near-reflection verifier, Merlin supplies
both $M$ and $A$. Define
\begin{equation}
    P
    :=
    I_M\otimes
    \lvert 0^a\rangle\!\langle 0^a\rvert_A,
\end{equation}
and
\begin{equation}
    Q
    :=
    (V_x^{\mathcal O})^\dagger
    \Pi_{\mathrm{acc}}
    V_x^{\mathcal O}.
\end{equation}
Then
\begin{equation}
    \omega_x
    =
    \lambda_{\max}
    \left(
        \left.PQP\right|_{\operatorname{im}P}
    \right).
\end{equation}

Choose
\begin{equation}
    \theta
    :=
    \min\left\{
        \frac{\delta(n)}{4},
        \frac14
    \right\},
\end{equation}
and define the single oracle-containing unit
\begin{equation}
    W_x^{\mathcal O}
    :=
    R_Pe^{-i\theta Q}e^{i\theta P},
    \qquad
    R_P:=2P-I.
    \label{eq:collapse-unit}
\end{equation}
We designate $R_P$ as its companion exact reflection. The unit $W_x^{\mathcal O}$ is efficiently implementable. Indeed,
\begin{equation}
    e^{-i\theta Q}
    =
    (V_x^{\mathcal O})^\dagger
    e^{-i\theta\Pi_{\mathrm{acc}}}
    V_x^{\mathcal O}.
\end{equation}
The middle operation applies a phase conditioned on the verifier's
accepting output. Similarly, $e^{i\theta P}$ and $R_P$ are
oracle-independent phases conditioned on the workspace register being
$\lvert 0^a\rangle$. Hence, if $V_x^{\mathcal O}$ makes $q(n)$ oracle
queries, then $W_x^{\mathcal O}$ makes $2q(n)$ oracle queries, while
its companion reflection $R_P$ makes none.

By Lemma~\ref{lem:small-phase-mw},
\begin{equation}
    \|W_x^{\mathcal O}-R_P\|_{\mathrm{op}}
    \le
    2\theta
    \le
    \frac{\delta(n)}{2}
    <
    \delta(n).
\end{equation}
Thus $W_x^{\mathcal O}$ satisfies the required near-reflection
condition. Since there is only one unit, the pairwise-commutation
requirement is vacuous.

The same lemma shows that the optimal acceptance probability of the
Hadamard test for $W_x^{\mathcal O}$, maximized over the entire
register $M\otimes A$, is $G_\theta(\omega_x)$.

Since $G_\theta$ is increasing,~\eqref{eq:qma-amplified-gap} implies
that the constructed verifier has completeness at least
\begin{equation}
    G_\theta\!\left(\frac23\right)
\end{equation}
and soundness at most
\begin{equation}
    G_\theta\!\left(\frac13\right).
\end{equation}
By Lemma~\ref{lem:small-phase-mw}, its completeness--soundness gap is
therefore at least
\begin{equation}
    G_\theta\!\left(\frac23\right)
    -
    G_\theta\!\left(\frac13\right)
    \ge
    \frac{\theta^2}{72}.
\end{equation}
Since $\delta(n)$ is at least inverse polynomial,
$\theta^2$ is also at least inverse polynomial. The constructed
near-reflection verifier therefore has an inverse-polynomial
completeness--soundness gap. Hence
\begin{equation}
    \QMA^{\mathcal O}
    \subseteq
    \QIMA^{\mathcal O}_{\delta\text{-near-refl}}.
\end{equation}
Together with the first containment, this proves
\eqref{eq:inverse-poly-collapse}.
\end{proof}

\begin{remark}[Finite universal gate sets]
\label{rem:finite-gate-set-near-reflection}
The proof above treats the selective phase shifts as exact gates.
Under a fixed finite universal gate set, let
$\widetilde W_x^{\mathcal O}$ approximate $W_x^{\mathcal O}$ so that
\begin{equation}
    \|\widetilde W_x^{\mathcal O}
      -W_x^{\mathcal O}\|_{\mathrm{op}}
    \le
    \varepsilon,
\end{equation}
where
\begin{equation}
    \varepsilon
    \le
    \min\left\{
        \frac{\delta(n)}{2},
        \frac{\theta^2}{144}
    \right\}.
\end{equation}
Such precision requires only polynomially many gates and does not
increase the number of oracle queries. By the triangle inequality,
\begin{equation}
    \|\widetilde W_x^{\mathcal O}-R_P\|_{\mathrm{op}}
    \le
    \varepsilon+2\theta
    \le
    \delta(n).
\end{equation}
Moreover, since
\begin{equation}
    E(W)
    =
    \frac{2I+W+W^\dagger}{4},
\end{equation}
we have
\begin{equation}
    \|E(\widetilde W_x^{\mathcal O})
      -E(W_x^{\mathcal O})\|_{\mathrm{op}}
    \le
    \frac{\varepsilon}{2}.
\end{equation}
Hence the completeness--soundness gap remains inverse polynomial.
\end{remark}

\subsection{One-query Collapse Without the Reflection Requirement}
\label{subsec:one-query-collapse}

Section~4.1 constructs a single verifier unit whose Hadamard test simulates a
$\mathsf{QMA}^O$ verifier. We now
show that the collapse persists with only one query per unit, provided the
reflection requirement is completely removed. 

We reuse the unit from the proof of Theorem~4.2 and apply a cyclic
version of the clock technique underlying the Feynman--Kitaev
circuit-to-Hamiltonian construction \cite{Kitaev2002}.
This technique encodes successive computational steps using a clock
register and is central to Kitaev's proof of QMA-completeness of
the local-Hamiltonian problem. The cyclic propagation unitary we
use also appears in \cite[Section~6]{JW07}. Here, arranging one
oracle query per clock step allows the entire propagation unitary
to use a single query, while preserving an inverse-polynomial
acceptance gap.
\paragraph{Query convention.}
For inputs of length $n$, fix a polynomial bound $p(n)$ on
the length of every oracle query. We use a common query
register $R$ containing a $p(n)$-qubit padded address, a
length $\ell$, and an enable bit $e$. When $e=1$, the oracle
applies the phase associated with the first $\ell$ address
bits. It acts as the identity when $e=0$. Below, $O$ denotes this
involution, and one application counts as one query.

We normalize the original QMA verifier to use this interface
before applying the construction of Section~4.1.
Oracle-free gates route each address into $R$, set its
length, and compute its control condition into $e$;
these operations are undone after the query.
All additional registers requiring initialization are
included in the workspace $A$, and the projector $P$
checks their initialization as well. In the final verifier,
Merlin supplies the entire register $A$.

\paragraph{Verifier model.} Let $\mathsf{QIMA}_{\mathrm{1q}}^O$ be the variant of $\mathsf{QIMA}^O$ in
which every oracle-containing unit makes a single query and has the form
$A O^{(R)} B$, where $R$ is its query register and $A,B$ are
oracle-free polynomial-size circuits on the full witness register.\footnote{Swap gates allow us to assume without loss of generality that all oracle queries are routed through a fixed
query register.}
Here $O^{(R)}$ applies $O$ on $R$ and acts as the identity on all
remaining qubits. Oracle-free units remain unrestricted polynomial-size unitaries. The
commutation and inverse-polynomial-gap requirements are unchanged.

As in Section~\ref{subsec:inverse-poly-reflection}, the Hadamard test on a unitary $U$ has acceptance operator $E(U)$; write
\[
    \mu(U):=\lambda_{\max}(E(U))
    =\lambda_{\max}\!\left(\frac{2I+U+U^\dagger}{4}\right)
\]
for its maximum acceptance probability over all states of the register $U$ acts on.

We apply the following lemma to the unit $W_x^O$ constructed
in Section~4.1, whose Hadamard test encodes the original QMA
verifier's acceptance probability while remaining sound when
Merlin supplies the workspace. This unit may make polynomially
many oracle queries; the lemma converts it into a one-query
unit while preserving an inverse-polynomial acceptance gap.

\begin{lemma}[A one-query unit with a cyclic clock]
\label{lem:one-query-clock}
Let $W^O$ be a unitary circuit making $q$ calls to the phase
oracle $O$, where $q$ is even.\footnote{This condition holds for
the unit constructed in Section~4.1, which uses the original
verifier and its inverse once each. It allows us to pad the
query count to a power of two by inserting cancelling pairs
of oracle calls.}
Let $L\geq\max\{2,q\}$ be a power of two.

There is a unitary $S^O$ acting on the input register of $W^O$
together with a $\log_2 L$-qubit register $C$, which makes
exactly one oracle query and satisfies
\begin{equation}
    \mu(S^O)=f_L\bigl(\mu(W^O)\bigr),
    \qquad
    f_L(a):=
    \frac{1+\cos\!\left(\frac{\arccos(2a-1)}{L}\right)}{2}.
    \label{eq:one-query-acceptance}
\end{equation}
For $0\leq s<c\leq1$, this transformation preserves the
acceptance gap up to a factor of $L^2$:
\begin{equation}
    f_L(c)-f_L(s)\geq\frac{c-s}{L^2}.
    \label{eq:one-query-gap}
\end{equation}
The circuit size of $S^O$ is polynomial in $L$ and the size
of $W^O$. Neither $C$ nor any workspace register needs to
be initialized.
\end{lemma}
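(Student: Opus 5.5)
The plan is the cyclic clock construction described in the technical overview. The work splits into three parts: normalize $W^O$ into $L$ uniform steps, show that the spectrum of $S^O$ consists exactly of $L$th roots of the spectrum of (a conjugate of) $W^O$, and convert this spectral statement into the acceptance formula and the gap bound.

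\textbf{Normalization.} Write the circuit as
\[
W^O=G_{q}\,O\,G_{q-1}\,O\cdots O\,G_0 ,
\]
with each $G_j$ oracle free. Since $O$ is an involution, I insert $(L-q)/2$ cancelling pairs $O\,O=I$; this is possible because $q$ and $L$ are both even. The result is
\[
W^O=K\,\bigl(G'_{L-1}O\,G'_{L-2}O\cdots G'_0O\bigr),
\]
where the trailing oracle-free block is absorbed into a fixed unitary $K$. Set
\[
W':=G'_{L-1}O\cdots G'_0O\,K ,
\]
which is a conjugate of $W^O$: $W'=K^{-1}W^OK$. Hence $W'$ has the same spectrum as $W^O$, so $\mu(W')=\mu(W^O)$. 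I then rename the steps so that $W'=G_{L-1}O\cdots G_0O$, where $G_0$ now includes $K$.

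\textbf{The unit $S^O$.} Define
\[
S^O(\ket t_C\otimes\ket\psi)=\ket{t+1 \bmod L}_C\otimes G_tO\ket\psi .
\]
It is implemented as follows: apply $O$ once on the query register, with no dependence on the clock; apply $G_t$ controlled on the clock value $t$, i.e.\ $\sum_t\ketbra t\otimes G_t$; then increment the clock modulo $L$. This makes exactly one query. The circuit size is polynomial in $L$ and in the size of $W^O$, and the construction has the required form $A\,O^{(R)}B$.

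\textbf{Spectral step.} Write $D_t:=G_{t-1}O\cdots G_0O$, with $D_0=I$ and $D_L=W'$. Let $W'\ket\phi=e^{i\alpha}\ket\phi$ and $\zeta_j:=e^{i(\alpha+2\pi j)/L}$. I will check directly that
\[
\ket{\Psi_j}=\sum_{t=0}^{L-1}\zeta_j^{-t}\ket t\otimes D_t\ket\phi
\]
is an eigenvector of $S^O$ with eigenvalue $\zeta_j$. The wrap-around term works because $G_{L-1}OD_{L-1}=W'$ and $\zeta_j^{L}=e^{i\alpha}$. Starting from an orthonormal eigenbasis of $W'$, these vectors are orthogonal and number $L\cdot\dim$, so they form a complete eigenbasis of $S^O$.

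\textbf{Acceptance formula.} The Hadamard test accepts an eigenvector with eigenphase $\varphi\in(-\pi,\pi]$ with probability $(1+\cos\varphi)/2$, which is decreasing in $|\varphi|$. Therefore $\mu(W^O)=(1+\cos\alpha^*)/2$, where $\alpha^*=\arccos(2\mu(W^O)-1)\in[0,\pi]$ is the smallest $|\alpha|$ in the spectrum. Among the phases of $S^O$, the minimum modulus is attained at $j=0$ for that $\alpha$: any other choice has $|\alpha+2\pi j|/L\ge(2\pi-|\alpha|)/L\ge|\alpha|/L$. This gives $\mu(S^O)=f_L(\mu(W^O))$. Since the analysis ranges over the whole space, including all clock and workspace states, no register needs to be initialized.

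\textbf{Gap bound.} This is the step needing an actual inequality. Parametrize $a=\cos^2\beta$ with $\beta\in[0,\pi/2]$, so that $f_L(a)=\cos^2(\beta/L)$. I then compare the two sides as functions of $\beta$ through their derivatives:
\[
\frac{d}{d\beta}\sin^2(\beta/L)=\frac{\sin(2\beta/L)}{L},
\qquad
\frac{d}{d\beta}\sin^2\beta=\sin 2\beta .
\]
So it suffices to show $L\sin(2\beta/L)\ge\sin 2\beta$. This follows from the standard bound $|\sin(Lx)|\le L\sin x$ for $x=2\beta/L\in[0,\pi/2]$. Integrating between $\beta_c$ and $\beta_s$ then gives $f_L(c)-f_L(s)\ge(c-s)/L^2$.

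I expect the main obstacle to be the bookkeeping in the normalization: conjugating so that $W'$ is exactly a product of $L$ blocks $G_tO$ without changing $\mu$, and verifying the eigenvector identity at the wrap-around step.
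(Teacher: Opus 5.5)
Your plan follows the paper's proof step for step. You pad with cancelling query pairs, conjugate into $L$ uniform steps $G_tO$, build $S^O=A(I_C\otimes O)$, exhibit the eigenvectors $\sum_t\zeta^{-t}\ket{t}\otimes D_t\ket{\phi}$ that exhaust the enlarged space, and pick the root $e^{i\alpha/L}$ closest to $1$.

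There is, however, a left/right slip in the normalization, exactly where you expected trouble. In $W^O=G_qO\cdots OG_0$, the block left over after grouping into steps of the form $GO$ is the \emph{initial} one, $K:=G_0$. It acts first, so it stands on the right: after padding, $W^O=XK$ with $X=G'_{L-1}O\cdots G'_0O$. Your identity $W^O=KX$ does not hold in general, since it would require the circuit to begin with a query. Your $W'=XK$ applies $K$ before the first query, so $K$ cannot be absorbed into the first step: writing $G'_0OK$ as $\widehat G_0O$ forces $\widehat G_0=G'_0OKO$, which is not oracle free. Consequently the steps $G_tO$ of $S^O$ do not multiply to $W'$, and the wrap-around identity $G_{L-1}OD_{L-1}=W'$ fails. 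The fix is to conjugate the other way: $W':=KW^OK^{-1}=KX=(KG'_{L-1})O\,G'_{L-2}O\cdots G'_0O$, absorbing $K$ into the \emph{last} step. This is precisely the paper's $\widetilde W^O=F_0W^OF_0^{\dagger}$, and with it the rest of your spectral argument goes through verbatim.

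Two smaller points. First, your claim that no register needs initialization rests on the spectral analysis ranging over the entire space of $C$ and the input register. That is valid only if $\sum_t\ketbra{t}\otimes G_t$ and the increment modulo $L$ are implemented exactly, with no extra clean ancillas, at $\poly(L)$ cost. You assert the size bound but do not justify it. The paper conditions each $O(1)$-qubit gate of $G_t$ on $C=\ket{t}$, giving $(\log_2L+O(1))$-qubit unitaries with ancilla-free decompositions into $2^{O(\log_2L)}=\poly(L)$ elementary gates~\cite{Barenco_1995}, and handles the increment the same way.

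Second, your gap bound is the paper's inequality in a different variable. The paper computes $f_L'(a)=\sin(x/L)/(L\sin x)$ with $x=\arccos(2a-1)$ and bounds it below by $1/L^2$ using concavity of sine, i.e.\ $\sin(x/L)\ge\sin(x)/L$. That is your $L\sin(2\beta/L)\ge\sin 2\beta$ with $x=2\beta$. Differentiating in $\beta$ keeps both integrands continuous on the closed interval, so you avoid the paper's separate continuity step at $s=0$ and $c=1$, where $\sin x=0$.
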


\begin{proof}
Since $O^2=I$ and $q$ is even, we can insert cancelling pairs of queries
until there are exactly $L$. After routing all queries through a fixed query register $R$
and absorbing the swaps into the oracle-free circuits, write
\[
    W^O=F_L O^{(R)} F_{L-1}\cdots F_1 O^{(R)} F_0,
\]
where the $F_j$ are oracle free. The initial block $F_0$ can
be moved to the end by conjugation:
\[
    \widetilde W^O
    :=F_0 W^O F_0^\dagger
    =(F_0F_L)O^{(R)}F_{L-1}\cdots F_1O^{(R)}.
\]
Conjugation preserves the spectrum, so
$\mu(\widetilde W^O)=\mu(W^O)$. Renaming the oracle-free
blocks, we therefore have
\[
    \widetilde W^O
    =(G_{L-1}O^{(R)})\cdots(G_0O^{(R)}).
\]

where each $G_t$ is oracle free. We add a register $C$, which serves as the "clock", that records which of these $L$
steps to perform. Define the oracle-free unitary $A$ by
\[
    A\bigl(|t\rangle_C\otimes|\psi\rangle\bigr)
    =|t+1\bmod L\rangle_C\otimes G_t|\psi\rangle.
\]
Then $S^O=A(I_C\otimes O^{(R)})$ first makes one oracle query,
applies the appropriate $G_t$, and advances the clock. The
clock value $t$ determines which oracle-free circuit $G_t$
is applied afterward. To implement this selection, we
condition each elementary gate in the circuit for $G_t$
on $C$ being in state $|t\rangle$, and then increment $C$
modulo $L$.

Although $G_t$ may act on many qubits, each of its elementary
gates acts on only $O(1)$ qubits. Its clock-controlled version
therefore acts on $\log_2 L+O(1)$ qubits and can be implemented
without additional workspace using polynomially many gates
in $L$.\footnote{An arbitrary unitary on $k$ qubits admits
a decomposition into $2^{O(k)}$ elementary gates without
additional workspace \cite{Barenco_1995}. Here
$k=\log_2 L+O(1)$. The same bound applies to the clock
increment.}
Consequently, the entire construction has polynomial size
and requires no initialized ancillas. Finally, choosing $L$ to be a power of two ensures that the
$L$ positions in the cycle exhaust all computational-basis
states of $C$; there are no unused clock states.

Next, we determine the spectrum of $S^O$. Choose an orthonormal
eigenbasis $\{|\psi^{(j)}\rangle\}_j$ of $\widetilde W^O$, with
eigenvalues $e^{i\varphi_j}$. For each $j$, let
$|\psi_t^{(j)}\rangle$ be the state after the first $t$ circuit
steps:
\[
    |\psi_0^{(j)}\rangle=|\psi^{(j)}\rangle,
    \qquad
    |\psi_{t+1}^{(j)}\rangle
    =G_tO^{(R)}|\psi_t^{(j)}\rangle
    \quad(0\leq t<L).
\]
Since the full circuit is $\widetilde W^O$, we have
$|\psi_L^{(j)}\rangle=e^{i\varphi_j}|\psi_0^{(j)}\rangle$.
Thus $S^O$ shifts the $L$ states
$|t\rangle_C\otimes|\psi_t^{(j)}\rangle$ cyclically, with an
additional phase $e^{i\varphi_j}$ when the clock returns
from $L-1$ to $0$.

For each $\zeta$ satisfying $\zeta^L=e^{i\varphi_j}$, the state
\[
    \frac{1}{\sqrt L}\sum_{t=0}^{L-1}
    \zeta^{-t}|t\rangle_C\otimes|\psi_t^{(j)}\rangle
\]
is an eigenvector of $S^O$ with eigenvalue $\zeta$.
Indeed, shifting each term forward multiplies its coefficient
by $\zeta$, and the condition $\zeta^L=e^{i\varphi_j}$ ensures
the same at the wraparound step. The $L$ distinct roots
therefore give all eigenvalues on this $L$-dimensional subspace.

These subspaces cover the entire input space. For each fixed
$t$, the same unitary circuit prefix maps the basis
$\{|\psi^{(j)}\rangle\}_j$ to
$\{|\psi_t^{(j)}\rangle\}_j$, which is therefore also an
orthonormal basis. Together with the clock states, the vectors
$|t\rangle_C\otimes|\psi_t^{(j)}\rangle$, over all $t,j$,
form an orthonormal basis of the enlarged register.
Hence the full spectrum of $S^O$ consists exactly of the
$L$th roots of each eigenvalue of $\widetilde W^O$.

Having determined the spectrum of $S^O$, we now use it to
compute the maximum acceptance probability of its Hadamard
test and show that the acceptance gap decreases by at most
a factor of $L^2$. Choose $\varphi_j\in[-\pi,\pi]$. The corresponding eigenvalues
of $S^O$ are
\[
    \exp\!\left(\frac{i(\varphi_j+2\pi k)}{L}\right),
    \qquad k=0,\ldots,L-1.
\]
Recall that a Hadamard test accepts an eigenvector with
eigenvalue $e^{i\alpha}$ with probability $(1+\cos\alpha)/2$.
Thus, among these roots, acceptance is maximized by a root
closest to $1$. Since $|\varphi_j|\leq\pi$, the root
$e^{i\varphi_j/L}$ achieves this maximum, giving acceptance
probability
\[
    \frac{1+\cos(|\varphi_j|/L)}{2}.
\]
To express this in terms of the original acceptance
probability, write $a_j=(1+\cos\varphi_j)/2$.
Then $|\varphi_j|=\arccos(2a_j-1)$, so the displayed
probability is exactly $f_L(a_j)$.

It remains to maximize over $j$ and bound the acceptance gap.
For $0<a<1$, setting $x=\arccos(2a-1)$ gives
\[
    f_L'(a)=\frac{\sin(x/L)}{L\sin x}\geq\frac{1}{L^2},
\]
where the inequality follows from concavity of sine on
$[0,\pi]$. In particular, $f_L$ is increasing, so
\[
    \mu(S^O)
    =f_L\bigl(\mu(\widetilde W^O)\bigr)
    =f_L\bigl(\mu(W^O)\bigr).
\]
Finally, integrating the derivative bound yields
$f_L(c)-f_L(s)\geq(c-s)/L^2$ for $0<s<c<1$.
Continuity includes the endpoints $s=0$ and $c=1$.
\end{proof}

\begin{theorem}[One-query collapse]
\label{thm:one-query-collapse}
Under the query convention above,
\[
    \mathsf{QIMA}_{\mathrm{1q}}^O=\mathsf{QMA}^O.
\]
The containment from right to left uses a single commutation unit of the
form $A(I_C\otimes O)$.
\end{theorem}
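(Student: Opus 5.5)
The argument has two directions; the inclusion $\mathsf{QIMA}_{\mathrm{1q}}^O\subseteq\mathsf{QMA}^O$ is the easy one. A $\mathsf{QMA}^O$ verifier simply runs the classical preprocessing, receives the witness, and performs each Hadamard test in sequence. For each test it prepares a fresh control qubit in $\ket{+}$, applies the controlled unit, and measures in the $X$ basis. A controlled unit $A O^{(R)} B$ is implemented with one controlled oracle query, which the query convention allows, plus controlled oracle-free circuits. By Definition~\ref{def:QIMA-O} the total number of queries is polynomial. The acceptance probability is reproduced exactly, so the inverse-polynomial gap carries over, and standard $\mathsf{QMA}^O$ amplification restores constant completeness and soundness.

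For the inclusion $\mathsf{QMA}^O\subseteq\mathsf{QIMA}_{\mathrm{1q}}^O$, I would first invoke the construction in the proof of Theorem~\ref{thm:inverse-poly-collapse}. Take a $\mathsf{QMA}^O$ verifier with gap $(2/3,1/3)$, normalized to the padded query interface with the enable bit. Fix $\theta=1/4$ (no closeness-to-reflection requirement is needed now) and form the unit
\[
W_x^O=R_Pe^{-i\theta Q}e^{i\theta P}.
\]
By Lemma~\ref{lem:small-phase-mw}, the Hadamard test of $W_x^O$, maximized over the whole register $M\otimes A$ supplied by Merlin, has optimal acceptance $G_\theta(\omega_x)$. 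This gives a constant gap of at least $\theta^2/72$, with completeness at least $c:=G_\theta(2/3)$ and soundness at most $s:=G_\theta(1/3)$. The unit $W_x^O$ uses $V_x^O$ and $(V_x^O)^\dagger$ once each, so its query count $q=2q_V$ is even, as Lemma~\ref{lem:one-query-clock} requires.

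Next I apply Lemma~\ref{lem:one-query-clock} with $L$ the least power of two that is at least $\max\{2,q\}$, so $L\le 2q=\mathrm{poly}(n)$. This gives $S^O=A(I_C\otimes O^{(R)})$, which is exactly of the one-query form $A'O^{(R)}B'$ with $B'=I$. The lemma guarantees $\mu(S^O)=f_L(\mu(W_x^O))$ with $f_L$ increasing, so the new verifier has completeness $f_L(c)$, soundness $f_L(s)$, and gap at least $(c-s)/L^2\ge \theta^2/(72L^2)$, which is inverse polynomial. The threshold values $f_L(c),f_L(s)$ are efficiently computable, and commutation is vacuous because there is a single unit. The clock register $C$ and the workspace $A$ are both supplied by the witness; this is permitted because the lemma's spectral analysis ranges over the entire enlarged register and Lemma~\ref{lem:small-phase-mw} ranges over all of $M\otimes A$.

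The main point needing care is that soundness survives when Merlin controls the clock and the workspace. Here it is already handled: $\mu$ is a maximum over all states, and $f_L$ is monotone, so any adversarial state is bounded by $f_L(s)$. The remaining check is routine but should be stated. The routing and padding gates, the clock-controlled $G_t$, and the selective phases $e^{-i\theta\Pi_{\mathrm{acc}}}$, $e^{i\theta P}$, $R_P$ are all oracle-free polynomial-size circuits. Any finite-gate-set approximation error can be absorbed as in Remark~\ref{rem:finite-gate-set-near-reflection} by choosing precision $o(\theta^2/L^2)$.
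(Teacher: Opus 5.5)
Your proposal is correct and follows the paper's proof essentially step for step. You take the $\theta=1/4$ Marriott--Watrous unit from Theorem~\ref{thm:inverse-poly-collapse}, apply the cyclic-clock Lemma~\ref{lem:one-query-clock} with the clock and workspace supplied by the witness, and obtain a gap of at least $1/(1152L^2)$. The only detail the paper states more carefully is that $L$ is a power of two bounding the query count uniformly over all inputs of length $|x|$, so that the thresholds $f_L(c),f_L(s)$ depend only on the input length.
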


\begin{proof}
As in Theorem~\ref{thm:inverse-poly-collapse}, a $\mathsf{QMA}^O$ verifier can execute the preprocessing
and all the Hadamard tests, giving
$\mathsf{QIMA}_{\mathrm{1q}}^O\subseteq\mathsf{QMA}^O$.

For the reverse containment, start with a $\mathsf{QMA}^O$ verifier and
follow the proof of Theorem~\ref{thm:inverse-poly-collapse} through the construction of $W_x^O$, now
fixing $\theta=1/4$. By Lemma~\ref{lem:small-phase-mw}, its Hadamard test has completeness at
least $c_0$ and soundness at most $s_0$, where
\[
    c_0:=G_{1/4}(2/3),
    \qquad
    s_0:=G_{1/4}(1/3),
    \qquad
    c_0-s_0\geq\frac{1}{1152}.
\]

The only remaining change is to reduce the query count of $W_x^O$.
As in Theorem~\ref{thm:inverse-poly-collapse}, it uses the original verifier and its inverse once
each, so its query count is even. Choose a polynomially bounded power of
two $L=L(|x|)$, at least $2$, that bounds this count uniformly over inputs
of length $|x|$. Lemma~\ref{lem:one-query-clock} gives a one-query unit
$S_x^O$ whose Hadamard test has completeness at least $f_L(c_0)$ and
soundness at most $f_L(s_0)$, with gap
\[
    f_L(c_0)-f_L(s_0)
    \geq\frac{c_0-s_0}{L^2}
    \geq\frac{1}{1152L^2}.
\]
The witness also supplies the clock: the lemma's spectral analysis already
maximizes over all states on the enlarged register. The verifier performs
just this one Hadamard test, so commutation is automatic. The gap remains
inverse polynomial, proving the reverse containment.
\end{proof}

\paragraph{Precision and support.}
The finite-gate-set argument of Remark~4.3 applies to the oracle-free
circuit $A$. If $\|\widetilde A-A\|_{\mathrm{op}}\leq\varepsilon$ and
$\widetilde S^O=\widetilde A(I_C\otimes O)$, then
\[
    \|E(\widetilde S^O)-E(S^O)\|_{\mathrm{op}}
    \leq\varepsilon/2.
\]
Taking $\varepsilon\leq1/(2304L^2)$ preserves an inverse-polynomial gap
without increasing the query count. The circuit $A$ may act on all
oracle-address qubits and polynomially many other witness qubits; no
depth restriction is imposed.
\subsection{Collapse due to Enabling Ancilla Space} \label{subsec:ancilla-collapse}

\paragraph{Trusted active ancillas.}
In standard QMA, allowing the verifier trusted ancilla qubits does not change the
class: the verifier may equivalently ask the witness to supply these registers and check
that they are initialized correctly before running the verification procedure.
For QIMA, however, such an initialization check need not commute with the
remaining verification operations. In fact, allowing even a single trusted qubit
raises the power of the model to all of QMA.

This phenomenon is closely related to the work of Nagaj, Hangleiter, Eisert, and
Schwarz~\cite{Nagaj_2021}, who introduced the \emph{Pinned Commuting Local Hamiltonian}
problem. In this problem, one is given a commuting local Hamiltonian acting on a
register $W$ together with one distinguished qubit $A$, and the ground-energy
question is restricted to states in which $A$ is fixed to a prescribed state.
They prove that the Pinned Commuting $3$-Local Hamiltonian problem is
QMA-complete~\cite[Theorem~2]{Nagaj_2021}. For clarity, let
\[
    \mathsf{QIMA}_{\mathrm{ta}(1)}
\]
denote the variant of $\QIMA$ in which the verifier has one additional trusted
qubit, initialized to $\ket{0}$, on which the commutation units may act.\footnote{For \(\mathrm{QIMA}_{\mathrm{ta}(1)}\), we allow efficiently computable
thresholds \(c(x)>s(x)\) depending on the full classical input, with
\(c(x)-s(x)\ge 1/p(|x|)\) for a fixed polynomial \(p\).
Containment in QMA still follows by standard threshold normalization
and amplification. Lemma~\ref{lem:pinned-clh-qima} and Theorem~\ref{thm:qima-trusted-ancilla} use this convention.} It is
distinct from the fresh control qubits used for the individual Hadamard tests. We first show that the pinned commuting-Hamiltonian problem can be verified in
$\mathsf{QIMA}_{\mathrm{ta}(1)}$, following the approach of \cite{Nagaj_2021}.

The main difficulty is that the Hamiltonian terms give rise naturally to commuting positive operators, whereas QIMA requires commuting reflections. We resolve this in two steps. First, we enlarge the witness space so that each positive operator is represented by a local projector. Second, we use the single trusted qubit to combine these projective tests with the necessary initialization checks while preserving commutativity. This yields a family of exact commuting reflections whose acceptance probability remains a monotone function of the original Hamiltonian test.

\begin{lemma}[Verification of pinned commuting Hamiltonians]
\label{lem:pinned-clh-qima}
Let
\[
    H=\sum_{i=1}^{m} H_i
\]
be a commuting $k$-local Hamiltonian acting on $W\otimes A$, where $A$ is a
single qubit,
\[
    0\preceq H_i\preceq I,
    \qquad
    [H_i,H_j]=0
\]
for all $i,j$, and $m=\poly(n)$.  Suppose that $A$ is required to be
initialized to $\ket{0}$, and that it is promised that one of the following two cases holds:
\begin{equation}
\label{eq:promise}
\begin{aligned}
\exists\,\ket{\psi}_W \quad &
\bra{\psi,0} H \ket{\psi,0} \le a,
&& \text{(YES)},\\
\forall\,\ket{\psi}_W \quad &
\bra{\psi,0} H \ket{\psi,0} \ge b,
&& \text{(NO)}.
\end{aligned}
\end{equation}
where $\Delta:=b-a\geq 1/\poly(n)$.  Then this promise problem belongs to
$\mathsf{QIMA}_{\mathrm{ta}(1)}$ under the inverse-polynomial-gap definition.
\end{lemma}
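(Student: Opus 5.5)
The plan follows the two-step outline given just before the statement. First I replace each positive term by a local projector on an enlarged witness. Then I use the trusted qubit to run the energy tests and the initialization checks in two mutually orthogonal branches.

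\emph{Step 1 (dilation).} Since the $H_i$ commute and are $k$-local, I would use the positive operators $I-H_i$, which also commute, and write $I-H_i=\sum_\mu \mu\,\Pi_{i,\mu}$ in terms of the common spectral projectors. For each term $i$ I add a fresh witness qubit $a_i$, and on its support (plus $a_i$) I define the reflection
\[
R_i:=\sum_\mu \Pi_{i,\mu}\otimes\bigl(2\ket{\phi_\mu}\!\bra{\phi_\mu}-I\bigr)_{a_i},
\qquad
\ket{\phi_\mu}:=\sqrt{\mu}\ket{0}+\sqrt{1-\mu}\ket{1}.
\]
This is $(k+1)$-local and exact, and it is computable in polynomial time by Remark~\ref{remark: gate-convention}. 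Any two $R_i,R_j$ commute, because the $\Pi$'s commute and the ancilla qubits are distinct. With $a_i$ in state $\ket0$, the accepting projector gives $\bra{0}_{a_i}\frac{I+R_i}{2}\ket{0}_{a_i}=I-H_i$. Jointly diagonalizing, one gets
\[
\bra{\psi,0,0^m}\prod_i\tfrac{I+R_i}{2}\ket{\psi,0,0^m}=\sum_\lambda |c_\lambda|^2\prod_i(1-h_{i,\lambda}),
\]
where $h_{i,\lambda}$ are the joint eigenvalues. The trouble is that this product is not a function of the energy alone. To fix this, I would first rescale, $H_i\mapsto H_i/m$, so that $\prod_i(1-h_i/m)$ lies between $1-E/m$ and $e^{-E/m}$. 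I would also pick $a,b$ slightly adjusted so that YES gives acceptance $\ge 1-a/m$ and NO (by concavity/convexity bounds on the product, together with maximizing over mixtures of joint eigenvectors) gives at most $e^{-b/m}\le 1-b/m+b^2/(2m^2)$. The resulting gap is about $\Delta/m$, which is inverse polynomial. Handling superpositions across joint eigenspaces only requires the fact that the commuting projectors give a diagonal operator, so the maximum is attained on a joint eigenvector.

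\emph{Step 2 (initialization and pinning via the trusted qubit $T$).} The witness supplies $W$, $A$ and the $a_i$. The checks I need are the projectors $\ket0\!\bra0_{a_i}$ and $\ket0\!\bra0_A$. These do not commute with the $R_i$. Following the Nagaj et al.\ trick, I would define
\[
\widetilde R_i:=R_i\otimes\ket+\!\bra+_T+I\otimes\ket-\!\bra-_T,
\qquad
\widetilde S_j:=I\otimes\ket+\!\bra+_T+Z_j\otimes\ket-\!\bra-_T,
\]
with $Z_j$ ranging over the qubits $a_i$ and $A$. All of these are reflections. The $\widetilde R_i$ commute among themselves, as do the $\widetilde S_j$, and every cross pair commutes because the two families act on orthogonal $T$-branches. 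Take $T=\ket0=(\ket++\ket-)/\sqrt2$ and let the witness be $\rho$. Since every unit is block diagonal in $\{\ket\pm\}_T$, the cross terms between the branches are killed by the final projector. The acceptance probability is therefore $\tfrac12\Pr[\text{energy tests}]+\tfrac12\Pr[\text{all }a_i,A\text{ equal }0]$.

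\emph{Soundness, the main obstacle.} Completeness is immediate: the honest witness $\ket{\psi,0,0^m}$ accepts with probability $\ge \tfrac12(1-a/m)+\tfrac12$. Soundness is harder, because a cheating witness can trade initialization against energy. A state with overlap $\gamma$ on the correctly initialized subspace $\mathcal I$ could score $\tfrac12\cdot 1+\tfrac12\gamma$ on the energy branch. I would bound the energy-branch acceptance by writing the witness as $\sqrt\gamma\ket{u}+\sqrt{1-\gamma}\ket{v}$ with $u\in\mathcal I$ and $v\perp\mathcal I$. Letting $\Pi$ be the energy accepting projector, this gives $\|\Pi\,\cdot\|^2\le(\sqrt{\gamma\,p_u}+\sqrt{1-\gamma})^2$, where $p_u\le 1-b/m+O(m^{-2})$. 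The total is then at most
\[
\tfrac12\Bigl[(\sqrt{\gamma p_u}+\sqrt{1-\gamma})^2+\gamma\Bigr].
\]
This is strictly below the completeness value by $\Omega(\Delta^2/m^2)$ after optimizing over $\gamma$, since it is $1-\Theta((1-p_u)^{2})$ near $\gamma\approx1$. That remains inverse polynomial, which suffices under the inverse-polynomial-gap convention. If this quadratic loss proves too lossy, I would instead repeat the check family with multiplicity, duplicating the $\widetilde S_j$ to weight the initialization branch, to improve the tradeoff.
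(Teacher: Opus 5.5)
Your architecture is the paper's. Your $R_i$ is exactly $2P_i-I$ for the paper's dilation $P_i=\sum_\mu\Pi_{i,\mu}\otimes\ket{\phi_\mu}\!\bra{\phi_\mu}$ of $E_i$. Your $\widetilde R_i,\widetilde S_j$ are the paper's $\ket{\pm}$-branch units, so the accepting operator on the witness is $\frac{P+Q}{2}$, with $P=\prod_iP_i$ and $Q$ the initialization projector. However, both quantitative steps fail as written.

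\emph{First}, the rescaling $H_i\mapsto H_i/m$ is too coarse. Your own estimate gives the gap $\Delta/m-b^2/(2m^2)$, which is negative once $b^2>2m\Delta$, and the construction itself fails, not just the bound. Take the YES instance $H_i=I$ for $i\le m/2$ and $H_i=0$ otherwise (so $a=m/2$), and the NO instance $H_i=(\frac12+\frac{\Delta}{m})I$ (so $b=m/2+\Delta$). The energy-branch values are $(1-\frac1m)^{m/2}=e^{-1/2}\bigl(1-\frac{1}{4m}+O(m^{-2})\bigr)$ and $(1-\frac{1}{2m}-\frac{\Delta}{m^2})^m=e^{-1/2}\bigl(1-\frac{1}{8m}-\frac{\Delta}{m}+O(m^{-2})\bigr)$. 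So for large $m$ the NO instance is accepted with higher probability whenever $\Delta<1/8$. Your NO bound $e^{-b/m}$ is also unjustified. Because of the pinning, $\ket{\psi,0}$ is in general a superposition of joint eigenvectors with different eigenvalue tuples; that is the whole point of the Nagaj et al.\ construction. So you cannot restrict to joint eigenvectors, and Jensen's inequality for the convex $e^{-x}$ points the wrong way. The paper instead takes $\gamma=\Delta/(2m^2)$ and uses $I-\gamma H\preceq E\preceq I-\gamma H+\frac{\gamma^2m^2}{2}I$. This bound is linear in $H$, so it survives compression to $A=\ket{0}$, and it gives $c-s\ge 3\Delta^2/(8m^2)$ for $\omega:=\lambda_{\max}\bigl((I_W\otimes\bra{0}_A)E(I_W\otimes\ket{0}_A)\bigr)$.

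\emph{Second}, the soundness analysis does not close. Uncapped, your triangle-inequality expression has maximum $\frac12\bigl(1+\frac{p_u}{2}+\sqrt{p_u^2/4+p_u}\bigr)$ over the overlap, which exceeds $1$ for every $p_u>1/2$. Capped at $1$, it gives only $1-\Theta((1-p_u)^2)$, a quadratic deficit. Meanwhile the honest witness certifies only $1-\frac{1-c}{2}$, a linear deficit. For example, $a=1$, $b=2$ with your scaling gives completeness $1-\frac{1}{2m}$ against a soundness bound $1-\Theta(m^{-2})$. This is not mere looseness. The true cheating optimum on a NO instance is $\frac{1+\sqrt{\omega_{\mathrm{NO}}}}{2}$, which exceeds the honest YES value $\frac{1+\omega_{\mathrm{YES}}}{2}$ whenever $\omega_{\mathrm{YES}}<\sqrt{\omega_{\mathrm{NO}}}$. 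For scalar instances this holds, to first order, whenever $a>b/2$.

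The missing idea is Jordan's lemma for the pair $(P,Q)$: $\lambda_{\max}\bigl(\frac{P+Q}{2}\bigr)=\frac{1+\|PQ\|_{\mathrm{op}}}{2}$. With the isometry $J\ket{\psi}=\ket{\psi}_W\ket{0}_A\ket{0^m}_B$, which satisfies $JJ^\dagger=Q$, one gets $\|PQ\|_{\mathrm{op}}^2=\|J^\dagger PJ\|_{\mathrm{op}}=\omega$. So the optimal acceptance is exactly $\frac{1+\sqrt{\omega}}{2}$ in both cases, and the gap is $\frac{\sqrt c-\sqrt s}{2}\ge\frac{c-s}{4}$. Completeness must be certified by this optimal witness, which is not correctly initialized. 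Duplicating the $\widetilde S_j$ would not help: $\bigl(\frac{I+\widetilde S_j}{2}\bigr)^2=\frac{I+\widetilde S_j}{2}$, so repeated units leave the accepting operator unchanged. The trusted qubit's state fixes the branch weights at $\frac12$.
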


\begin{proof}
We may assume that $0\leq a<b\leq m$, since otherwise the promised
instances are trivial. Set
\[
    \gamma := \frac{\Delta}{2m^2},
\]
and, for every $i$, define
\[
    E_i := I-\gamma H_i.
\]
Since $\Delta\leq m$, we have $0\preceq E_i\preceq I$. Moreover, the
$E_i$ commute because the $H_i$ commute. Define
\[
    E:=\prod_{i=1}^m E_i
      =\prod_{i=1}^m (I-\gamma H_i).
\]

Since the $H_i$ commute, they may be simultaneously diagonalized. For
common eigenvalues $h_i\in[0,1]$,
\[
    1-\gamma\sum_i h_i
    \leq
    \prod_i(1-\gamma h_i)
    \leq
    1-\gamma\sum_i h_i
      +\gamma^2\sum_{i<j}h_i h_j
    \leq
    1-\gamma\sum_i h_i+\frac{\gamma^2m^2}{2}.
\]
Therefore, as an operator inequality,
\[
    I-\gamma H
    \preceq E
    \preceq I-\gamma H+\frac{\gamma^2m^2}{2}I.
\]
Denote\[
    \omega :=
    \lambda_{\max}\!\left(
       (I_W\otimes\langle 0|_A)\,
       E\,
       (I_W\otimes|0\rangle_A)
    \right).
\]
Then due to the promise \eqref{eq:promise},

\[
    \text{YES}\quad\Longrightarrow\quad
    \omega\geq c:=1-\gamma a,
\]
whereas
\[
    \text{NO}\quad\Longrightarrow\quad
    \omega\leq s:=1-\gamma b+\frac{\gamma^2m^2}{2}.
\]
Note that
\[
    c-s
    \geq
    \gamma\Delta-\frac{\gamma^2m^2}{2}
    =
    \frac{3\Delta^2}{8m^2},
\]
which is inverse polynomial.

It remains to realize this test using commuting reflections. For each $i$, let the witness supply an additional qubit $B_i$. Define the
projector $P_i$ on $\operatorname{supp}(H_i)\otimes B_i$ by
\[
P_i=
\begin{pmatrix}
E_i & \sqrt{E_i(I-E_i)}\\
\sqrt{E_i(I-E_i)} & I-E_i
\end{pmatrix},
\]
where the $2\times2$ block decomposition is with respect to the
computational basis $\{|0\rangle,|1\rangle\}$ of $B_i$, and each block is
an operator on $\operatorname{supp}(H_i)$.

A
direct calculation shows that $P_i^2=P_i=P_i^\dagger$. Moreover,
\[
\langle0|_{B_i}P_i|0\rangle_{B_i}=E_i.
\]

Furthermore, the projectors $P_i$ commute pairwise: their matrix entries
are functions of the commuting operators $H_i$, and the qubits $B_i$ are
distinct.

Let $A$ denote the qubit that is pinned to $|0\rangle$ in the Hamiltonian
instance. In the $\QIMA_{\mathrm{ta}(1)}$ verifier, $A$ is supplied by the witness. Let $C$ be the
verifier's single trusted active qubit, initialized to $|0\rangle$.

Define the initialization checks
\[
    Q_0:=|0\rangle\langle0|_A,
    \qquad
    Q_i:=|0\rangle\langle0|_{B_i}
    \quad (1\leq i\leq m).
\]
The projectors $Q_0,\ldots,Q_m$ commute pairwise. Write
\[
    \Pi_+ := |+\rangle\langle+|_C,
    \qquad
    \Pi_- := |-\rangle\langle-|_C.
\]
For $1\leq i\leq m$, define
\[
    R_i :=
    (2P_i-I)\otimes\Pi_+
    + I\otimes\Pi_-,
\]
and, for $0\leq j\leq m$, define
\[
    S_j :=
    I\otimes\Pi_+
    +(2Q_j-I)\otimes\Pi_-.
\]
The $\QIMA_{\mathrm{ta}(1)}$ verifier's units are these $\{R_i\}_{i=1}^m, \{S_j\}_{j=0}^m$. Every $R_i$ and $S_j$ is an exact reflection. The $R_i$ commute among
themselves, and the $S_j$ commute among themselves. Moreover, every
$R_i$ commutes with every $S_j$, since their nontrivial actions are
supported on the orthogonal subspaces $\Pi_+$ and $\Pi_-$ of $C$. Thus all of the verifier's units are pairwise-commuting reflections.

Let
\[
    P:=\prod_{i=1}^m P_i,
    \qquad
    Q:=\prod_{j=0}^m Q_j.
\]
The projector corresponding to acceptance of all Hadamard tests is
\[
    \prod_{i=1}^m\frac{I+R_i}{2}
    \prod_{j=0}^m\frac{I+S_j}{2}
    =
    P\otimes\Pi_+ + Q\otimes\Pi_-.
\]
Since the trusted qubit $C$ is initialized to $|0\rangle$ and
$|\langle 0|+\rangle|^2=|\langle0|-\rangle|^2=1/2$, the accepting
operator on the witness register is therefore
\[
    \frac{P+Q}{2}.
\]
We now compute its largest eigenvalue. By Jordan's lemma for the two
projectors $P$ and $Q$,
\[
    \lambda_{\max}\!\left(\frac{P+Q}{2}\right)
    =
    \frac{1+\|PQ\|_{\mathrm{op}}}{2}.
\]
Let
\[
J:\mathcal H_W\to
\mathcal H_W\otimes\mathcal H_A\otimes\mathcal H_B
\]
be the isometry defined by
\[
J|\psi\rangle
=
|\psi\rangle_W\otimes |0\rangle_A\otimes |0^m\rangle_B.
\]
Its image is exactly the subspace onto which $Q$ projects, and therefore

\[
JJ^\dagger
=
I_W\otimes |0\rangle\langle0|_A
\otimes |0^m\rangle\langle0^m|_B
=
Q.
\]

Hence
\[
    \|PQ\|^2
    =
    \|QPQ\|
    =
    \|J^\dagger P J\|.
\]
By the definition of $P_i$,
\[
    \langle 0^m|_B P|0^m\rangle_B
    =
    \prod_{i=1}^m E_i
    =
    E.
\]

Since $P$ is a projector and $Q=JJ^\dagger$,
\[
\|PQ\|_{\mathrm{op}}^2
=
\|QPQ\|_{\mathrm{op}}
=
\|J^\dagger P J\|_{\mathrm{op}}.
\]
Moreover, because
\[
\langle0|_{B_i}P_i|0\rangle_{B_i}=E_i,
\]
we have
\[
J^\dagger P J
=
(I_W\otimes\langle0|_A)\,
E\,
(I_W\otimes|0\rangle_A).
\]
By the definition of $\omega$, this gives
\[
\|PQ\|_{\mathrm{op}}^2=\omega.
\]
Thus, maximizing over all states supplied by Merlin,
\[
    \Pr[\text{accept}]
    =
    \lambda_{\max}\!\left(\frac{P+Q}{2}\right)
    =
    \frac{1+\|PQ\|}{2}
    =
    \frac{1+\sqrt{\omega}}{2}.
\]
Thus the optimal acceptance probability of the reflection verifier is
exactly $(1+\sqrt{\omega})/2$.

Since this function is increasing, the verifier has completeness at least
\[
    \widehat c:=\frac{1+\sqrt c}{2}
\]
and soundness at most
\[
    \widehat s:=\frac{1+\sqrt s}{2}.
\]
Finally,
\[
    \widehat c-\widehat s
    =
    \frac{c-s}{2(\sqrt c+\sqrt s)}
    \geq
    \frac{c-s}{4}
    \geq
    \frac{3\Delta^2}{32m^2},
\]
which is inverse polynomial.

Each $R_i$ acts only on the support of $H_i$, the qubit $B_i$, and the
trusted qubit $C$, while each $S_j$ is $2$-local. Hence the locality
increases by only a constant. The projectors $P_i$ are efficiently
constructible because each $H_i$ acts on a constant-dimensional
subsystem. This proves the claim.
\end{proof}

% \begin{proof}
% Set
% \[
%     \gamma:=\frac{\Delta}{2m^2},
% \]
% and, for every $i$, define the local unitary
% \[
%     U_i
%     :=
%     \exp\!\left(2i\arcsin\sqrt{\gamma H_i}\right).
% \]
% Each $U_i$ acts on the same qudits as $H_i$, and the $U_i$ commute because the
% $H_i$ commute.  The accepting POVM element of the Hadamard test for $U_i$ is
% \[
%     E_i
%     :=
%     \frac{2I+U_i+U_i^\dagger}{4}
%     =
%     I-\gamma H_i.
% \]
% Hence the probability that all Hadamard tests accept a state
% $\ket{\psi,0}$ is
% \[
%     \bra{\psi,0}E\ket{\psi,0},
%     \qquad
%     E:=\prod_{i=1}^{m}(I-\gamma H_i).
% \]

% Since the $H_i$ commute, they may be simultaneously diagonalized.  For common
% eigenvalues $h_i\in[0,1]$,
% \[
%     1-\gamma\sum_i h_i
%     \;\leq\;
%     \prod_i(1-\gamma h_i)
%     \;\leq\;
%     1-\gamma\sum_i h_i
%       +\gamma^2\sum_{i<j}h_i h_j
%     \;\leq\;
%     1-\gamma\sum_i h_i+\frac{\gamma^2m^2}{2}.
% \]
% Therefore, as an operator inequality,
% \[
%     I-\gamma H
%     \preceq
%     E
%     \preceq
%     I-\gamma H+\frac{\gamma^2m^2}{2}I.
% \]
% A YES instance consequently has completeness at least
% \[
%     c=1-\gamma a,
% \]
% whereas a NO instance has soundness at most
% \[
%     s=1-\gamma b+\frac{\gamma^2m^2}{2}.
% \]
% Thus
% \[
%     c-s
%     \geq
%     \gamma\Delta-\frac{\gamma^2m^2}{2}
%     =
%     \frac{3\Delta^2}{8m^2},
% \]
% which is inverse polynomial.  The units are efficiently constructible, since each $H_i$ acts on a constant-dimensional subsystem.
% \end{proof}

\begin{theorem}[One trusted qubit collapses QIMA to QMA]
\label{thm:qima-trusted-ancilla}
Under the inverse-polynomial-gap definition of QIMA,
\[
    \mathsf{QIMA}_{\mathrm{ta}(1)}=\mathsf{QMA}.
\]
\end{theorem}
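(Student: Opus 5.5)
The argument splits into two containments. The direction $\mathsf{QIMA}_{\mathrm{ta}(1)}\subseteq\mathsf{QMA}$ is the routine one. A $\mathsf{QMA}$ verifier receives the witness, prepares the single trusted qubit in $\ket{0}$ itself, and runs every controlled-unit Hadamard test in sequence. It accepts exactly when all $X$-basis outcomes are $+$. Since the $\mathsf{QIMA}_{\mathrm{ta}(1)}$ units are polynomial-size circuits generated in polynomial time, this is an efficient verifier with the same acceptance probability on every witness. By the footnoted convention, it has completeness $c(x)$ and soundness $s(x)$, with $c(x)-s(x)\ge 1/p(|x|)$ and both thresholds efficiently computable. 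To obtain constant thresholds, I would first normalize by a standard threshold shift: accept with a suitable classical coin mixed in, so that the thresholds become fixed values such as $1/2\pm 1/(4p)$. Then I would apply standard parallel or Marriott--Watrous amplification to reach $2/3$ versus $1/3$. I do not expect any obstacle here.

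For the reverse direction, $\mathsf{QMA}\subseteq\mathsf{QIMA}_{\mathrm{ta}(1)}$, the plan is to reduce to Lemma~\ref{lem:pinned-clh-qima}. Take $L\in\mathsf{QMA}$. By the result of Nagaj, Hangleiter, Eisert, and Schwarz~\cite[Theorem~2]{Nagaj_2021}, the pinned commuting $3$-local Hamiltonian problem is $\mathsf{QMA}$-hard with an inverse-polynomial promise gap. There is therefore a polynomial-time map $x\mapsto (H^{(x)}, a(x), b(x))$ with the following properties:
\begin{itemize}
\item $H^{(x)}=\sum_i H_i^{(x)}$ is a commuting local Hamiltonian on $W\otimes A$, where $A$ is pinned to $\ket{0}$;
\item $x\in L_{\mathrm{yes}}$ implies a pinned energy at most $a(x)$, and $x\in L_{\mathrm{no}}$ implies a pinned energy at least $b(x)$;
\item $b(x)-a(x)\ge 1/\poly(|x|)$.
\end{itemize}

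I would then normalize the terms so that $0\preceq H_i\preceq I$. Each term is first shifted by a multiple of the identity, which commutes with everything and moves $a$ and $b$ by the same known constant. Each term is then rescaled by a polynomial bound on its norm, which shrinks the gap by at most a polynomial factor. With this normalization, Lemma~\ref{lem:pinned-clh-qima} gives a $\mathsf{QIMA}_{\mathrm{ta}(1)}$ verifier for the instance. Composing this verifier with the reduction yields a $\mathsf{QIMA}_{\mathrm{ta}(1)}$ verifier for $L$. Its thresholds $\widehat c(x)$ and $\widehat s(x)$ depend on $a(x)$, $b(x)$ and $m$, and are efficiently computable with an inverse-polynomial gap, which is exactly what the footnoted convention allows.

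The main obstacle is checking that the cited hardness statement matches the hypotheses of Lemma~\ref{lem:pinned-clh-qima} exactly. Specifically, I need to confirm three things:
\begin{itemize}
\item the pinned qubit is fixed to $\ket{0}$ and not to another state (a fixed single-qubit unitary conjugation of the terms would fix this);
\item the terms are exactly representable under the gate convention of Remark~\ref{remark: gate-convention};
\item the promise gap in~\cite{Nagaj_2021} is inverse polynomial after normalization.
\end{itemize}
If their theorem is stated with projector terms or with a different normalization, I would handle this by affine rescaling of the terms as described above. The locality bookkeeping also needs care: the resulting reflections are constant-local (locality $3+2$), so the construction lands in $\mathsf{QIMA}_k$ with one trusted qubit for a fixed $k$.
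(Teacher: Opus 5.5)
Your proposal is correct and follows the paper's proof essentially step for step. The easy direction is direct simulation, with the threshold normalization and amplification that the footnote mentions. The hard direction composes the QMA-hardness reduction of Nagaj et al.\ for pinned commuting $3$-local Hamiltonians (pinned qubit fixed to $\ket{0}$) with an affine normalization of the terms and then applies Lemma~\ref{lem:pinned-clh-qima}. One detail to state precisely: rescale all terms by a single common factor (per-term shifts are fine), so that the pinned energy, and hence the thresholds $a,b$, transform affinely.
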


\begin{proof}
The containment
\[
    \mathsf{QIMA}_{\mathrm{ta}(1)}\subseteq\mathsf{QMA}
\]
is immediate: a QMA verifier can initialize the trusted qubit itself and
execute the QIMA verification procedure.

For the reverse containment, we use the result of Nagaj, Hangleiter, Eisert,
and Schwarz~\cite[Theorem~2]{Nagaj_2021}, who prove that the Pinned Commuting
$3$-Local Hamiltonian problem with one pinned qubit is QMA-complete.  We briefly
recall the idea of their construction.

They start from a QMA-hard $2$-local Hamiltonian whose terms can be divided
into two families,
\[
    H=\sum_i A_i+\sum_j B_j,
\]
such that the $A_i$ commute pairwise and the $B_j$ commute pairwise, although
the two families need not commute with one another.  After adjoining one qubit
$A$, they replace the terms by
\[
    A_i' := A_i\otimes\ket{+}\!\bra{+}_A,
    \qquad
    B_j' := B_j\otimes\ket{-}\!\bra{-}_A.
\]
All of the resulting terms commute: terms within each family commute by
construction, while terms from different families have orthogonal support on
the additional qubit.  If that qubit is pinned to $\ket{0}$, then
\[
    \bra{\psi,0}
    \left(\sum_i A_i'+\sum_j B_j'\right)
    \ket{\psi,0}
    =
    \frac{1}{2}\bra{\psi}H\ket{\psi}.
\]
Thus the pinned ground-energy problem for the resulting commuting
$3$-local Hamiltonian reproduces, up to a factor of $1/2$, the original
QMA-hard local-Hamiltonian problem.  This is the source of the QMA-hardness in
their theorem.

After an affine normalization of the local terms, which preserves locality,
commutativity, and an inverse-polynomial promise gap, Lemma~\ref{lem:pinned-clh-qima} places this
QMA-hard pinned problem in $\QIMA_{\mathrm{ta}(1)}$. In the verifier from
that lemma, the qubit pinned in the Hamiltonian instance is supplied by
the witness, while a single additional qubit is used as the trusted active qubit;
all of the verifier's commutation units are exact reflections. Composing its QMA-hardness reduction with
that verifier gives
\[
    \mathsf{QMA}\subseteq\mathsf{QIMA}_{\mathrm{ta}(1)}.
\]
Together with the reverse containment, this proves the theorem.
\end{proof}

The oracle model admits an even more direct construction.  Unlike the
$k$-local units of QIMA, a unit of $\mathsf{QIMA}^{O}$ may be an arbitrary
polynomial-size circuit.  We can therefore encode an entire
$\mathsf{QMA}^{O}$ verification procedure into one exact
oracle-containing reflection. Let
\[
    \mathsf{QIMA}^{O,\mathrm{ta}(1)}
\]
denote the variant of $\mathsf{QIMA}^{O}$ with one trusted active qubit
initialized to $\ket{0}$.

\begin{theorem}[Trusted ancilla in the oracle model]
\label{thm:qima-oracle-trusted-ancilla}
For every classical oracle $O$,
\[
    \mathsf{QIMA}^{O,\mathrm{ta}(1)}=\mathsf{QMA}^{O}.
\]
Moreover, the containment from right to left can be realized using a single
oracle-containing unit, which is an exact reflection.
\end{theorem}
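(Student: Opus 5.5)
The plan is to encode the initialization check coherently into the trusted qubit, so that a single projector captures ``workspace correctly initialized \emph{and} the original verifier accepts'', and then to test that projector with one exact reflection.

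The containment $\mathsf{QIMA}^{O,\mathrm{ta}(1)}\subseteq\mathsf{QMA}^{O}$ is immediate, as in Theorem~\ref{thm:inverse-poly-collapse}. A $\mathsf{QMA}^{O}$ verifier can prepare the trusted qubit in $\ket{0}$ itself, run the classical preprocessing, and execute every controlled-unit Hadamard test.

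For the reverse containment, take $L\in\mathsf{QMA}^{O}$ with a unitary verifier $V_x^{O}$ on $M\otimes A$, where $A$ is an $a$-qubit workspace, and with optimal acceptance $\omega_x\ge 2/3$ on YES-instances and $\omega_x\le 1/3$ on NO-instances. Merlin supplies both $M$ and $A$, and $C$ denotes the trusted qubit, initialized to $\ket{0}$. Keep the notation of Section~\ref{subsec:inverse-poly-reflection}:
\[
P=I_M\otimes\ketbra{0^a}_A,\qquad Q=(V_x^{O})^\dagger\Pi_{\mathrm{acc}}V_x^{O}.
\]

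\textbf{Step 1: record initialization on $C$.} Let $U$ be the oracle-free unitary that flips $C$ exactly when $A\neq 0^a$, i.e.\ $U=P\otimes I_C+(I-P)\otimes X_C$. This is an $X$ gate on $C$ controlled on $A\neq 0^a$, implementable by a multi-controlled gate without extra workspace~\cite{Barenco_1995}.

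\textbf{Step 2: the single projector.} Define
\[
\Pi:=U^\dagger\bigl(Q\otimes\ketbra{0}_C\bigr)U .
\]
This is an orthogonal projector, since it is a unitary conjugate of a projector; $Q$ and $\ketbra{0}_C$ act on different registers and therefore commute.

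\textbf{Step 3: the single unit.} The unit is the reflection
\[
W:=2\Pi-I=U^\dagger\Bigl((2Q-I)\otimes\ketbra{0}_C+I\otimes\ketbra{1}_C\Bigr)U .
\]
The middle factor is $(V_x^{O})^\dagger\bigl(\text{controlled-}(2\Pi_{\mathrm{acc}}-I)\bigr)V_x^{O}$, where the reflection $2\Pi_{\mathrm{acc}}-I$ is applied only when $C=\ket{0}$. Hence $W$ is an exact reflection, it is efficiently implementable, and it makes twice as many queries as $V_x^{O}$. Since $W$ is the only unit, commutation is vacuous.

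\textbf{Step 4: acceptance analysis.} For a reflection, the Hadamard-test acceptance operator is $E(W)=\Pi$. On an input $\ket{\psi}_{MA}\ket{0}_C$ we have
\[
U\ket{\psi}\ket{0}=P\ket{\psi}\ket{0}+(I-P)\ket{\psi}\ket{1},
\]
so the acceptance probability is
\[
\bra{\psi}PQP\ket{\psi}\le\omega_x\,\|P\ket{\psi}\|^2\le\omega_x,
\]
using $\omega_x=\lambda_{\max}(PQP|_{\operatorname{im}P})$. Equality is attained at $\ket{\psi}=\ket{w,0^a}$ for an optimal witness $w$. By convexity, mixed or entangled witnesses do no better, because $C$ is trusted and unentangled. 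The optimal acceptance probability is therefore exactly $\omega_x$, which gives completeness $2/3$ and soundness $1/3$.

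The step I expect to need the most care is Step 4's soundness claim: that the $\ker P$ components, i.e.\ misinitialized workspace, are diverted to $C=\ket{1}$ and can never be accepted. This is precisely where the trusted qubit is essential, since without it $\ker P\cap\operatorname{im}Q$ would pass. The remaining bookkeeping concerns finite gate sets, handled as in Remark~\ref{rem:finite-gate-set-near-reflection}. Exactness is retained if the gate set realizes multi-controlled $X$ and $Z$ exactly; otherwise one appeals to the exact-gate convention of Remark~\ref{remark: gate-convention}.
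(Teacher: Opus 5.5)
Your construction is the paper's up to relabeling the trusted qubit. The paper's $F$ flips $C$ when $A=\ket{0^a}$ and reflects about $\Pi_{\mathrm{acc}}\otimes\ketbra{1}_C$ after $(V_x^{O}\otimes I_C)F$; your $U$ flips $C$ when $A\neq 0^a$ and you use $\ketbra{0}_C$. Your definition $W:=2\Pi-I$ and your Step 4 computation are correct for that $\Pi$.

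However, the explicit formula in Step 3 is off by a sign, and the error falls exactly on the soundness point you flagged. Since
\[
2\Pi-I=U^\dagger\bigl(2Q\otimes\ketbra{0}_C-I\bigr)U,
\]
the middle factor must be $(2Q-I)\otimes\ketbra{0}_C-I\otimes\ketbra{1}_C$, not $+I\otimes\ketbra{1}_C$. The circuit you describe applies $2\Pi_{\mathrm{acc}}-I$ only when $C=\ket{0}$ and the identity otherwise. That is a different exact reflection, whose acceptance projector is $U^\dagger\bigl(Q\otimes\ketbra{0}_C+I\otimes\ketbra{1}_C\bigr)U$. Any witness with misinitialized workspace, e.g.\ $\ket{w}_M\ket{10^{a-1}}_A$, is routed to $C=\ket{1}$, left fixed by that unit, and accepted with probability $1$. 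Soundness therefore fails completely.

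The repair is to place, between $V_x^{O}$ and its inverse, the reflection $2\bigl(\Pi_{\mathrm{acc}}\otimes\ketbra{0}_C\bigr)-I$ about ``accepting output \emph{and} $C=\ket{0}$''. This is a multi-controlled phase flip together with an overall sign $-1$, which matters once the unit is controlled. It replaces the $C$-controlled reflection about $\Pi_{\mathrm{acc}}$, and it is precisely the paper's $2\widetilde\Pi_{\mathrm{acc}}-I$. With this correction $E(W)=\Pi$ genuinely holds. Your Step 4 then gives acceptance probability $\bra{\psi}PQP\ket{\psi}$, which coincides with the paper's $\operatorname{Tr}\bigl[\Pi_{\mathrm{acc}}V_x^{O}P\rho P(V_x^{O})^\dagger\bigr]$. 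The rest of your argument goes through as in the paper: optimal value exactly $\omega_x$, $2q$ queries, and vacuous commutation.
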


\begin{proof}
The containment
\[
\QIMA^{\mathcal O,\mathrm{ta}(1)}
\subseteq
\QMA^{\mathcal O}
\]
is immediate by direct simulation.

For the reverse containment, let \(L^{\mathcal O}\in\QMA^{\mathcal O}\),
and let \(V_x^{\mathcal O}\) be a unitary $\QMA^{\mathcal O}$-verifier for \(L^{\mathcal O}\).
Let \(M\) denote its witness register and let \(A\) denote its \(a\)-qubit
workspace register, which in the original verification procedure is
initialized to \(\ket{0^a}\).  Let \(\Pi_{\mathrm{acc}}\) be the projector
onto the accepting output subspace.

In the \(\QIMA^{\mathcal O,\mathrm{ta}(1)}\) protocol, the witness supplies
both \(M\) and \(A\).  The verifier has one trusted qubit \(C\), initialized to
\(\ket 0\), which will be used to check coherently that \(A\) is properly
initialized.  Define
\[
P_0
:=
I_M\otimes\ket{0^a}\!\bra{0^a}_A
\]
and the unitary
\[
F
:=
P_0\otimes X_C
+
(I-P_0)\otimes I_C.
\]
Thus \(F\) flips \(C\) precisely on the subspace in which the purported
workspace \(A\) equals \(\ket{0^a}\).  The operation \(F\) has a polynomial-size ancilla-free circuit implementation.\footnote{Concretely, it flips \(C\) if and only if every qubit of \(A\) is \(0\). This is a multi-controlled NOT with negative controls on the qubits of \(A\).}. 

\noindent Define
\[
\widetilde V_x^{\mathcal O}
:=
(V_x^{\mathcal O}\otimes I_C)F
\]
and
\[
\widetilde\Pi_{\mathrm{acc}}
:=
\Pi_{\mathrm{acc}}\otimes\ketbra{1}_C.
\]
Finally, let the verifier's single commutation unit be
\begin{equation}
R_x^{\mathcal O}
:=
(\widetilde V_x^{\mathcal O})^\dagger
\bigl(2\widetilde\Pi_{\mathrm{acc}}-I\bigr)
\widetilde V_x^{\mathcal O}.
\label{eq:trusted-ancilla-oracle-reflection}
\end{equation}
Since \(2\widetilde\Pi_{\mathrm{acc}}-I\) is a reflection,
\[
(R_x^{\mathcal O})^\dagger
=
R_x^{\mathcal O},
\qquad
(R_x^{\mathcal O})^2=I.
\]
Thus \(R_x^{\mathcal O}\) satisfies the exact-reflection requirement for
oracle-containing units.  There is only one unit, so the commutation
requirement is vacuous.

If \(V_x^{\mathcal O}\) makes \(q(n)\) quantum oracle queries, then
\(R_x^{\mathcal O}\) makes \(2q(n)\) queries: one execution of
\(V_x^{\mathcal O}\) and one execution of its inverse.  All remaining
operations are oracle free and have polynomial-size circuits.

The accepting POVM element of the Hadamard test for the reflection
\(R_x^{\mathcal O}\) is
\[
\frac{I+R_x^{\mathcal O}}{2}
=
(\widetilde V_x^{\mathcal O})^\dagger
\widetilde\Pi_{\mathrm{acc}}
\widetilde V_x^{\mathcal O}.
\]
Let \(\rho\) be an arbitrary state supplied by the witness on \(M\otimes A\).
Since \(C\) is initialized to \(\ket 0\), the acceptance probability is
\begin{align}
\Pr[+]
&=
\operatorname{Tr}\left[
\widetilde\Pi_{\mathrm{acc}}\,
\widetilde V_x^{\mathcal O}
\bigl(\rho\otimes\ketbra{0}_C\bigr)
(\widetilde V_x^{\mathcal O})^\dagger
\right]
\nonumber\\
&=
\operatorname{Tr}\left[
\Pi_{\mathrm{acc}}\,
V_x^{\mathcal O}
P_0\rho P_0
(V_x^{\mathcal O})^\dagger
\right].
\label{eq:trusted-ancilla-acceptance}
\end{align}
The second equality follows because the projector
\(\ketbra{1}_C\) retains precisely the component on which \(F\) found
\(A=\ket{0^a}\).  In particular, components in which the purported workspace
is improperly initialized, as well as coherences between the properly and
improperly initialized subspaces, cannot contribute to acceptance.

Maximizing~\eqref{eq:trusted-ancilla-acceptance} over all states \(\rho\)
gives
\[
\max_{\rho}\Pr[+]
=
\lambda_{\max}
\left(
P_0
(V_x^{\mathcal O})^\dagger
\Pi_{\mathrm{acc}}
V_x^{\mathcal O}
P_0
\right).
\]
Identifying \(\operatorname{im}P_0\) with the original witness register \(M\),
the right-hand side is exactly the optimal acceptance probability of the
original \(\QMA^{\mathcal O}\) verifier:
\[
\max_{\ket{\psi}_M}
\bra{\psi,0^a}
(V_x^{\mathcal O})^\dagger
\Pi_{\mathrm{acc}}
V_x^{\mathcal O}
\ket{\psi,0^a}.
\]
The construction therefore preserves both completeness and soundness
exactly.  Hence
\[
\QMA^{\mathcal O}
\subseteq
\QIMA^{\mathcal O,\mathrm{ta}(1)},
\]
which proves the claimed equality.
\end{proof}

\appendix
\section{The Commuting Local-Hamiltonian Problem is $\QIMA$-Complete} \label{app:QIMA-complete}

Bostanci and Hwang prove that the commuting local-projector problem is complete for $\QIMA$~\cite{bostanci2025commutinglocalhamiltonians2d}. This is a specific instance of the commuting local-Hamiltonian problem, where all Hamiltonian terms are exact projectors. The fact that the commuting local-projector problem is QIMA-hard, immediately implies that this is also the case for the commuting local-Hamiltonian problem. It is implicit in the work of Bostanci and Hwang that the commuting local-Hamiltonian problem is in $\QIMA$, using a reduction by Irani and
Jiang~\cite[Section~1.2.2 and Lemma~48]{irani2023commutinglocalhamiltonianproblem}. Below, we directly use this reduction to describe an explicit proof of this fact. 

\begin{theorem}[Commuting local Hamiltonian is complete for $\QIMA$]
Let
\[
H=\sum_{i=1}^{m}H_i
\]
be a commuting $k$-local Hamiltonian on constant-dimensional qudits, where
$m=\operatorname{poly}(n)$,
\[
0\leq H_i\leq I
\qquad\text{and}\qquad
[H_i,H_j]=0
\quad\text{for all }i,j.
\]
Given thresholds $\alpha<\beta$ satisfying
\[
\beta-\alpha\geq\frac{1}{\operatorname{poly}(n)},
\]
it is promised that either
\[
\lambda_{\min}(H)\leq\alpha
\qquad\text{or}\qquad
\lambda_{\min}(H)\geq\beta.
\]
Under the standard gate and representation convention in Remark~\ref{remark: gate-convention}, according to which the spectral projectors of a
constant-dimensional local term can be represented as local gates, this
problem belongs to $\QIMA_{O(k)}$. Consequently, the commuting
local-Hamiltonian problem is complete for $\QIMA$, up to a constant-factor increase in locality.
\end{theorem}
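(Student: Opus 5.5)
The plan is to follow the template of Lemma~\ref{lem:pinned-clh-qima}, minus the trusted qubit. The first step is to convert the energy question into a product of commuting acceptance operators. Set $\gamma:=(\beta-\alpha)/(2m^2)$ and $E_i:=I-\gamma H_i$. Then $E:=\prod_i E_i$ satisfies
\[
I-\gamma H\preceq E\preceq I-\gamma H+\tfrac{\gamma^2m^2}{2}I .
\]
This gives $\lambda_{\max}(E)\ge c:=1-\gamma\alpha$ on YES instances and $\lambda_{\max}(E)\le s:=1-\gamma\beta+\gamma^2m^2/2$ on NO instances, with $c-s\ge 3(\beta-\alpha)^2/(8m^2)$.

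The obstacle is that $\QIMA_k$ has perfect completeness and zero soundness, so a gap-amplifying construction that needs trusted initialized ancillas cannot be used. Indeed, Theorem~\ref{thm:qima-trusted-ancilla} shows that even one trusted qubit already gives $\QMA$. I would therefore avoid reproducing acceptance probabilities and instead use the spectral structure directly, following the Irani--Jiang reduction. Because the $H_i$ commute and each acts on a constant-size subsystem, each term decomposes as a finite sum $H_i=\sum_{\lambda}\lambda\,\Pi_{i,\lambda}$ of local spectral projectors. Moreover, all the $\Pi_{i,\lambda}$ commute across different $i$.

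The key step is to let the witness supply classical ``label'' registers. For each term $i$ the prover adds a constant-dimensional qudit $L_i$ holding a claimed eigenvalue $\lambda$ of $H_i$. Two kinds of constraints are then imposed. The first is a local projector $\sum_{\lambda}\Pi_{i,\lambda}\otimes|\lambda\rangle\langle\lambda|_{L_i}$, which checks consistency between the label and the physical state; these projectors commute, since they are functions of commuting $H_i$ tensored with diagonal operators on distinct labels. The second is a condition that the labels satisfy $\sum_i\lambda_i\le\alpha$.

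That second condition is a global classical constraint and is not local, which is the main difficulty. I would handle it by having the prover also supply a classical running-sum register, i.e.\ partial sums $\sigma_j$ encoded in the computational basis with polynomial precision. Local diagonal projectors would then check $\sigma_j=\sigma_{j-1}+\lambda_j$ and $\sigma_m\le\alpha$. Every such check is diagonal in the label and sum basis, so it commutes with all other diagonal checks and with the consistency projectors, which act as controlled operators on the labels. The subtlety is that the eigenvalues $\lambda$ are exact algebraic numbers, so the arithmetic must be carried out exactly. I would round all $\lambda$ to a precision of $(\beta-\alpha)/(4m)$, which costs $O(\log n)$-bit registers per partial sum, and verify the additions bitwise with carry registers so that each check stays $O(k)$-local. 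Alternatively, I would use the projectors $\Pi_{i,\lambda}$ (a constant number per term) together with unary encodings of the count.

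Each check $P$ is converted to the reflection $2P-I$. All reflections commute and are $O(k)$-local. On a YES instance, a common eigenvector of the $H_i$ with energy $\le\alpha$, tensored with its correct labels and sums, passes every check with probability $1$, using the rounding slack. On a NO instance, any state accepted with certainty must lie in the intersection of the projector images. Such a state is a superposition of label-consistent common eigenvectors whose rounded sum is $\le\alpha+(\beta-\alpha)/4<\beta$, and this is impossible because every such eigenvector has energy $\ge\beta$. Hence the accepting projector vanishes, soundness is $0$, and the upper bound follows. Hardness comes from the commuting local-projector problem being $\QIMA$-hard~\cite{bostanci2025commutinglocalhamiltonians2d}.
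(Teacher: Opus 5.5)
Your proposal is correct and is essentially the paper's proof: the Irani--Jiang label registers with label-controlled local spectral projectors, plus local diagonal constraints on auxiliary classical wire registers that verify a rounded energy threshold with slack. The differences are minor: you build the commuting reflections $2P-I$ directly rather than producing a commuting local-projector instance and citing Bostanci--Hwang's Lemma~A.1, you use an explicit carry-chain adder instead of a generic circuit-to-constraints encoding, and your first paragraph's operator $E$ is never used. One bookkeeping fix is needed: the final comparison should be $\sigma_m\le\alpha+(\beta-\alpha)/4$ rather than $\sigma_m\le\alpha$, and it should be decomposed into local checks like the additions, so that an accepted eigenvector has true energy at most $\alpha+(\beta-\alpha)/2<\beta$.
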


\begin{proof}
Bostanci and Hwang show that the commuting local-projector problem belongs to
$\QIMA$~\cite[Lemma~A.1]{bostanci2025commutinglocalhamiltonians2d}.
It therefore suffices to map the given instance to a commuting
local-projector Hamiltonian that has ground state energy $0$ in the \textsc{YES}
and at least $1$ in the \textsc{NO} case.

\noindent For each $i$, write each Hamiltonian term $H_i$ using its spectral decomposition
\[
H_i=
\sum_{a\in\mathcal A_i}
\lambda_{i,a}\Pi_{i,a},
\]
where $\mathcal A_i$ indexes the orthogonal eigenspaces of $H_i$, and $\Pi_{i,a}$ is a projector on the eigenspace which corresponds to the eigenvalue $\lambda_{i,a}$. Since $H_i$
acts on only $k=O(1)$ constant-dimensional qudits, the set $\mathcal A_i$
has constant size.

Because $H_i$ and $H_j$ are commuting Hermitian operators, they are
simultaneously diagonalizable. It follows that all of their spectral
projectors commute:
\[
[\Pi_{i,a},\Pi_{j,b}]=0
\qquad
\text{for all }i,j,a,b.
\]
The Irani--Jiang reduction asks the prover to choose a label
$a_i\in\mathcal A_i$ for every term and then replaces $H_i$ by
$I-\Pi_{i,a_i}$. For fixed labels, these complementary projectors $\{I-\Pi_{i,a_i}\}_{i=1}^m$ have a
common zero-energy state precisely when the selected eigenspaces have a
nonzero common intersection. We now show how these ideas enable constructing a commuting projector Hamiltonian which meets the conditions of the desired reduction.

Choose rational approximations
$\widetilde\lambda_{i,a}$ satisfying
\[
\left|
\widetilde\lambda_{i,a}-\lambda_{i,a}
\right|
\leq\eta,
\qquad
m\eta<\frac{\beta-\alpha}{8}.
\]
Since the needed precision is inverse polynomial and each $H_i$ has constant
dimension, these approximations require only polynomially many bits and can
be computed in polynomial time.

For each $i$, introduce a constant-size classical label register $L_i$ whose
computational-basis states are indexed by $\mathcal A_i$, and define
\[
K_i
:=
\sum_{a\in\mathcal A_i}
\ket{a}\!\bra{a}_{L_i}
\otimes
\bigl(I-\Pi_{i,a}\bigr).
\]
Each $K_i$ is a projector. Moreover, the projectors
$K_1,\ldots,K_m$ commute because the label registers are distinct and all
the spectral projectors $\Pi_{i,a}$ commute. The kernel of $K_i$ consists
of states for which the quantum register lies in the eigenspace of $H_i$
specified by the value of $L_i$.

The projectors $K_i$ enforce that the labels describe a common eigenstate
of the original terms $H_i$. To complete the reduction, we must also
ensure that the energy of this eigenstate is small. We do this by adding
local classical constraints that check the inequality
\begin{equation}
\label{eq:label_ineq}
    \sum_{i=1}^m \widetilde{\lambda}_{i,a_i}
    \leq \frac{\alpha+\beta}{2}.
\end{equation}
The midpoint threshold leaves room for the total approximation error
$m\eta$: since $m\eta <\frac{\beta-\alpha}{2}$, every common eigenstate of energy at most $\alpha$ satisfies
this inequality, while any common eigenstate satisfying it has energy
strictly below $\beta$.

Let $C$ be a polynomial-size
bounded-fan-in Boolean circuit that takes in as an input a label string $a=(a_1,\ldots,a_m)$ and accepts precisely when the inequality $\eqref{eq:label_ineq}$ holds.
We enforce the inequality by a standard encoding of a
classical circuit by local constraints. 
Introduce additional classical registers holding a proposed value for
every wire of this circuit. For a gate computing $y=g(x_1,\ldots,x_r)$,
add the projector
\[
    D_g :=
    \sum_{\substack{x\in\{0,1\}^r,\ y\in\{0,1\}\\ y\neq g(x)}}
    |x,y\rangle\langle x,y|.
\]
This projector penalizes precisely the assignments that violate the
gate relation. Since the fan-in $r$ is bounded, it acts on only
constantly many bits. Add analogous diagonal projectors requiring
valid label encodings, agreement between the input wires and the label
registers, the prescribed values of any constant wires, and an output
bit equal to $1$. Denote all these classical constraint projectors by
$D_1,\ldots,D_s$. For fixed labels $a=(a_1,\ldots,a_m)$, an assignment
to the wire registers satisfies every $D_j$ if and only if it describes
an accepting computation of $C$.
Define
\[
    K := \sum_{i=1}^m K_i + \sum_{j=1}^s D_j.
\]
The $D_j$ commute with one another because they are diagonal in the
computational basis of the classical registers. They also commute
with every $K_i$: each $D_j$ acts only on the classical registers
and is diagonal in that basis, while each $K_i$ is block diagonal
in the same basis and acts within each block only on the original
quantum register. Together with the pairwise commutation of the
$K_i$, this shows that $K$ is a commuting local-projector Hamiltonian.

The construction has polynomial size. Let $d$ be a fixed
upper bound on the dimension of each original qudit, and set
$\ell:=\lceil\log_2 d\rceil$. Encoding each original qudit
into $\ell$ qubits makes the support of each $H_i$ contain
at most $k\ell$ qubits. Moreover, $|\mathcal{A}_i|\leq d^k$,
so the label register $L_i$ requires at most $k\ell$ qubits.
Consequently, each $K_i$ acts on at most $2k\ell$ qubits.
The classical checks, including validity of label encodings,
can be implemented by bounded-fan-in Boolean circuits and
constant-locality diagonal constraint projectors.
Thus the resulting commuting local-projector Hamiltonian
has locality $O(k)$ on qubits, with the implicit constant
depending only on the fixed local dimension and the
chosen fan-in bound.

Each original qudit $q$ of dimension $d_q$ is represented by the
basis states $|0\rangle,\ldots,|d_q-1\rangle$ of an $\ell$-qubit
register. The remaining $2^\ell-d_q$ basis states do not represent
original qudit states. When expressing $K_i$ in the qubit encoding, we keep its original
matrix entries and add zero rows and columns for every basis state
in which one of the qudit registers $q$ on which it acts has a
value in $\{d_q,\ldots,2^\ell-1\}$.
For each qudit register, we also add the projector onto the
remaining basis states. These projectors commute with the encoded
constraints and ensure that every zero-energy state lies entirely
in the subspace representing the original qudits. The locality
remains $O(k)$.

We now identify exactly when $K$ has zero energy. This happens if and
only if there are valid labels $a_1,\ldots,a_m$ and a nonzero state
$|\psi\rangle$ such that
\[
    \Pi_{i,a_i}|\psi\rangle=|\psi\rangle
    \quad\text{for every }i,
    \qquad
    \sum_{i=1}^m \widetilde{\lambda}_{i,a_i}
    \leq \frac{\alpha+\beta}{2}.
\]
Indeed, given such labels and a state, adjoining the wire values of the
accepting computation gives a state annihilated by every $K_i$ and
$D_j$. Conversely, every term of $K$ is positive semidefinite and
preserves each computational-basis assignment to the classical
registers. A zero-energy state therefore has a nonzero component with
fixed labels and wire values that is itself annihilated by every term.
On this component, the $K_i$ impose the stated eigenspace conditions,
and the $D_j$ impose the energy inequality.

We can now verify the two promise cases. If
$\lambda_{\min}(H)\leq\alpha$, simultaneous diagonalization gives a
common eigenvector of the $H_i$ with labels $a_i$ satisfying
\[
    \sum_{i=1}^m\lambda_{i,a_i}
    =\lambda_{\min}(H)\leq\alpha.
\]
For these labels,
\[
    \sum_{i=1}^m\widetilde{\lambda}_{i,a_i}
    \leq\alpha+m\eta
    <\frac{\alpha+\beta}{2}.
\]
Hence the characterization above gives $\lambda_{\min}(K)=0$.

If instead $\lambda_{\min}(H)\geq\beta$, a zero-energy state of $K$
would, by the same characterization, yield a common eigenstate of the
$H_i$ whose energy satisfies
\[
    \sum_{i=1}^m\lambda_{i,a_i}
    \leq\sum_{i=1}^m\widetilde{\lambda}_{i,a_i}+m\eta
    \leq\frac{\alpha+\beta}{2}+m\eta
    <\beta,
\]
a contradiction. Thus $K$ has no zero-energy state in the NO case.
Since its terms are commuting projectors, every eigenvalue of $K$ is
a nonnegative integer, so $\lambda_{\min}(K)\geq1$.

We have obtained a polynomial-time reduction to a commuting
local-projector Hamiltonian with ground energy $0$ in the YES case and
at least $1$ in the NO case. This implies that the $k$-local commuting local-Hamiltonian problem is in $\QIMA_{O(k)}$, and thus the commuting local-Hamiltonian problem is in $\QIMA$. For hardness, Bostanci and Hwang show that the commuting local-projector
problem is $\QIMA$-hard
\cite[Lemma~A.2]{bostanci2025commutinglocalhamiltonians2d}. Since commuting
local-projector Hamiltonians are a special case of commuting local
Hamiltonians, the latter problem is also $\QIMA$-hard. Combining
hardness with the containment proved above establishes
$\QIMA$-completeness.
\end{proof}

\clearpage 

\section{$\BQP$ Algorithm for Forrelation}\label{sec:forrelation_in_bqp}

\paragraph{Forrelation $\in \BQP^{\mathcal O}$.}
Aaronson and Ambainis presented a $\BQP$ algorithm for Forrelation~\cite{aaronson2014forrelationproblemoptimallyseparates}. Consider the Forrelation instance $(f,g)$, write $N=2^n$, and let
\[
    O_f\ket{x}=f(x)\ket{x},
    \qquad
    O_g\ket{x}=g(x)\ket{x}
\]
be phase queries to the two functions. Starting from $\ket{0^n}$, apply
\[
    H_N,\quad O_g,\quad H_N,\quad O_f,\quad H_N,
\]
and then measure in the computational basis. The amplitude of the outcome $0^n$ is
\[
    \bra{0^n}H_NO_fH_NO_gH_N\ket{0^n}
    =
    \frac{1}{N}f^{\mathsf T}H_Ng
    =
    \Phi(f,g).
\]
Consequently, the probability of observing $0^n$ is $\Phi(f,g)^2$. This probability is at least $\alpha^2$ on yes-instances and at most $\beta^2$ on no-instances, so a constant number of repetitions decides the problem.

\clearpage 

\bibliographystyle{alpha}
\bibliography{references}

\end{document}